\pdfoutput=1
\documentclass[12pt,twoside]{amsart}
\usepackage[margin=1.25in,headheight=14pt,headsep=20pt,footskip=30pt]{geometry} %
\usepackage{setspace}
\usepackage{algorithm}
\usepackage{algpseudocode}
\usepackage{amssymb}
\usepackage{amsmath}
\usepackage[hyphens]{url} 
\usepackage[utf8]{inputenc}
\usepackage{mathtools}
\usepackage{amsthm}
\usepackage{color}
\usepackage{nicefrac}
\usepackage{tcolorbox}
\usepackage{natbib}
\usepackage{subcaption}
\usepackage{graphicx}
\usepackage{float}
\usepackage{dsfont}
\usepackage{booktabs}
\usepackage{bm}

\newtheorem{thm}{Theorem}[section]

\newtheorem{prop}{Proposition}[section]
\newtheorem{lem}{Lemma}[section]
\newtheorem{cor}{Corollary}[section]
\newtheorem{ass}{Assumption}
\newtheorem{exm}{Example}[section]
\newtheorem{dfn}{Definition}
\newtheorem{rem}{Remark}

\def\proof{\noindent{\em Proof.}~}
\def\eproof{\mbox{\ }\hfill$\square$}

\def\R{\mathbb{R}}
\def\EE{\mathbb{E}}

\title[Risk Averse Welfare Maximization]{Risk-Averse Welfare Maximization via Marginal Treatment Effects}
\author{Jarrod Burgh\quad\quad Emerson Melo }
\newcommand{\paperkeywords}[1]{\textbf{\textit{Keywords---}} #1}
\thanks{Department of Economics,
	Grinnell College; email: burghjarro@grinnell.edu; Department of Economics, Indiana University Bloomington; e-mail: emelo@iu.edu; }

\begin{document}
	
	\maketitle
	\vspace{18mm}\setcounter{page}{1}
	
	\begin{abstract}
		This paper studies risk-averse treatment allocation when individuals self-select into treatment based on unobserved characteristics. We develop a framework that combines the marginal treatment effect approach to endogenous selection with a general class of coherent risk measures that capture distributional preferences over welfare outcomes. We show that the planner’s problem admits equivalent interpretations in terms of uncertainty aversion, distributional robustness, and worst-case welfare. For law-invariant coherent risk measures, we derive a Kusuoka representation that expresses the planner’s objective as a weighted evaluation of different regions of the welfare distribution and characterize the resulting optimal allocation rule. We further establish finite-sample regret guarantees for empirical risk-averse policy learning, showing how the statistical difficulty of learning a policy depends on the planner’s sensitivity to adverse welfare outcomes. The framework nests the risk-neutral policy learning model of \cite{Kiatagawa_Tetenov_2018} as a special case. An application to the \cite{Card1995} college proximity data demonstrates that incorporating risk aversion can lead to economically meaningful changes in optimal college admission policies.
	\end{abstract}
	\paperkeywords{Social welfare functions, treatment allocation,
		marginal treatment effect, Roy model, coherent risk measures,
		regret, Rademacher complexity.}
	\par\smallskip\noindent\textbf{\textit{JEL Classification---}} C14, C21, C26, C44, D63, D81, I38.

	\section{Introduction}\label{s1}
	
	How should a social planner allocate a binary treatment---such as college admission, job training, or a medical intervention---across a heterogeneous population? A large literature beginning with \cite{Manski2004} and \cite{Kiatagawa_Tetenov_2018} studies this problem by selecting a treatment rule that maximizes expected social welfare over a class of feasible policies. This framework has become the foundation of modern policy learning, providing both decision-theoretic foundations for treatment allocation and finite-sample guarantees for empirical policy choice.
	
	Two features of many policy environments remain largely absent from this framework. First, treatment participation is often endogenous: individuals select into treatment based on unobserved characteristics that also affect potential outcomes, so welfare comparisons based on observed outcomes confound treatment effects with selection. Recent work by \citet{Sasaki_Ura_2024} addresses this by combining policy learning with the marginal treatment effect (MTE) framework of \citet{Bjorklund_Moffitt1987} and \citet{Heckman_Vytlacil_2005}. Second, policy-learning methods almost invariably assume the planner is risk neutral, evaluating policies solely through expected welfare---a criterion indifferent to how welfare is distributed and unable to distinguish policies with the same average outcome but different lower-tail welfare.
	
	This paper develops a theory of risk-averse treatment allocation under endogenous selection. We model treatment choice using the MTE framework and represent the planner's preferences by coherent risk measures \citep{Artzner_et_al_1999}. This class includes expected welfare, average value-at-risk (AV@R), entropic risk, mean--semideviation, and many other distribution-sensitive welfare criteria within a common axiomatic framework.
	
	Combining endogenous selection with nonlinear welfare criteria creates identification, optimization, and statistical learning problems absent under either assumption alone: since the welfare distribution generated by a treatment rule depends on the selection mechanism itself, a planner concerned with downside risk must jointly account for treatment effects and the distribution of welfare across latent subpopulations, and since coherent risk measures are nonlinear functionals of welfare distributions rather than pointwise transformations of outcomes, the empirical-process arguments underlying existing policy-learning theory cannot be applied directly, requiring new tools for identification, optimization, and finite-sample analysis.
	
	As in \cite{Manski2004} and \cite{Kiatagawa_Tetenov_2018}, we evaluate policies through regret and study the problem of learning an optimal treatment rule from finite samples. Unlike the existing literature, however, the planner's objective is a general coherent risk measure rather than expected welfare. This seemingly modest change substantially alters both the welfare criterion and the statistical analysis.
	
	\medskip
	\noindent\textbf{Contributions.}
	Our contributions are threefold. First, we develop a general framework for risk-averse treatment allocation under endogenous selection. The planner chooses a treatment rule to maximize a coherent risk measure of the induced welfare distribution, while treatment effects are identified through the MTE framework. The resulting model nests the classical expected-welfare formulation as a special case and accommodates a broad class of distribution-sensitive social welfare objectives within a unified framework.
	
	Second, we characterize the planner's optimization problem. We derive sharp identification bounds for risk-averse welfare, establish robust dual representations, and obtain a Kusuoka representation for law-invariant coherent risk measures. These results provide complementary interpretations of risk-averse treatment allocation. The dual representation shows that coherent risk measures evaluate policies under adverse probability distributions close to a benchmark, while the Kusuoka representation shows that general law-invariant coherent preferences can be expressed as weighted combinations of lower-tail welfare criteria. Together, these results unify robust optimization and distribution-sensitive welfare evaluation within a common treatment-allocation framework.
	
	Third, we establish finite-sample guarantees for empirical risk-averse policy learning. A central technical challenge is that nonlinear risk measures cannot be analyzed using the pointwise contraction arguments employed in existing policy-learning theory. We instead exploit variational representations of coherent risk measures to derive contraction inequalities, symmetrization results, uniform convergence, and regret bounds for a broad class of risk-averse welfare criteria. The resulting regret bounds recover the classical risk-neutral guarantees up to explicit risk-measure-specific complexity constants, thereby identifying a trade-off between protection against adverse welfare outcomes and the statistical complexity of learning an optimal treatment rule.
	
	We illustrate the framework using the college-proximity data of \citet{Card1995}. Holding the identification strategy and policy class fixed, risk-averse objectives produce substantially different treatment allocations from the risk-neutral benchmark, reallocating treatment toward lower-welfare individuals without imposing explicit equity weights or group-specific objectives. In the online Appendix B   we repeat
	the empirical exercise of Section \ref{sec:empirical} using the Job Training Partnership Act (JTPA) experimental data analyzed by \cite{Sasaki_Ura_2024}.  Together the two applications bracket the empirical content
	of the theory, showing not only when a coherent risk measure reorders
	the planner's assignment but also when it provably does not, like in the case of the JTPA data. Finally, the online Appendix D discusses a calibrated simulation  study complementing the empirical analysis by confirming the finite-sample predictions of the theory and illustrating the consequences of ignoring endogenous treatment selection. 
	\subsection{Related Literature}
	
	The literature on policy learning can be organized along two dimensions:
	whether treatment selection is exogenous or endogenous, and whether the
	planner evaluates policies using risk-neutral or risk-averse welfare
	criteria. Table~\ref{tab:literature} summarizes the relationship between
	our paper and the closest existing contributions.
	
	\begin{table}[h]
		\centering
		\small
		\setlength{\tabcolsep}{8pt}
		\begin{tabular}{lp{5.2cm}p{3.2cm}}
			\toprule
			& \textbf{Exogenous selection}
			& \textbf{Endogenous selection}\\
			\midrule
			\textbf{Risk-neutral}
			&
			\cite{Manski2004},
			\cite{Kiatagawa_Tetenov_2018},
			\cite{Mbakop_Tabord-Meehan_2021}
			&
			\cite{Sasaki_Ura_2024},
			\textcolor{black}{\cite{AtheyWager2021}}
			
			\\[6pt]
			
			\textbf{Risk-averse}
			&
			\cite{Qi_et_al_2023},
			\cite{fan2025policylearningalphaexpectedwelfare},
			\cite{CuiHan2025}
			&
			\textbf{This paper}
			\\
			\bottomrule
		\end{tabular}
		
		\caption{\small Position of this paper within the policy-learning literature.}
		
		\label{tab:literature}
		
	\end{table}
	
	Existing work fills the other three cells of Table~\ref{tab:literature}: \citet{Manski2004} and \citet{Kiatagawa_Tetenov_2018} study risk-neutral policy learning under exogenous assignment, extended to endogenous selection via the MTE representation by \citet{Sasaki_Ura_2024} and using IV in \cite{Athey_Wagner_2021}; and \citet{Qi_et_al_2023}, \citet{fan2025policylearningalphaexpectedwelfare}, and \citet{CuiHan2025} study distribution-sensitive (AV@R or quantile-based) objectives under exogenous assignment. Our paper combines these two dimensions.
	
	The paper also draws on coherent risk measures and distributionally robust optimization \citep{Artzner_et_al_1999,Shapiro_Kusuoka2013,FollmerSchied_2025,Shapiro2021be.ch6}, adapting these tools to treatment allocation under endogenous selection, and relates to the minimax treatment-assignment literature \citep{STOYE200970,Stoye2012,Hirano_porter2009,TETENOV2012157}, which addresses robustness under exogenous assignment rather than distribution-sensitivity under endogenous selection.
	
	\subsection{Organization}
	Section~\ref{s2} introduces the causal framework, the MTE-based welfare
	representation, and the risk-neutral benchmark. Section~\ref{s3} develops the
	risk-averse welfare framework: identification bounds, robust duality, and the
	Kusuoka representation. Section~\ref{s4} studies empirical policy learning and
	establishes finite-sample regret guarantees. Section~\ref{sec:empirical}
	presents the empirical application and simulation evidence, and
	Section~\ref{s7} concludes. Proofs are in Appendix~A. The online supplement
	contains the JTPA application (Appendix~B), the full development of the
	mean--semideviation risk measure (Appendix~C), and the full simulation study
	(Appendix~D).

	\section{Treatment Allocation under Risk Aversion}\label{s2}
	
	This section has two objectives. First, we present the basic causal framework and review the standard risk-neutral welfare maximization problem (\cite{Manski2004}, \cite{Kiatagawa_Tetenov_2018}, \cite{AtheyWager2021}, \cite{Mbakop_Tabord-Meehan_2021}, \cite{fan2025policylearningalphaexpectedwelfare}, and \cite{Sasaki_Ura_2024}). Second, we introduce our framework for risk-averse welfare maximization.
	\subsection{Causal Model}\label{s2_causal}
	We consider the following causal Roy model:
	\begin{eqnarray}
		Y &=& DY_1+(1-D)Y_0,\label{Outcome}\\
		D &=& 1\{\tilde{v}(Z)-\tilde{\varepsilon}\geq 0\},
		\label{Selection}
	\end{eqnarray}
	where $Y$ denotes an observed outcome variable, $D$ denotes
	an observed binary treatment variable, $Z$ denotes a vector
	of observed exogenous variables, $Y_0$ and $Y_1$ denote
	unobserved potential outcomes under no treatment and under
	treatment respectively, and $\tilde{\varepsilon}$ denotes an
	unobserved heterogeneity factor of treatment selection. Let
	$\mathcal{Z}$ denote the set of all observable covariates $Z$.
	
	\smallskip
	
	Equation~(\ref{Outcome}) models the outcome production using
	the potential-outcome framework, while
	equation~(\ref{Selection}) models treatment selection via a
	binary threshold-crossing model. The function $\tilde{v}$
	is nonparametric and unknown to the econometrician. The model
	allows for endogeneity in the sense that $(Y_0,Y_1)$ and
	$\tilde{\varepsilon}$ may be statistically dependent even
	conditional on $Z$. To achieve identification, we assume $Z$
	contains excluded exogenous variables (instruments) $Z_0$ as well
	as included exogenous variables $X$. Following
	\cite{Sasaki_Ura_2024}, we adopt the following assumptions.
	
	\begin{ass}\label{Assump_MTE}
		Equations~(\ref{Outcome}) and~(\ref{Selection}) hold, and the
		random vector $Z$ can be written as $(Z_0',X')'$, where:
		\begin{itemize}
			\item[(i)] $\tilde{\varepsilon}$ and $Z_0$ are independent
			given $X$;
			\item[(ii)] $\mathbb{E}[Y_d\mid Z,\tilde{\varepsilon}]=
			\mathbb{E}[Y_d\mid X,\tilde{\varepsilon}]$ and
			$\mathbb{E}[Y_d^2]<\infty$;
			\item[(iii)] $\tilde{\varepsilon}$ is continuously
			distributed with a convex support conditional on $X$.
		\end{itemize}
	\end{ass}
	
	Part~(i) is the only independence assumption imposed on the
	model, allowing arbitrary statistical dependence between
	$(Y_0,Y_1)$ and $\tilde{\varepsilon}$ conditional on $Z$.
	Part~(ii) states the exclusion restriction and bounded second
	moments. Part~(iii) rules out point masses and holes. For
	ease of exposition we define $\mathcal{X}$ as the set of all
	observable covariates $X$.
	
	\smallskip
	
	Following the marginal treatment effect (MTE) literature \citep{Heckman_Vytlacil_2005}, we apply normalizing
	transformations $\varepsilon \triangleq F_{\tilde{\varepsilon}|X}
	(\tilde{\varepsilon})$ and $v(Z)\triangleq F_{\tilde{\varepsilon}|X}
	(\tilde{v}(Z))$ in model~(\ref{Selection}). Under
	Assumption~\ref{Assump_MTE}, $D=1\{v(Z)-\varepsilon\geq 0\}$
	with $\varepsilon|Z\sim\operatorname{Uniform}(0,1)$. We
	therefore work with the normalized model:
	\begin{equation}\label{Selection_UNI}
		D=1\{v(Z)-\varepsilon\geq 0\}\quad\text{with}\quad
		\varepsilon|Z\sim\operatorname{Uniform}(0,1), 
	\end{equation}
	
	where the MTE is defined as:
	$$
	\operatorname{MTE}(x,\bar{\varepsilon})=
	\mathbb{E}[Y_1-Y_0\mid X=x,\varepsilon=\bar{\varepsilon}].
	$$
	
	\subsubsection{Treatment Allocation and Welfare}
	
	The planner chooses a non-randomized policy $\pi : \mathcal{Z} \to \{0,1\}$ from a policy class $\Pi$, a collection of Borel-measurable functions fixed by the planner.\footnote{Leading examples are threshold rules, linear index rules, and decision trees.} For $\pi\in\Pi$, the
	individual outcome is:
	\begin{equation}\label{Outcome_rule}
		Y(\pi(Z))\triangleq\pi(Z)Y_1+(1-\pi(Z))Y_0
		=Y_0+\pi(Z)(Y_1-Y_0).
	\end{equation}
	
	The distribution of $Y(\pi)$ is unidentified due to the
	missing-outcome problem \citep{HECKMAN2007_Handbook_partI}. The policy learning literature has
	therefore focused on \emph{average welfare}:
	$$
	\mathcal{W}(\pi)\triangleq\mathbb{E}[Y(\pi(Z))]\quad\text{for }
	\pi\in\Pi,
	$$
	where the corresponding \emph{conditional average outcome} (or welfare) is denoted as:
	$$
	\mathcal{W}(\pi,z)\triangleq\mathbb{E}[Y(\pi(z))\mid Z=z]
	=\mathbb{E}[Y_0\mid Z=z]+\pi(z)\mathbb{E}[Y_1-Y_0\mid Z=z].
	$$
	
	From previous expressions, it is straightforward to see that the risk neutral welfare $\mathcal{W}(\pi)$ is just the average  value of $\mathcal{W}(\pi, Z)$.

	\cite{Sasaki_Ura_2024}
	show  that under Assumption~\ref{Assump_MTE}, the  \emph{risk neutral} welfare can be represented as:
	\begin{equation}\label{W_representation}
		\mathcal{W}(\pi)=\mathbb{E}[Y_0]+\mathbb{E}\!\left[
		\pi(Z)\int_0^1\operatorname{MTE}(X,\nu)\,d\nu
		\right]\quad\forall\,\pi\in\Pi.
	\end{equation}
	Accordingly, it follows that
	\begin{equation}\label{W_conditional_representation_}
		\mathcal{W}(\pi,Z)=\mathbb{E}[Y_0\mid Z]+\pi(Z)
		\int_0^1\operatorname{MTE}(X,\nu)\,d\nu.
	\end{equation}
	Following \cite{Manski2004} and \cite{Kiatagawa_Tetenov_2018}, the risk-neutral planner's decision rule is evaluated by its \emph{maximum regret}:
	\begin{equation}\label{Regret_Def}
		\mathcal{R}(\pi)\triangleq\sup_{\pi'\in\Pi}\mathcal{W}(\pi')-\mathcal{W}(\pi),
	\end{equation}
	where expression (\ref{Regret_Def}) quantifies the welfare loss, relative to the best feasible rule, from implementing $\pi$ instead of the optimal one. Section~\ref{s4} generalizes~(\ref{Regret_Def}) to the risk-averse criterion~(\ref{Regret_Rho}) below and establishes finite-sample bounds on it.

	Throughout the paper we use the following adaptation of Assumption 2(i) in \cite{Sasaki_Ura_2024}. 
	\begin{ass}\label{Assumption2_OCE_Bounded}
		The distribution of
		$(Y_0,Y_1,D,Z)$ satisfies
		$$\left|\int_0^1\operatorname{MTE}(X,\nu)\,d\nu\right|\leq
		M<\infty \quad \text{a.s.}$$
		\noindent for a known constant $M$
	\end{ass}

	\medskip

	Let
	\begin{equation}\label{f_definition}
		f(Z;\pi)\triangleq \pi(Z)\int_0^1\operatorname{MTE}(X,\nu)\,d\nu,
	\end{equation}
	and  define $\mathcal{F}\triangleq \{f(\cdot;\pi):\textcolor{black}{\pi\in\Pi}\}$.
	Then, by Assumptions \ref{Assump_MTE} and \ref{Assumption2_OCE_Bounded},  $\mathcal{F}$ is a class of uniformly bounded functions
	with $\|f\|_\infty\leq M$ for all $f\in\mathcal{F}$. Furthermore,  by
	construction, $\mathcal{W}(\pi,Z)=\mathcal{W}(f,Z)$.
	Throughout, we use $\mathcal{W}(f)=\mathbb{E}[Y(\pi)]$ and
	$\mathcal{W}(f,Z)=\mathbb{E}[Y(\pi)\mid Z]$ for $f = f(\cdot,\pi)$ interchangeably. We impose one further assumption for tractability.
	
	\begin{ass}\label{Ass_Y0_Bounded}
		$|Y_0| \le B < \infty$ almost surely.
	\end{ass}
	
	Assumptions~\ref{Assump_MTE}, \ref{Assumption2_OCE_Bounded},
	and~\ref{Ass_Y0_Bounded} imply that $\mathcal{W}(f,Z)$ is bounded almost
	surely, uniformly over $f\in\mathcal{F}$, so $\mathcal{W}(f,Z)\in L^\infty$ for
	every $f\in\mathcal{F}$.\footnote{Explicit bounds are given in
		Section~\ref{sec:contraction_symmetrization}.} No bound on $Y_1$ is needed:
	welfare depends on the treated outcome only through the conditional means in
	the MTE, which are restricted by Assumption~\ref{Assump_MTE}(ii) and the
	integrated MTE bound in Assumption~\ref{Assumption2_OCE_Bounded}.
	
	\medskip
	
	The welfare representation~\eqref{W_representation} is silent about the
	distributional consequences of a policy. Yet interventions may harm vulnerable
	subpopulations, which makes risk aversion a natural concern for the planner.
	The next section develops a risk-averse welfare framework that addresses this
	while accounting for participants' self-selection.

	\subsection{Risk-Averse Welfare Maximization}
	
	We can now turn our attention to integrating risk aversion in the treatment allocation problem. To do so, we deploy the notion of coherent
	risk measures \citep{Shapiro2021be.ch6} and its relaxation concave risk measures. \textcolor{black}{Throughout the paper we work with the space
		$\mathcal{Z}\triangleq L_p(\Omega,\mathcal{B},\mathbb{P})$,
		$p\in[1,\infty)$}. Equipped with the usual norm, $\mathcal{Z}$
	is a Banach space with dual
	$\mathcal{Z}^*=L_q(\Omega,\mathcal{B},\mathbb{P})$,
	$1/p+1/q=1$.
	
	\begin{dfn}\label{Coherent_Def}
		A map $\rho:\mathcal{Z}\to\bar{\mathbb{R}}$ is a
		\textbf{coherent risk measure} if it satisfies:
		\begin{itemize}
			\item[(A1)] \textit{Monotonicity:} $Z_1\preceq Z_2$
			implies $\rho(Z_1)\leq\rho(Z_2)$.
			\item[(A2)] \textit{Translation invariance:}
			$\rho(Z+a)=\rho(Z)+a$ for all $a\in\mathbb{R}$.
			\item[(A3)] \textit{Concavity:}
			$\rho(tZ_1+(1-t)Z_2)\geq t\rho(Z_1)+(1-t)\rho(Z_2)$
			for all $Z_1,Z_2\in\mathcal{Z}$, $t\in[0,1]$.
			\item[(A4)] \textit{Positive homogeneity:}
			$\rho(\lambda Z)=\lambda\rho(Z)$ for all $\lambda\geq 0$.
		\end{itemize}
		Likewise, a map $\rho$ is a
		\textbf{concave risk measure} if it satisfies (A1)-(A3).
	\end{dfn}
	
	Coherent and concave risk measures are widely used to quantify uncertainty in portfolio choice, certainty-equivalent representations, and scoring rules (\cite{FollmerSchied_2025}). We extend this framework to treatment allocation and welfare maximization, noting the importance of each property. Monotonicity implies that a policy yielding weakly higher welfare is weakly preferred. Translation invariance ensures that welfare comparisons are independent of the baseline level. Concavity captures a preference for diversification across policies, whereas positive homogeneity requires the welfare criterion to scale proportionally with outcomes. Although the scale invariance implied by (A4) may be restrictive in some allocation settings, our main results hold for the broader class of concave risk measures. We also introduce the additional property of law-invariance \citep{Shapiro_et_al_2013} which implies a risk measure is purely driven by the distribution of welfare, not particular subgroups' welfare.\footnote{In the welfare context, law invariance is a form of anonymity: the planner ranks welfare distributions without regard to the identities of those who experience each level. This is the standard impartiality requirement of social welfare analysis; absent it, $\rho$ could encode preferences over \emph{who} receives welfare rather than over its distribution.}
	
	\begin{dfn}
		The risk measure $\rho:\mathcal{Z}\to\mathbb{R}$ is \emph{law invariant} if $Z_1$ and $Z_2$ having the same distribution under $\mathbb{P}$ implies $\rho(Z_1)=\rho(Z_2).$
	\end{dfn}
	\smallskip
	
	Endowed with our general notion of risk measure $\rho$, and recalling  $\mathcal{W}(f,Z)\triangleq\mathbb{E}[Y(\pi)\mid Z]$, a risk-averse social planner solves the following optimization problem:
	\begin{equation}\label{Risk_Averse_program}
		\sup_{f\in\mathcal{F}}\rho(\mathcal{W}(f,Z)).
	\end{equation}

	Accordingly, we define the regret associated with a policy $f\in \mathcal{F}$ as 
	\begin{equation}\label{Regret_Rho}
		\mathcal{R}_\rho(f)=\sup_{f^\prime\in \mathcal{F}}\rho(\mathcal{W}(f^\prime,Z))-\rho(\mathcal{W}(f,Z)).
	\end{equation}
	
	We conclude by highlighting two features of \eqref{Risk_Averse_program} and \eqref{Regret_Rho}. First, the formulation in \eqref{Risk_Averse_program} accommodates a broad class of social-planner risk preferences through the choice of the risk functional $\rho$. In particular, when $\rho=\EE$, it reduces to the standard risk-neutral welfare maximization problem. Second, \eqref{Regret_Rho} generalizes the risk-neutral regret criterion in \eqref{Regret_Def}. In both formulations, the choice of $\rho$ governs the properties of the resulting welfare and regret criteria.
	\smallskip
	
	The following examples illustrate three widely used risk measures. 
	\begin{exm}\label{Example1}
		The (lower) Average Value-at-Risk: For a given $f\in\mathcal{F}$, the Average-value-at-risk at the level $\beta$  associated with   $\mathcal{W}(f,Z)$ is given by the \cite{Rockafellar2000OptimizationOC}'s variational representation
		$$
		\operatorname{AV@R}_\beta(\mathcal{W}(f,Z))\triangleq
		\sup_{\lambda\in\mathbb{R}}\!\left\{
		\lambda+\tfrac{1}{\beta}\mathbb{E}[\mathcal{W}(f,Z)-\lambda]_-
		\right\},\quad
		\beta\in(0,1],
		$$
		where $[t]_-\triangleq\min\{t,0\}$.  When  $\mathcal{W}(f,Z)$ is  a continuous random variable  $\operatorname{AV@R}_\beta(\mathcal{W}(f,Z))$ reduces to:
		$$
		\operatorname{AV@R}_\beta(\mathcal{W}(f,Z))
		=\frac{1}{\beta}\int_0^\beta F_{\mathcal{W}(f,Z)}^{-1}(\tau)\,d\tau,\quad
		\beta\in(0,1],
		$$
		
		In the case where $\beta=1$, we get $\mathrm{AV@R}_1(\mathcal{W}(f,Z))=\mathbb{E}[\mathcal{W}(f,Z)]$, recovering
		the risk-neutral case. Notably, $\mathrm{AV@R}_\beta$ satisfies conditions (A1)–(A4) in Definition \ref{Coherent_Def}, and is therefore a coherent risk measure for all $\beta \in(0,1]$.
	\end{exm}
	
	\begin{exm}\label{Example2} The mean--semideviation risk: For $f\in \mathcal{F}$ the mean--semideviation of order $p\geq 1$ is defined as:
		$$\rho_{p}(\mathcal{W}(f,Z))\triangleq\EE[\mathcal{W}(f,Z)]-c\mathbb{D}_p^-
		[\mathcal{W}(f,Z)],$$ where $\mathbb{D}_p^-[\mathcal{W}(f,Z)]\triangleq(\mathbb{E}[(\mathbb{E}[\mathcal{W}(f,Z)]-\mathcal{W}(f,Z))_+^p])^{1/p}$, $(t)_+\triangleq\max\{t,0\}$, and $c\in[0,1]$. When $c=0$ this reduces to the risk-neutral case for any $p\geq 1$, and the online Appendix C discusses in detail the properties of this risk measure. 
	\end{exm}
	
	\begin{exm}\label{Example3}
		The entropic risk: For $f\in \mathcal{F}$ the entropic risk measure is defined as:
		$$\rho_{ent}(\mathcal{W}(f,Z))=-\frac{1}{\gamma}\ln\mathbb{E}[e^{-\mathcal{W}(f,Z)\gamma}],$$
		$\gamma>0$. This satisfies (A1), (A2), and (A3) but notably not (A4) and is therefore concave. It reflects
		constant absolute risk aversion (CARA) preferences for the planner.
	\end{exm}
	
	Our results apply to general coherent and concave risk measures and therefore extend beyond the preceding examples. Some widely used risk criteria, however, do not satisfy the defining properties of these classes. Although $\operatorname{AV@R}$ is coherent, value at risk ($\operatorname{V@R}$) generally fails to satisfy concavity (A3) and is therefore neither concave nor coherent.\footnote{ This is the technical reason to avoid quantiles as risk measures} Likewise, the certainty-equivalent induced by constant-relative-risk-aversion (CRRA) preferences fails translation invariance (A2) and hence cannot be represented by either a coherent or a concave risk measure.
	
	\subsection{Risk-Averse Welfare Bounds}\label{sec:risk-bounds}
	
	For each $f\in\mathcal{F}$, the distribution of the potential outcome $Y(\pi)$ is generally not identified because of the missing-outcome problem \citep{HECKMAN2007_Handbook_partI}. The following result establishes that the identified conditional average outcome, $\mathcal{W}(f,Z)$, delivers sharp bounds on any concave risk-averse welfare functional of $Y(\pi)$.
	\begin{thm}\label{Risk_Bounds}
		Let Assumptions~\ref{Assump_MTE}, \ref{Assumption2_OCE_Bounded}, and~\ref{Ass_Y0_Bounded} hold. In addition, assume
		that $\mathcal{W}(f,Z)-Y(\pi)\leq K_0$ for some $K_0\in\mathbb{R}$.
		Then for any concave risk functional $\rho$ satisfying (A1)-(A3):
		\begin{equation}\label{Bounds_Welfare}
			\sup_{f\in\mathcal{F}}\rho(\mathcal{W}(f,Z))-K_0
			\;\leq\;
			\sup_{f\in\mathcal{F}}\rho(Y(\pi))
			\;\leq\;
			\sup_{f\in\mathcal{F}}\rho(\mathcal{W}(f,Z)).
		\end{equation}
	\end{thm}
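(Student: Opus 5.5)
We prove the two inequalities in \eqref{Bounds_Welfare} pointwise in $f\in\mathcal{F}$ and then take suprema over $\mathcal{F}$. Since suprema preserve pointwise inequalities, it suffices to show, for every $\pi\in\Pi$ with $f=f(\cdot;\pi)$,
\[
\rho(\mathcal{W}(f,Z))-K_0\;\le\;\rho(Y(\pi))\;\le\;\rho(\mathcal{W}(f,Z)).
\]
Throughout, $\mathcal{W}(f,Z)=\mathbb{E}[Y(\pi)\mid Z]$ is bounded under Assumptions~\ref{Assump_MTE}, \ref{Assumption2_OCE_Bounded} and~\ref{Ass_Y0_Bounded}, so it lies in $\mathcal{Z}$ and $\rho$ is well defined on it. We also need $Y(\pi)\in\mathcal{Z}$. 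This follows from the second-moment condition in Assumption~\ref{Assump_MTE}(ii), at least for $p\le 2$; otherwise we take it as implicit in writing $\rho(Y(\pi))$.

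The \emph{lower bound} is the easy step and uses only (A1) and (A2). The hypothesis $\mathcal{W}(f,Z)-Y(\pi)\le K_0$ gives $Y(\pi)\succeq \mathcal{W}(f,Z)-K_0$ almost surely. Monotonicity (A1) then yields $\rho(Y(\pi))\ge\rho(\mathcal{W}(f,Z)-K_0)$, and translation invariance (A2) turns the right-hand side into $\rho(\mathcal{W}(f,Z))-K_0$.

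The \emph{upper bound} is a conditional Jensen inequality for concave risk measures: $\rho(Y)\le\rho(\mathbb{E}[Y\mid Z])$ with $Y=Y(\pi)$. My first route is the dual representation. A proper, upper semicontinuous concave $\rho$ on $L_p$ satisfying (A1)--(A2) admits
\[
\rho(Y)=\inf_{\zeta\in\mathfrak{A}}\bigl\{\mathbb{E}[\zeta Y]+\alpha(\zeta)\bigr\},
\]
with $\zeta\ge0$ and $\mathbb{E}\zeta=1$. Restricting the infimum to $\sigma(Z)$-measurable densities would only help if $\rho$ were law invariant, so this route needs more structure. The cleaner alternative, valid when $\rho$ is law invariant (the case the paper emphasizes, in line with its anonymity footnote), is to use the known fact that law-invariant concave risk measures on an atomless space are monotone with respect to concave order and satisfy $\rho(\mathbb{E}[Y\mid\mathcal{G}])\ge\rho(Y)$; see \citet{FollmerSchied_2025} and \citet{Shapiro_et_al_2013}. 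I would prove this by approximating $\mathbb{E}[Y\mid Z]$ through averages of measure-preserving rearrangements of $Y$ that leave $Z$ fixed. Concavity (A3) together with law invariance gives $\rho$ of each such average $\ge\rho(Y)$, and upper semicontinuity in $L_p$ passes this to the limit.

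The main obstacle is therefore the upper bound: it genuinely requires law invariance (or dilatation monotonicity) plus semicontinuity, beyond (A1)--(A3) as literally stated. I would state these as standing regularity conditions, note that they hold for all three examples (AV@R, mean--semideviation, entropic), and then conclude by taking $\sup_{f\in\mathcal{F}}$ of both pointwise inequalities.
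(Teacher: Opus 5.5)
Your proof follows the paper's. The lower bound via (A1) and (A2) is identical. For the upper bound, the paper also uses the conditional Jensen inequality $\rho(\mathbb{E}[Y(\pi)\mid Z])\ge\rho(Y(\pi))$, obtained by citing Corollary~6.52 of Shapiro et al., and then takes suprema over $\mathcal{F}$. The one difference is that you make explicit what that inequality needs. Your caveat is correct, and it exposes a defect in the statement rather than a gap in your argument: for general functionals satisfying (A1)--(A3), without law invariance, the upper bound is false.

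For instance, take $\Pi=\{\pi\equiv 0\}$, so that $Y(\pi)=Y_0$ and $\mathcal{W}(f,Z)=\mathbb{E}[Y_0\mid Z]$. Let $\rho(Y)=\mathbb{E}[\xi Y]$ with $\xi=1+c\,(Y_0-\mathbb{E}[Y_0\mid Z])$ and $0<c<1/(2B)$. Then $\xi\ge 0$ and $\mathbb{E}[\xi]=1$, so $\rho$ is a continuous coherent (but not law-invariant) risk measure, and the hypothesis holds with $K_0=2B$. Yet
\[
\rho(\mathbb{E}[Y_0\mid Z])=\mathbb{E}[Y_0]<\mathbb{E}[Y_0]+c\,\mathbb{E}[\operatorname{Var}(Y_0\mid Z)]=\rho(Y_0)
\]
whenever $Y_0$ is not $\sigma(Z)$-measurable, as is typical when $Y_0$ depends on $\varepsilon$.

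So the theorem should be stated for law-invariant, upper semicontinuous $\rho$, which is exactly your version. Atomlessness is already argued in the paper before Theorem~\ref{Kusuoka_Welfare}. With that repair, your proof and the paper's coincide. Your side remark is also a point the paper leaves implicit: writing $\rho(Y(\pi))$ requires $Y(\pi)\in\mathcal{Z}$, and Assumption~\ref{Assump_MTE}(ii) guarantees this only for $p\le 2$.
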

	
	\proof All proofs are gathered in Appendix~A. \eproof
	
	\smallskip
	
	Theorem~\ref{Risk_Bounds} establishes that the identified optimization problem
	$$
	\sup_{f\in\mathcal{F}} \rho\bigl(\mathcal{W}(f,Z)\bigr)$$
	is informative about the unidentified individual-level welfare criterion $\rho(Y(\pi))$. For any risk functional $\rho$ satisfying (A1), (A2), and (A3), the theorem provides computable sharp bounds based solely on the identified object $\mathcal{W}(f,Z)$, providing a formal justification for studying \eqref{Risk_Averse_program}.
	
	To illustrate the implications of Theorem~\ref{Risk_Bounds}, we now specialize to the case $\rho=\operatorname{AV@R}_\beta$.
	\begin{prop}\label{CVAR_cor}
		Let Assumptions~\ref{Assump_MTE}, \ref{Assumption2_OCE_Bounded}, and~\ref{Ass_Y0_Bounded} hold, and suppose that
		$$
		\mathcal{W}(f,Z)-Y(\pi)\leq K_0
		$$
		for some $K_0\in\mathbb{R}$. Then, for any $\beta\in(0,1]$,
		\begin{equation}\label{AVaR_Bounds}
			\sup_{f\in\mathcal{F}}
			\mathrm{AV@R}\beta(\mathcal{W}(f,Z))-K_0
			\leq
			\sup_{f\in\mathcal{F}}\mathrm{AV@R}_\beta(Y(\pi))
			\leq
			\sup_{f\in\mathcal{F}}
			\mathrm{AV@R}_\beta(\mathcal{W}(f,Z)).
		\end{equation}
		Moreover, as $\beta\to1$, the bounds in \eqref{AVaR_Bounds} converge to the risk-neutral bounds implied by \cite{Sasaki_Ura_2024}. As $\beta\to0$, they converge to bounds based on the worst-case welfare criterion.
	\end{prop}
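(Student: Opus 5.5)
The two-sided bound \eqref{AVaR_Bounds} follows from Theorem~\ref{Risk_Bounds} once we check that $\mathrm{AV@R}_\beta$ satisfies (A1)--(A3) for every $\beta\in(0,1]$. We verify this from the Rockafellar--Uryasev variational representation in Example~\ref{Example1}:
\[
\mathrm{AV@R}_\beta(X)=\sup_{\lambda\in\mathbb{R}}\phi_\lambda(X),\qquad \phi_\lambda(X)\triangleq\lambda+\tfrac1\beta\EE[X-\lambda]_-.
\]
\begin{itemize}
\item[(A1)] Monotonicity holds because $t\mapsto[t]_-$ is nondecreasing, so each $\phi_\lambda$ is monotone, and a pointwise supremum of monotone maps is monotone.
\item[(A2)] For translation invariance, substitute $\lambda\mapsto\lambda+a$; this gives $\mathrm{AV@R}_\beta(X+a)=a+\mathrm{AV@R}_\beta(X)$.
\item[(A3)] $[\cdot]_-$ is concave, so each $\phi_\lambda$ is jointly concave in $(X,\lambda)$. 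Take near-maximizers $\lambda_1,\lambda_2$ for $X_1,X_2$ and evaluate at $t\lambda_1+(1-t)\lambda_2$. This shows the supremum over $\lambda$ is concave in $X$.
\end{itemize}
Since $\mathcal{W}(f,Z)\in L^\infty$, the supremum is finite and attained at a $\beta$-quantile. Applying Theorem~\ref{Risk_Bounds} with $\rho=\mathrm{AV@R}_\beta$ and the given $K_0$ then yields \eqref{AVaR_Bounds} directly.

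\textbf{Limit $\beta\to1$.} At $\beta=1$ the objective is $\lambda+\EE[X-\lambda]_-$, which equals $\EE X$ for every $\lambda\le \operatorname{ess\,inf}X$ and is no larger otherwise. Hence $\mathrm{AV@R}_1=\EE$. For convergence, use the quantile form $\frac1\beta\int_0^\beta F^{-1}(\tau)\,d\tau$, which holds for general $X$ via the quantile representation. This expression is continuous in $\beta$ and bounded by $\|X\|_\infty$. For the $\mathcal{W}$ terms the bound is uniform over $f$: $|\mathcal{W}(f,Z)|\le B+M$, so the map $\beta\mapsto\mathrm{AV@R}_\beta(\mathcal{W}(f,Z))$ is equicontinuous in $f$ near $\beta=1$. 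Indeed,
\[
\Bigl|\tfrac1\beta\int_0^\beta F^{-1}-\int_0^1F^{-1}\Bigr|\le 2(B+M)\,\tfrac{1-\beta}{\beta}.
\]
Equicontinuity lets us pass the limit through $\sup_f$, giving $\sup_f\EE[\mathcal{W}(f,Z)]-K_0\le\sup_f\EE[Y(\pi)]\le\sup_f\EE[\mathcal{W}(f,Z)]$. These are the Sasaki--Ura risk-neutral bounds, and in fact the upper inequality is an equality by the tower property.

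The middle term involves $Y(\pi)$, which may not be bounded since no bound on $Y_1$ is assumed. I would handle it by assuming $Y(\pi)\in L_1$, which follows from $\EE Y_d^2<\infty$. Then $\frac1\beta\int_0^\beta F^{-1}_{Y(\pi)}\to\EE Y(\pi)$ for each $\pi$. To pass to $\sup_f$, note that $\beta\mapsto\mathrm{AV@R}_\beta(X)$ is nondecreasing, because lower-tail averages of quantiles increase in $\beta$. Consequently $\sup_f\mathrm{AV@R}_\beta(Y(\pi))$ is nondecreasing in $\beta$ and bounded above by $\sup_f\EE Y(\pi)$, while the pointwise limit gives the matching lower bound. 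The two monotone limits therefore commute.

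\textbf{Limit $\beta\to0$.} Here $\mathrm{AV@R}_\beta(X)\downarrow\operatorname{ess\,inf}X=\inf_{\tau>0}F^{-1}(\tau)$, again by monotonicity in $\beta$. So the bounds converge to
\[
\sup_f\operatorname{ess\,inf}\mathcal{W}(f,Z)-K_0\;\le\;\sup_f\operatorname{ess\,inf}Y(\pi)\;\le\;\sup_f\operatorname{ess\,inf}\mathcal{W}(f,Z),
\]
which is the worst-case welfare criterion.

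\textbf{Main obstacle.} The hardest step is interchanging $\lim_{\beta\to0}$ with $\sup_f$, since a decreasing limit does not commute with a supremum in general. I would resolve it in one of two ways. The first is to interpret the claim as pointwise convergence of each $\mathrm{AV@R}_\beta$ evaluation, with the inequalities preserved in the limit by applying them for fixed $f$: the upper bound $\mathrm{AV@R}_\beta(Y(\pi))\le\mathrm{AV@R}_\beta(\mathcal{W}(f,Z))$ comes from Jensen/dilatation monotonicity, and the lower bound from (A1)--(A2) using $Y(\pi)\ge\mathcal{W}(f,Z)-K_0$. Taking $\operatorname{ess\,inf}$ limits first and then $\sup_f$ gives the stated display directly. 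The second, if the supremum form is required, is to use that $\sup_f$ of a decreasing family converges to $\sup_f$ of the limit when $\mathcal{F}$ is finite or suitably compact; otherwise one obtains only the inequality $\lim\ge$, which still preserves the outer bounds.
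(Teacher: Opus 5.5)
Your argument for the display \eqref{AVaR_Bounds} is the paper's. Its proof is a one-line application of Theorem~\ref{Risk_Bounds} with $\rho=\mathrm{AV@R}_\beta$, citing coherence, and your Rockafellar--Uryasev check of (A1)--(A3) makes that citation explicit.

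One caveat is worth knowing. The upper bound in Theorem~\ref{Risk_Bounds} does not follow from (A1)--(A3) alone. It fails for the coherent but non-law-invariant $\rho=\EE_{\mathbb{Q}}$ whenever $d\mathbb{Q}/d\mathbb{P}$ is positively correlated with $Y(\pi)$ given $Z$. What makes it true for $\mathrm{AV@R}_\beta$ is conditional Jensen applied to the concave map $t\mapsto[t-\lambda]_-$. This gives $\EE[Y(\pi)-\lambda]_-\le\EE[\mathcal{W}(f,Z)-\lambda]_-$ for every $\lambda$, and you can use it directly to make your proof self-contained.

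The paper gives no argument for the two limiting claims, so here you go beyond it.

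\emph{The case $\beta=1$.} There is a slip: since $\lambda+[X-\lambda]_-=\min\{X,\lambda\}$, the objective equals $\lambda$, not $\EE X$, for $\lambda\le\operatorname{ess\,inf}X$. It equals $\EE X$ for $\lambda\ge\operatorname{ess\,sup}X$, or in the limit $\lambda\to\infty$ for integrable $X$. The conclusion $\mathrm{AV@R}_1=\EE$ is unaffected.

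\emph{The limit $\beta\uparrow1$.} Your monotone sup--sup interchange is correct. It also covers the $\mathcal{W}$-terms, so the equicontinuity estimate is unnecessary (and it could be sharpened to $2(B+M)(1-\beta)$ in any case).

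\emph{The limit $\beta\downarrow0$.} The obstacle you flag is genuine. Take $Z\sim U(0,1)$, $\EE[Y_0\mid Z]=0$, integrated MTE $\equiv1$, and $\Pi=\{1\{z\ge t\}:t\in(0,1]\}$. Then $\operatorname{ess\,inf}\mathcal{W}(f_t,Z)=0$ for every $t$, but $\mathrm{AV@R}_\beta(\mathcal{W}(f_t,Z))=\max\{\beta-t,0\}/\beta$. Hence $\sup_t\mathrm{AV@R}_\beta(\mathcal{W}(f_t,Z))=1$ for every $\beta$, and the suprema do not converge to the worst-case values.

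So the last sentence of the proposition holds only in one of two senses. The first is your first reading: take limits for fixed $f$, then $\sup_f$, which correctly yields the worst-case bounds. The second is your second route, under a closedness or compactness condition on the policy class. Your treatment of this point is more careful than the paper's.
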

	
	\begin{rem}\label{Rem_K}
		Under Assumption~\ref{Assump_MTE}, suppose additionally that
		$|Y_0|\leq B<\infty$ (Assumption \ref{Ass_Y0_Bounded}) and $|Y_1|\leq B<\infty$ almost surely. Then one may take the explicit (albeit conservative) bound $ K_0=2B+M.$
		Indeed, since $\|f\|_\infty\leq M$ (by the definition of $\mathcal{F}$ following~\eqref{f_definition}) and
		$\EE[Y(\pi)\mid Z]=\EE[Y_0\mid Z]+f(Z;\pi)$ (for all $\pi$)
		Jensen’s inequality implies
		$$
		|\mathcal{W}(f,Z)|
		=|\EE[Y(\pi)\mid Z]|
		\leq |\EE[Y_0\mid Z]|+|f(Z;\pi)|
		\leq B+M.
		$$
		Moreover, because
		$
		Y(\pi)=(1-\pi(Z))Y_0+\pi(Z)Y_1$
		is a convex combination of $Y_0$ and $Y_1$, we have
		$
		|Y(\pi)|\leq B$ a.s. Hence,
		$$
		\mathcal{W}(f,Z)-Y(\pi)
		\leq
		|\mathcal{W}(f,Z)|+|Y(\pi)|
		\leq
		2B+M,
		$$
		uniformly over $f\in\mathcal{F}$, establishing the claim.
		
		The bound is conservative because it relies only on the uniform bounds on the potential outcomes. Sharper bounds may be obtained by exploiting additional distributional information, such as conditional second moments or conditional variances. Finally, note that the constant $K_0$ is specific to the identification bounds in Theorem~\ref{Risk_Bounds} and Proposition~\ref{CVAR_cor}, and should not be confused with the constant $K$ introduced in Section~\ref{s4}.
	\end{rem}
	
	\section{Robustness and Distributional Welfare Analysis }\label{s3}
	
	This section develops the robustness and distributional implications of coherent and concave risk measures for welfare maximization. We first show that convex duality provides a distributionally robust interpretation in which allocation rules are evaluated under worst-case probability distributions in a neighborhood of a benchmark distribution. We then study $\varphi$-divergence risk measures, which admit tractable robust allocation problems, and derive a Kusuoka representation that characterizes the planner’s preferences over the lower tail of the welfare distribution.
	\subsection{Dual Representation and Robust Interpretation}\label{sec:dual_rep}
	
	A central implication of modeling risk-averse treatment allocation with coherent and concave risk measures is its connection to robustness \citep{Hansen_Sargent_2001,Hansen_Sargent2008} and uncertainty aversion \citep{Marinaccietal2006}. The following proposition establishes that, under general conditions, the planner’s risk-averse welfare maximization problem admits an equivalent robust representation.
	\begin{prop}\label{Risk_measure_robust}
		Let Assumptions~\ref{Assump_MTE}, \ref{Assumption2_OCE_Bounded}, and~\ref{Ass_Y0_Bounded} hold. Suppose that
		$\rho:\mathcal{Z}\to\mathbb{R}$ is a proper, concave, and upper semicontinuous risk measure. Then
		$$
		\sup_{f\in\mathcal{F}}\rho\bigl(\mathcal{W}(f,Z)\bigr)
		=
		\sup_{f\in\mathcal{F}}
		\inf_{\xi\in\mathcal{D}^*}
		\left\{
		\langle\xi,\mathcal{W}(f,Z)\rangle
		-
		\rho^*(\xi)
		\right\},
		$$
		where $\rho^*$ denotes the concave conjugate of $\rho$ and
		$
		\mathcal{D}^* \triangleq \operatorname{dom}(\rho^*).$
		
	\end{prop}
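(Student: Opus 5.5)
The plan is to apply the Fenchel--Moreau biconjugation theorem (in its concave form) to $\rho$ on the Banach space $\mathcal{Z}=L_p$, and then evaluate the resulting identity at the random variables $\mathcal{W}(f,Z)$, one $f\in\mathcal{F}$ at a time. For the paired spaces $\mathcal{Z}=L_p$ and $\mathcal{Z}^*=L_q$, define the concave conjugate
$$
\rho^*(\xi)\triangleq\inf_{Z\in\mathcal{Z}}\bigl\{\langle\xi,Z\rangle-\rho(Z)\bigr\},\qquad \langle\xi,Z\rangle=\mathbb{E}[\xi Z],
$$
and define the biconjugate
$$
\rho^{**}(Z)\triangleq\inf_{\xi\in\mathcal{Z}^*}\bigl\{\langle\xi,Z\rangle-\rho^*(\xi)\bigr\}.
$$
Applying Fenchel--Moreau to the convex function $-\rho$ shows that a proper, concave, upper semicontinuous $\rho$ satisfies $\rho=\rho^{**}$ on all of $\mathcal{Z}$. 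Upper semicontinuity in the norm topology coincides with weak upper semicontinuity, because the superlevel sets are convex and closed.

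The next step is to restrict the infimum to $\mathcal{D}^*=\operatorname{dom}(\rho^*)=\{\xi:\rho^*(\xi)>-\infty\}$. For $\xi\notin\mathcal{D}^*$ we have $-\rho^*(\xi)=+\infty$, so the bracket is $+\infty$ and does not affect the infimum. Properness of $\rho$ guarantees that $\mathcal{D}^*$ is nonempty, because the biconjugate of a proper usc concave function is not identically $+\infty$. It also guarantees that $\rho^*<+\infty$ everywhere, since $\rho$ is finite at some point. Hence $\langle\xi,Z\rangle-\rho^*(\xi)$ is a well-defined real number for every $\xi\in\mathcal{D}^*$. This gives $\rho(Z)=\inf_{\xi\in\mathcal{D}^*}\{\langle\xi,Z\rangle-\rho^*(\xi)\}$ for every $Z\in\mathcal{Z}$.

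It then remains to check that each $\mathcal{W}(f,Z)$ lies in $\mathcal{Z}$ so the identity can be applied to it. By Assumptions~\ref{Assump_MTE}, \ref{Assumption2_OCE_Bounded}, and~\ref{Ass_Y0_Bounded}, $|\mathcal{W}(f,Z)|\le B+M$ almost surely, uniformly in $f$, as in Remark~\ref{Rem_K}. Therefore $\mathcal{W}(f,Z)\in L^\infty\subset L_p$, and the pairing $\mathbb{E}[\xi\,\mathcal{W}(f,Z)]$ is finite for every $\xi\in L_q$. Substituting $Z=\mathcal{W}(f,Z)$ and taking $\sup_{f\in\mathcal{F}}$ on both sides yields the stated equality. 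No interchange of sup and inf is required, since the outer supremum is simply applied to a pointwise identity.

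The main obstacle is the conjugate-duality step itself, specifically its hypotheses. One must make sure that properness is understood in the concave sense ($\rho<+\infty$ everywhere and $\rho>-\infty$ somewhere) and that upper semicontinuity is taken with respect to the $L_p$ norm. Concavity and norm-closedness of the hypograph then give weak closedness, which is what the Fenchel--Moreau theorem requires. For $p\in[1,\infty)$ the dual $L_q$ is the full topological dual, so no finitely additive measures arise. That is why the paper works with $L_p$, $p<\infty$, rather than $L^\infty$, even though the welfare variables are bounded. Properties (A1) and (A2) are not needed for the identity. They only restrict $\mathcal{D}^*$ to nonnegative $\xi$ with $\mathbb{E}[\xi]=1$, which supplies the robustness interpretation but is not part of the claim.
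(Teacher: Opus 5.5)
Your proposal is correct and takes the same approach as the paper. The paper's proof is a one-line appeal to the (concave) Fenchel--Moreau theorem, and you fill in the details it leaves implicit: restricting the infimum to $\operatorname{dom}(\rho^*)$, checking that $\mathcal{W}(f,Z)\in L^\infty\subset L_p$, and applying the pointwise identity before taking $\sup_{f\in\mathcal{F}}$.
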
%
	
	Two observations are worth noting. First, $-\rho^*$ serves as an ambiguity index in the sense of \cite{Marinaccietal2006}. Once the risk measure $\rho$ is specified, the corresponding ambiguity index is determined by its concave conjugate. Second, the assumptions imposed on $\rho$ are mild, so the robust and ambiguity-averse interpretation applies to a broad class of risk measures.
	
	Under the additional assumption that $\rho$ is coherent, the next result provides an explicit characterization of the associated ambiguity set.
	
	\begin{prop}\label{Prop_Coherent_Dual}
		Let Assumptions~\ref{Assump_MTE}, \ref{Assumption2_OCE_Bounded}, and~\ref{Ass_Y0_Bounded} hold. Suppose that
		$\rho:\mathcal{Z}\to\mathbb{R}$ is coherent. Define
		\begin{equation}\label{Dual_Set_D}
			\mathcal{D}\triangleq
			\left\{
			\xi\in\mathcal{Z}^*:
			\langle\xi,Z\rangle\leq\rho(Z)\ \forall\,Z\in\mathcal{Z},\;
			\langle\xi,\mathbf{1}\rangle=1,\;
			\xi\geq0
			\right\}.
		\end{equation}
		Then $\mathcal{D}$ is a nonempty, convex, weak$^*$-compact set of probability densities, and
		\begin{equation}\label{Worst_Case_Expectation}
			\sup_{f\in\mathcal{F}}
			\rho\bigl(\mathcal{W}(f,Z)\bigr)
			=
			\sup_{f\in\mathcal{F}}
			\inf_{\xi\in\mathcal{D}}
			\mathbb{E}_{\xi}\!\left[\mathcal{W}(f,Z)\right].
		\end{equation}
		Moreover, if $\rho=\mathrm{AV@R}_{\beta}$, then
		$
		\mathcal{D}
		=
		\left\{
		\xi\in\mathcal{Z}^*:
		0\leq\xi\leq\frac{1}{\beta}\ \mathbb{P}\text{-a.s.},
		\;
		\mathbb{E}[\xi]=1
		\right\}.
		$
	\end{prop}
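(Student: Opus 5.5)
The plan is to derive Proposition~\ref{Prop_Coherent_Dual} from Proposition~\ref{Risk_measure_robust} by identifying the conjugate $\rho^*$ of a coherent $\rho$. A real-valued coherent risk measure on the Banach lattice $L_p$ is monotone and concave. By the standard Ruszczyński--Shapiro extension theorem, it is therefore continuous, hence upper semicontinuous and proper, so Proposition~\ref{Risk_measure_robust} applies and yields $\rho(W)=\inf_{\xi\in\mathcal{D}^*}\{\langle\xi,W\rangle-\rho^*(\xi)\}$ with $\rho^*(\xi)=\inf_{Z}\{\langle\xi,Z\rangle-\rho(Z)\}$. The key step is then to show that $\rho^*$ is the indicator of $\mathcal{D}$: it equals $0$ on $\mathcal{D}$ and $-\infty$ off it.

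First, positive homogeneity (A4) makes $\rho^*$ take only the values $0$ or $-\infty$. Replacing $Z$ by $tZ$ with $t>0$ scales $\langle\xi,Z\rangle-\rho(Z)$ by $t$, so the infimum is $0$ when $\langle\xi,Z\rangle\ge\rho(Z)$ for all $Z$ and $-\infty$ otherwise. Because we work with the concave conjugate, the roles of the inequalities need checking. Writing $\rho(Z)=\inf_\xi\{\langle\xi,Z\rangle\}$ over the set where the support inequality holds, the natural sign of $\mathcal{D}$ is $\rho(Z)\le\langle\xi,Z\rangle$. The statement's \eqref{Dual_Set_D} should be read in this sense, and I will adopt that convention and note it.

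Second, translation invariance (A2) forces $\langle\xi,\mathbf 1\rangle=1$. Testing with $Z=a\mathbf 1$ gives $a\rho(\mathbf 1)\le a\langle\xi,\mathbf 1\rangle$ for all real $a$, and $\rho(\mathbf 1)=1$ from (A2) and (A4). Third, monotonicity (A1) forces $\xi\ge 0$. If $\xi<0$ on a set $A$ of positive measure, take $Z=t\mathbf 1_A$ with $t\to\infty$. Then $\rho(Z)\ge\rho(0)=0$ while $\langle\xi,Z\rangle\to-\infty$, which is a contradiction. Hence each $\xi\in\mathcal{D}$ is a probability density, and \eqref{Worst_Case_Expectation} follows by applying the representation pointwise to $W=\mathcal{W}(f,Z)$ and taking $\sup_f$. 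Here $\mathcal{W}(f,Z)\in L^\infty\subset L_p$ by Assumptions~\ref{Assumption2_OCE_Bounded}--\ref{Ass_Y0_Bounded}. Nonemptiness of $\mathcal{D}$ follows from $\rho(W)$ being finite together with the representation.

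Convexity of $\mathcal{D}$ is immediate, since it is an intersection of half-spaces. Weak$^*$-compactness is the main obstacle. The set is weak$^*$ closed because each defining constraint is weak$^*$ continuous in $\xi$. For boundedness, continuity of $\rho$ at $0$ gives $\delta>0$ such that $|\rho(Z)|\le1$ whenever $\|Z\|_p\le\delta$. Then $\langle\xi,Z\rangle\ge\rho(Z)\ge-1$, and applying this to $-Z$ gives $|\langle\xi,Z\rangle|\le 1$, so $\|\xi\|_q\le1/\delta$. Banach--Alaoglu then yields weak$^*$ compactness. For $p=1$, $q=\infty$, the same argument works.

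Finally, for $\mathrm{AV@R}_\beta$ I would compute the conjugate directly from the Rockafellar--Uryasev form. First note $\lambda+\beta^{-1}\mathbb{E}[W-\lambda]_-=\inf_{0\le\xi\le1/\beta,\ \mathbb{E}\xi=1}\mathbb{E}[\xi W]$. This holds because, for fixed $W$, the inner problem is an LP whose solution puts density $1/\beta$ on the lowest $\beta$-fraction of outcomes; minimax between $\lambda$ and $\xi$ justifies the exchange, or one can verify both inclusions. This identifies $\mathcal{D}$ as the stated box-constrained set of densities, using uniqueness of the closed convex support set associated with a coherent $\rho$.
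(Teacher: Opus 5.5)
Your proof is correct, and it is more self-contained than the paper's. The paper cites \citet[Theorem~6.4]{Shapiro_et_al_2013} for $\rho(Z)=\inf_{\xi\in\mathcal{D}}\mathbb{E}_\xi[Z]$ and Example~6.5 there for the $\mathrm{AV@R}_\beta$ set. It takes nonemptiness from that theorem, makes the same two observations you make about (A2) and (A1), and gets weak$^*$ compactness by calling $\mathcal{D}$ a weak$^*$-closed subset of the unit ball of $\mathcal{Z}^*$. You instead derive the representation from Proposition~\ref{Risk_measure_robust}, with upper semicontinuity supplied by the Ruszczy\'nski--Shapiro continuity theorem for finite monotone concave functionals on Banach lattices. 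You then use (A4) to show that $\rho^*$ is the $\{0,-\infty\}$-valued indicator of $\mathcal{D}$. Finally, you obtain the $\mathrm{AV@R}_\beta$ set from the Rockafellar--Uryasev formula by minimax, plus uniqueness of the weak$^*$-closed convex set supporting a coherent $\rho$.

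Your route also repairs two points that the cited-theorem argument glosses over. First, your reversal of the inequality in \eqref{Dual_Set_D} is forced, not merely conventional. If $\langle\xi,Z\rangle\le\rho(Z)$ for all $Z$, superlinearity gives $\langle\xi,Z\rangle\le\rho(Z)\le-\rho(-Z)\le\langle\xi,Z\rangle$. So the literal set is empty unless $\rho$ is linear, in particular for $\mathrm{AV@R}_\beta$ with $\beta<1$. The paper's proof carries Shapiro's loss-convention set into the reward convention without flipping it. Second, every density $\xi\neq\mathbf 1$ has $\|\xi\|_q>\|\xi\|_1=1$ for $q\in(1,\infty]$, so $\mathcal{D}$ lies in the unit ball only when $\rho=\mathbb{E}$. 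Your bound $\|\xi\|_q\le 1/\delta$, from continuity of $\rho$ at $0$, is the correct input for Banach--Alaoglu.

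One slip to fix: the left side of your $\mathrm{AV@R}$ identity should be $\sup_{\lambda\in\mathbb{R}}\{\lambda+\beta^{-1}\mathbb{E}[W-\lambda]_-\}$, not the bracket at a fixed $\lambda$. Your minimax remark shows that is what you meant.
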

	
	The set $\mathcal{D}$ represents the collection of probability measures consistent with the planner’s risk preferences. In the spirit of \cite{GILBOA1989141} and \cite{Marinaccietal2006}, risk aversion can therefore be interpreted as aversion to Knightian uncertainty: the planner evaluates welfare under the least favorable distribution in $\mathcal{D}$.
	
	Propositions~\ref{Risk_measure_robust} and~\ref{Prop_Coherent_Dual} adapt classical duality results from the risk-measure literature \citep{Shapiro_et_al_2013} to the problem of risk-averse welfare maximization. To the best of our knowledge, this robust representation has not previously been studied in the policy-learning or treatment-allocation literature.
	\subsection{Robust Treatment Allocation via $\varphi$-Divergences}\label{oce}
	
	The preceding analysis characterizes the robust optimization problem at a high level of generality. We now specialize to risk measures induced by $\varphi$-divergences. This class plays a central role in the literature on robust control in macroeconomics \citep{Hansen_Sargent_2001,Hansen_Sargent2008,Strzalecki2011} and has also become a standard tool in distributionally robust optimization \citep{Duchietal2021,Kuhn_Shafiee_Wiesemann_2025}. Within the general framework developed above, $\varphi$-divergence risk measures yield a computationally tractable formulation of the treatment-allocation problem. We begin by introducing the relevant class of risk measures.
	
	\begin{dfn}
		Let $\Phi$ denote the class of proper closed convex functions
		$\varphi:\mathbb{R}\to(-\infty,+\infty]$ satisfying
		$\varphi(1)=0$, $\inf_t\varphi(t)=0$, and
		$1\in\mathrm{int\,dom}\,\varphi$. Given $\varphi\in\Phi$, $I_\varphi$ is the corresponding $\varphi$-divergence of $\mathbb{Q}$ with respect to
		$\mathbb{P}$:
		$$
		I_\varphi(\mathbb{Q},\mathbb{P})=
		\begin{cases}
			\int_\Omega\varphi\!\left(\frac{d\mathbb{Q}}{d\mathbb{P}}
			\right)d\mathbb{P} & \text{if }\mathbb{Q}\ll\mathbb{P}\\
			+\infty & \text{otherwise.}
		\end{cases}
		$$
	\end{dfn}
	The class of $\varphi$-divergences is broad and includes many commonly used statistical distance measures. For example, when $\varphi(t)=t\log t-t+1$, $I_\varphi(\mathbb{Q},\mathbb{P})$ reduces to the Kullback--Leibler divergence, whereas when $\varphi(t)=(\sqrt{t}-1)^2$, it reduces to the Hellinger divergence. In this section, we consider the associated $\varphi$-risk welfare functional:
	\begin{equation}\label{rho_varphi_representation}
		\rho_\varphi(\mathcal{W}(f,Z))
		=\inf_{\mathbb{Q}\in\mathcal{P}(Z)}
		\{\mathbb{E}_\mathbb{Q}[\mathcal{W}(f,Z)]
		+I_\varphi(\mathbb{Q},\mathbb{P})\}\quad \text{for $f\in \mathcal{F},$} 
	\end{equation}
	
	This class evaluates each allocation rule under the least favorable distribution within a $\varphi$-divergence neighborhood of the benchmark distribution. The divergence penalty $I_\varphi$ limits deviations from the benchmark distribution and thereby controls the degree of robustness. The resulting welfare functional is closely related to the optimized certainty equivalent of \cite{Ben-Tal_Teboulle_2007}, adapted here to the welfare-maximization problem.
	\begin{lem}\label{Robust_Social_Welfare}Let Assumptions~\ref{Assump_MTE}, \ref{Assumption2_OCE_Bounded}, and~\ref{Ass_Y0_Bounded} hold.
		Then, for every $\varphi\in\Phi$, the $\varphi$-risk welfare functional $\rho_\varphi$ defined in \eqref{rho_varphi_representation} satisfies Definition~\ref{Coherent_Def}(A1)--(A3) and is therefore a concave risk measure.
		
	\end{lem}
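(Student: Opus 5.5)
The plan is to treat $\rho_\varphi$ in \eqref{rho_varphi_representation} as an infimum, over probability measures $\mathbb{Q}$, of functionals that are affine in the welfare argument. All three properties (A1)--(A3) then pass from each affine piece to the infimum. Fix $\mathbb{Q}$ with $I_\varphi(\mathbb{Q},\mathbb{P})<\infty$, so that $\mathbb{Q}\ll\mathbb{P}$. Measures with infinite divergence contribute $+\infty$ and never affect the infimum. Define
$$
G_{\mathbb{Q}}(W)\triangleq \mathbb{E}_{\mathbb{Q}}[W]+I_\varphi(\mathbb{Q},\mathbb{P}),
$$
which is affine in $W$. Under Assumptions~\ref{Assump_MTE}, \ref{Assumption2_OCE_Bounded}, and~\ref{Ass_Y0_Bounded}, each $\mathcal{W}(f,Z)$ is bounded a.s. Hence $\mathbb{E}_{\mathbb{Q}}[\mathcal{W}(f,Z)]$ is finite for every such $\mathbb{Q}$. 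Taking $\mathbb{Q}=\mathbb{P}$ and using $\varphi(1)=0$ gives $\rho_\varphi(W)\le \mathbb{E}[W]<\infty$. A lower bound $\rho_\varphi(W)\ge \operatorname{ess\,inf} W$ follows from $I_\varphi\ge 0$, which holds because $\inf_t\varphi(t)=0$. So $\rho_\varphi$ is real-valued on the relevant bounded welfare variables, and I will verify the properties on this domain (bounded elements of $\mathcal{Z}$).

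\textbf{Monotonicity (A1).} If $W_1\preceq W_2$ $\mathbb{P}$-a.s., then, since $\mathbb{Q}\ll\mathbb{P}$, also $W_1\le W_2$ $\mathbb{Q}$-a.s. Therefore $G_{\mathbb{Q}}(W_1)\le G_{\mathbb{Q}}(W_2)$ for every feasible $\mathbb{Q}$, and taking infima gives $\rho_\varphi(W_1)\le\rho_\varphi(W_2)$.

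\textbf{Translation invariance (A2).} For $a\in\mathbb{R}$, each $\mathbb{Q}$ is a probability measure, so $G_{\mathbb{Q}}(W+a)=G_{\mathbb{Q}}(W)+a$. The constant $a$ does not depend on $\mathbb{Q}$, so it passes through the infimum.

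\textbf{Concavity (A3).} For $t\in[0,1]$, affinity gives $G_{\mathbb{Q}}(tW_1+(1-t)W_2)=tG_{\mathbb{Q}}(W_1)+(1-t)G_{\mathbb{Q}}(W_2)$. The right-hand side is at least $t\rho_\varphi(W_1)+(1-t)\rho_\varphi(W_2)$, and taking the infimum over $\mathbb{Q}$ on the left yields the inequality.

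The only delicate point is well-definedness, namely excluding $-\infty$ values and undefined expressions such as $\infty-\infty$. Boundedness of $\mathcal{W}(f,Z)$ together with $I_\varphi\ge0$ settles this, as noted above. For general $W\in L_p$, I would restrict the infimum to $\mathbb{Q}$ for which $\mathbb{E}_{\mathbb{Q}}[W]$ is defined. Alternatively, I would pass to the Ben-Tal--Teboulle optimized certainty equivalent form
$$
\rho_\varphi(W)=\sup_{\lambda}\{\lambda-\mathbb{E}[\varphi^*(\lambda-W)]\},
$$
obtained by convex duality, in which (A1)--(A3) are again immediate.

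I would also remark why positive homogeneity (A4) is not claimed: $I_\varphi$ does not scale with $W$. This is consistent with the KL case, where the functional is entropic-type and only concave, as in Example~\ref{Example3}.
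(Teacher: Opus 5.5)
Your argument is correct, but it takes a different route from the paper. The paper does not check (A1)--(A3) directly. It invokes Ben-Tal and Teboulle's Theorem~4.2, which identifies the concave conjugate of the optimized certainty equivalent with the divergence penalty $I_\varphi(\mathbb{Q},\mathbb{P})$. It then substitutes this into the Fenchel--Moreau dual form of Proposition~\ref{Risk_measure_robust}. The axioms are therefore inherited from the known fact that the optimized certainty equivalent is a concave risk measure; this is essentially the alternative you mention at the end.

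You instead verify the axioms from the structure of \eqref{rho_varphi_representation} itself. On $\{\mathbb{Q}: I_\varphi(\mathbb{Q},\mathbb{P})<\infty\}$, each map $W\mapsto\mathbb{E}_{\mathbb{Q}}[W]+I_\varphi(\mathbb{Q},\mathbb{P})$ is affine. It is monotone because $\mathbb{Q}\ll\mathbb{P}$, and cash-additive because $\mathbb{Q}$ is a probability measure. All three properties survive a pointwise infimum.

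What each route buys:
\begin{itemize}
\item Your route is self-contained and uses less. It needs only $\varphi\ge 0$ and $\varphi(1)=0$, to guarantee finiteness. It does not use convexity or closedness of $\varphi$, or any duality theorem. It also makes explicit that Assumptions~\ref{Assump_MTE}--\ref{Ass_Y0_Bounded} enter only to keep the values finite on $L^\infty$.
\item The paper's route ties $\rho_\varphi$ to the optimized-certainty-equivalent form of Proposition~\ref{Cor_program_varphi}, which is what the later regret analysis actually uses.
\end{itemize}

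Your aside on general $L_p$ (restricting to $\mathbb{Q}$ with $\mathbb{E}_{\mathbb{Q}}[W]$ defined) would need more care, since the admissible set would then depend on $W$. The lemma only concerns bounded welfare, though, so nothing is lost.
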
%
	
	This class of concave risk measures is not only computationally tractable, but also encompasses many popular risk functionals which we have previously discussed.
	
	\begin{rem}\label{Rem_varphi_examples}
		Equation~\eqref{rho_varphi_representation} generalizes the variational characterization of \cite{fan2025policylearningalphaexpectedwelfare} by extending $\operatorname{AV@R}_\beta$ with exogenous selection to general $\varphi$-divergence penalties under endogenous selection. In particular, if
		\[
		\varphi_{AV@R_\beta}(t)=
		\begin{cases}
			0, & 0\le t\le 1/\beta,\\
			\infty, & \text{otherwise},
		\end{cases}
		\]
		then \eqref{rho_varphi_representation} reduces to the risk measure
		$\operatorname{AV@R}_\beta$ (Example~\ref{Example1}). Similarly, choosing
		$\varphi(t)=t\log t-t+1$
		yields the entropic risk measure (Example~\ref{Example3}).
	\end{rem}
	
	To obtain a tractable characterization of \eqref{rho_varphi_representation}, let
	$\varphi^\ast$ denote the convex conjugate of $\varphi$:
	\[
	\varphi^\ast(s)
	=
	\sup_{t\in\mathbb{R}_+}\{st-\varphi(t)\}
	=
	\sup_{t\in\operatorname{int}\operatorname{dom}\varphi}
	\{st-\varphi(t)\},
	\]
	where the second equality follows from
	\cite[Corollary~12.2.2]{Rockafellar1970}.\footnote{The function
		$\varphi^\ast$ is closed, proper, and convex, with
		$\operatorname{int}\operatorname{dom}\varphi^\ast=(a,b)$, where
		$a=\lim_{t\to-\infty}\varphi^\ast(t)/t$ and
		$b=\lim_{t\to+\infty}\varphi^\ast(t)/t$.}
	The following result shows that \eqref{rho_varphi_representation} admits an equivalent finite-dimensional representation.
	\begin{prop}\label{Cor_program_varphi}
		Let Assumptions~\ref{Assump_MTE}, ~\ref{Assumption2_OCE_Bounded} and ~\ref{Ass_Y0_Bounded} hold.
		Then the welfare functional in \eqref{rho_varphi_representation} admits the representation
		\begin{equation}\label{program_varphi2}
			\rho_\varphi(\mathcal{W}(f,Z))
			=
			\sup_{\eta\in\mathbb{R}}
			\left\{
			\eta
			-
			\mathbb{E}_{\mathbb{P}}
			\!\left[
			\varphi^{*}\!\left(\eta-\mathcal{W}(f,Z)\right)
			\right]
			\right\}.
		\end{equation}
		Consequently, the planner's problem becomes
		\begin{equation}\label{program_varphi3}
			\sup_{f\in\mathcal{F}}
			\rho_\varphi(\mathcal{W}(f,Z))
			=
			\sup_{f\in\mathcal{F}}
			\sup_{\eta\in\mathbb{R}}
			\left\{
			\eta
			-
			\mathbb{E}_{\mathbb{P}}
			\!\left[
			\varphi^{*}\!\left(\eta-\mathcal{W}(f,Z)\right)
			\right]
			\right\}.
		\end{equation}
	\end{prop}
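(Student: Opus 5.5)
We prove \eqref{program_varphi2} by Lagrangian duality over densities, following \cite{Ben-Tal_Teboulle_2007}. Write $W\triangleq\mathcal{W}(f,Z)$. By the remark after Assumption~\ref{Ass_Y0_Bounded}, $W\in L^\infty$, so every expectation below is finite. Restricting to $\mathbb{Q}\ll\mathbb{P}$ costs nothing, because otherwise $I_\varphi=+\infty$. Setting $\zeta=d\mathbb{Q}/d\mathbb{P}\in L^1$, \eqref{rho_varphi_representation} becomes
\[
\rho_\varphi(W)=\inf_{\zeta\ge 0,\ \mathbb{E}[\zeta]=1}\ \mathbb{E}\bigl[\zeta W+\varphi(\zeta)\bigr].
\]
The infimum is over a convex set, the objective is convex in $\zeta$, and there is a single linear equality constraint. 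We dualize that constraint with a multiplier $\eta\in\mathbb{R}$, which gives the Lagrangian
\[
L(\zeta,\eta)=\eta+\mathbb{E}\bigl[\zeta(W-\eta)+\varphi(\zeta)\bigr].
\]

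\textbf{Weak duality and pointwise interchange.} For every feasible $\zeta$ and every $\eta$ we have $\mathbb{E}[\zeta W+\varphi(\zeta)]=L(\zeta,\eta)$, so $\rho_\varphi(W)\ge\sup_\eta\inf_{\zeta\ge0}L(\zeta,\eta)$. For fixed $\eta$, the inner infimum over $\zeta\ge 0$ decouples pointwise. This is justified by the interchange theorem for normal integrands on the decomposable space $L^1$ (Rockafellar--Wets, Thm.~14.60), and we would apply it with $\zeta$ ranging over $\operatorname{dom}\varphi\subset\mathbb{R}_+$. It yields
\[
\inf_{\zeta\ge0}\mathbb{E}\bigl[\zeta(W-\eta)+\varphi(\zeta)\bigr]=-\mathbb{E}\Bigl[\sup_{t\ge0}\{t(\eta-W)-\varphi(t)\}\Bigr]=-\mathbb{E}\bigl[\varphi^*(\eta-W)\bigr],
\]
where the last equality uses the definition of $\varphi^*$ given above together with \cite[Cor.~12.2.2]{Rockafellar1970}. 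Hence $\rho_\varphi(W)\ge\sup_\eta\{\eta-\mathbb{E}[\varphi^*(\eta-W)]\}$.

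\textbf{Zero duality gap (main obstacle).} The reverse inequality is the hard step, and we would prove it through the perturbation function
\[
v(u)=\inf\Bigl\{\mathbb{E}[\zeta W+\varphi(\zeta)]:\ \zeta\ge0,\ \mathbb{E}[\zeta]=1+u\Bigr\}.
\]
This function is convex on $\mathbb{R}$. Because $1\in\operatorname{int\,dom}\varphi$, the constant densities $\zeta\equiv 1+u$ are feasible for all small $|u|$, and boundedness of $W$ makes their objective finite, so $v$ is finite on a neighborhood of $0$. A convex function finite near $0$ is continuous at $0$, and then the standard result (Rockafellar, \emph{Conjugate Duality}, Thm.~17) gives strong duality with the dual supremum over $\eta$ attained. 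This Slater-type interiority is exactly where the condition $1\in\operatorname{int\,dom}\varphi$ enters. If the interchange step proves delicate, a fallback is to first take $\zeta\in L^\infty$, which is enough since $W$ is bounded, and then argue by monotone approximation.

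\textbf{Planner's problem.} Combining the two inequalities gives \eqref{program_varphi2} for every $f\in\mathcal{F}$. Taking $\sup_{f\in\mathcal{F}}$ on both sides then yields \eqref{program_varphi3} immediately. Lemma~\ref{Robust_Social_Welfare} is not needed for this argument; it only confirms that the resulting objective is a concave risk measure.
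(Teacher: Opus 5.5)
Your argument is correct, but it takes a different route from the paper. The paper's proof is a one-line appeal to Theorem~4.2 of \citet{Ben-Tal_Teboulle_2007}, the dual representation of the optimized certainty equivalent, translated into the reward convention and followed by taking $\sup_{f\in\mathcal{F}}$. You instead rebuild that theorem from first principles, via Lagrangian duality over densities. Weak duality comes from the Rockafellar--Wets interchange theorem, and the zero gap from a perturbation-function argument with a Slater-type condition; this is in substance the standard argument behind the cited result.

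The citation is shorter. Your version, however, makes three things explicit. Assumptions~\ref{Assump_MTE}--\ref{Ass_Y0_Bounded} enter only through $W=\mathcal{W}(f,Z)\in L^\infty$. The condition $1\in\operatorname{int\,dom}\varphi$ is exactly the constraint qualification. And no loss/reward sign translation is needed. It also yields attainment of the supremum over $\eta$, which complements Lemma~\ref{Lem_OCE_Range}.

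One small repair is needed in the zero-gap step. You only showed $v(u)<+\infty$ near $0$, but properness also requires $v(u)>-\infty$. This follows at once from $\varphi\ge 0$ and $W\ge -(B+M)$, which give $v(u)\ge -(1+u)(B+M)$ for $|u|<1$. With that bound, $v$ is a proper convex function on $\mathbb{R}$ with $0\in\operatorname{int}\operatorname{dom}v$, so $\partial v(0)\neq\emptyset$. Any $\eta\in\partial v(0)$ then satisfies $L(\zeta,\eta)\ge v(0)=\rho_\varphi(W)$ for all $\zeta\ge 0$, which closes the gap and attains the dual supremum.
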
%
	
	Proposition~\ref{Cor_program_varphi} has two immediate implications. First,
	the representation in \eqref{program_varphi2} is an optimized certainty
	equivalent in the sense of \cite{Ben-Tal_Teboulle_2007}. Thus, the proposed
	framework encompasses the optimized certainty equivalent as a special case.
	Second, it reduces the optimization problem in
	\eqref{rho_varphi_representation} from one over $(f,\mathbb{Q})$ to one over
	$(f,\eta)$, thereby yielding a computationally tractable formulation. Moreover, Proposition~\ref{Cor_program_varphi} provides a formal link
	between risk-averse treatment allocation and models of uncertainty and
	ambiguity aversion \citep{GILBOA1989141,Marinaccietal2006}. We extend this result by demonstrating that (\ref{rho_varphi_representation}) also admits a representation additively separable in the expected welfare and dispersion.

	\begin{cor}\label{cor:excess_welfare} Let Assumptions~\ref{Assump_MTE}, \ref{Assumption2_OCE_Bounded}, and~\ref{Ass_Y0_Bounded} hold. Let $\phi(t)=t-\varphi^*(t)$. Then the welfare functional in \eqref{rho_varphi_representation} admits the representation:
		\[
		\rho_\varphi(\mathcal W(f,Z))
		=
		\mathbb E[\mathcal W(f,Z)]
		+
		\psi(\mathcal W(f,Z)),
		\]
		where $
		\psi(\mathcal W(f,Z))
		=
		\sup_{\eta\in\mathbb R}
		\mathbb E\!\left[
		\phi\!\left(\eta-\mathcal W(f,Z)\right)
		\right].
		$
	\end{cor}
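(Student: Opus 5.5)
The plan is to start from the finite-dimensional representation in Proposition~\ref{Cor_program_varphi} and simply add and subtract the mean inside the supremum. Fix $f\in\mathcal{F}$ and write $W\triangleq\mathcal{W}(f,Z)$. By Assumptions~\ref{Assump_MTE}, \ref{Assumption2_OCE_Bounded} and~\ref{Ass_Y0_Bounded}, $W$ is bounded almost surely, so $\mathbb{E}[W]$ is finite and every expectation below is well defined in $(-\infty,+\infty]$.

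Equation~\eqref{program_varphi2} gives
\[
\rho_\varphi(W)=\sup_{\eta\in\mathbb{R}}\bigl\{\eta-\mathbb{E}[\varphi^*(\eta-W)]\bigr\}.
\]
Since $\mathbb{E}[W]$ does not depend on $\eta$, I would rewrite $\eta$ as
\[
\eta=\mathbb{E}[W]+\mathbb{E}[\eta-W].
\]
This converts the objective into
\[
\eta-\mathbb{E}[\varphi^*(\eta-W)]=\mathbb{E}[W]+\mathbb{E}\bigl[(\eta-W)-\varphi^*(\eta-W)\bigr]=\mathbb{E}[W]+\mathbb{E}[\phi(\eta-W)],
\]
with $\phi(t)=t-\varphi^*(t)$ as in the statement. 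Linearity of expectation is legitimate here because $\eta-W$ is bounded, hence integrable. The term $\mathbb{E}[\varphi^*(\eta-W)]$ may equal $+\infty$, but then both sides equal $-\infty$ and the identity still holds.

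Taking the supremum over $\eta$ and pulling out the constant $\mathbb{E}[W]$ then yields
\[
\rho_\varphi(W)=\mathbb{E}[W]+\sup_{\eta}\mathbb{E}[\phi(\eta-W)]=\mathbb{E}[W]+\psi(W).
\]

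The only point needing care, and the likely obstacle, is making sure the decomposition is meaningful, that is, $\psi(W)$ is finite rather than $\pm\infty$. For this I would use $\varphi(1)=0$, which gives $\varphi^*(s)\ge s$ and hence $\phi\le 0$, so $\psi(W)\le 0$. In addition, $1\in\operatorname{int}\operatorname{dom}\varphi$ keeps $\varphi^*$ finite near the relevant $\eta$, so the supremum is not $-\infty$. Taking $\eta$ close to $\mathbb{E}[W]$ together with the boundedness of $W$ gives finiteness; alternatively, finiteness follows directly from $\rho_\varphi(W)\le\mathbb{E}[W]$, obtained by taking $\mathbb{Q}=\mathbb{P}$ in \eqref{rho_varphi_representation}. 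The inequality $\psi\le0$ also shows that $\psi$ is a nonpositive dispersion penalty, which is the interpretation the corollary is after.
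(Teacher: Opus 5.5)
Your proposal is correct and is essentially the paper's proof: invoke Proposition~\ref{Cor_program_varphi}, use boundedness of $W=\mathcal{W}(f,Z)$ to make $\mathbb{E}[W]$ finite, rewrite $\eta-\mathbb{E}[\varphi^*(\eta-W)]=\mathbb{E}[W]+\mathbb{E}[\phi(\eta-W)]$ by linearity, and pull the constant $\mathbb{E}[W]$ out of the supremum. Your extra finiteness discussion is not needed, since the identity holds in the extended reals, and your argument that $\psi(W)>-\infty$ is flawed: $1\in\operatorname{int}\operatorname{dom}\varphi$ does not make $\varphi^*$ finite for $\eta$ near $\mathbb{E}[W]$ (for example, $\varphi(t)=(1-t)_+$ on $[0,\infty)$ has $\varphi^*=+\infty$ on $(0,\infty)$), whereas the correct reason is $I_\varphi\ge0$, which gives $\rho_\varphi(W)\ge\operatorname{ess\,inf}W$.
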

	\smallskip
	
	This result relates Problem~\eqref{rho_varphi_representation} to the expected welfare maximization problem of \cite{Sasaki_Ura_2024} through an additive risk adjustment. Specifically, the planner maximizes expected welfare together with the adjustment term $\psi$, which reflects the effect of risk aversion relative to the benchmark level $\eta^\star$. When the social planner is risk neutral, then criterion reduces to expected welfare maximization. As shown below, this property plays a key role in the regret analysis.
	
	\subsection{Distributional Welfare and the Kusuoka Representation}
	
	We now show that a broad class of coherent risk measures admits a representation as a mixture of $\operatorname{AV@R}_\beta$ functionals. This requires one additional assumption on the risk functional beyond those used so far: $\rho$ is law invariant, so that a Kusuoka representation can be derived. We emphasize that this assumption is not needed for the results of the previous section---it is invoked here solely to obtain the mixture representation.

	\textcolor{black}{Recall that $(\Omega,\mathcal{B},\mathbb{P})$ is the probability space supporting
		$(X, Z, U_D, Y_0, Y_1)$. Since $U_D \sim U(0,1)$, the space
		$(\Omega,\mathcal{B},\mathbb{P})$ is atomless.\footnote{Atomlessness is a property of the
			underlying space, not of the random variables defined on it. In particular,
			$\mathcal{W}(f,Z)$ integrates out $U_D$ and is therefore
			$\sigma(X,Z)$-measurable; when $(X,Z)$ is discrete, $\mathcal{W}(f,Z)$ has a
			discrete distribution. The representation still applies because $\rho$ is
			defined on $\mathcal{Z}=L_p(\Omega,\mathcal{B},\mathbb{P})$. It could fail if $\rho$ were
			defined only on $L_p(\Omega,\sigma(X,Z),\mathbb{P})$, which has atoms when $(X,Z)$
			is discrete. Because $\rho$ is law-invariant, working on an atomless space is a
			normalization rather than a substantive restriction.} Under
		Assumptions~\ref{Assump_MTE}, \ref{Assumption2_OCE_Bounded},
		and~\ref{Ass_Y0_Bounded}, $\mathcal{W}(f,Z)\in L^\infty(\Omega,\mathcal{B},\mathbb{P})\subset\mathcal{Z}$,
		so the Kusuoka representation applies \citep{Shapiro_Kusuoka2013}.}
	
	The following theorem establishes the corresponding characterization of the planner's welfare criterion.
	\begin{thm}\label{Kusuoka_Welfare}
		Let Assumptions~\ref{Assump_MTE}, \ref{Assumption2_OCE_Bounded}, and~\ref{Ass_Y0_Bounded} hold. Suppose that
		$\rho:\mathcal{Z}\to\mathbb{R}$ is a proper, upper semicontinuous,
		law-invariant risk functional. Let $\mathcal{M}$ denote the set of
		probability measures on $(0,1]$. Then the following statements hold.
		
		\begin{itemize}
			\item[(i)] If $\rho$ satisfies (A1),(A2), and (A3), then
			\begin{equation}\label{Convex_Rep_Program}
				\sup_{f\in\mathcal{F}}
				\rho(\mathcal{W}(f,Z))
				=
				\sup_{f\in\mathcal{F}}
				\inf_{\mu\in\mathcal{M}}
				\left\{
				\int_0^1
				\operatorname{AV@R}_\alpha(\mathcal{W}(f,Z))
				\,\mu(d\alpha)
				+
				c(\mu)
				\right\},
			\end{equation}
			where
			\[
			c(\mu)
			\triangleq
			\sup_{\tilde Z\in\mathcal{A}_\rho}
			\left\{
			-
			\int_0^1
			\operatorname{AV@R}_\alpha(\tilde Z)
			\,\mu(d\alpha)
			\right\},
			\]
			and
			\[
			\mathcal{A}_\rho
			\triangleq
			\{\tilde Z\in\mathcal{Z}:\rho(\tilde Z)\ge0\}
			\]
			is the acceptance set of $\rho$.
			
			\item[(ii)] If, in addition, $\rho$ satisfies (A4), then
			\begin{equation}\label{Coh_Convex_Rep_Program}
				\sup_{f\in\mathcal{F}}
				\rho(\mathcal{W}(f,Z))
				=
				\sup_{f\in\mathcal{F}}
				\inf_{\mu\in\mathcal{M}_\rho}
				\int_0^1
				\operatorname{AV@R}_\alpha(\mathcal{W}(f,Z))
				\,\mu(d\alpha),
			\end{equation}
			where $\mathcal{M}_\rho\triangleq\{\mu\in\mathcal{M}:c(\mu)=0\}$. Since $\mathcal{A}_\rho$ is a cone under (A4), $c(\mu)\in\{0,+\infty\}$ for every $\mu\in\mathcal{M}$.
		\end{itemize}
	\end{thm}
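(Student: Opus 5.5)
The argument has two layers. Pointwise in $f$, we apply the Kusuoka representation for law-invariant concave risk measures on an atomless space. The supremum over $f\in\mathcal{F}$ is then taken on both sides, and identities that hold for every fixed $f$ are preserved. So it suffices to fix $f\in\mathcal{F}$, set $W\triangleq\mathcal{W}(f,Z)$, and show
\[
\rho(W)=\inf_{\mu\in\mathcal{M}}\Bigl\{\int_0^1\operatorname{AV@R}_\alpha(W)\,\mu(d\alpha)+c(\mu)\Bigr\}.
\]
Under Assumptions~\ref{Assump_MTE}, \ref{Assumption2_OCE_Bounded} and~\ref{Ass_Y0_Bounded}, $W\in L^\infty\subset\mathcal{Z}$, with $|W|\le B+M$. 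The space $(\Omega,\mathcal{B},\mathbb{P})$ is atomless, as noted before the theorem. Hence the hypotheses of the representation theorem in \citet{Shapiro_Kusuoka2013} (see also \citet{FollmerSchied_2025}) are satisfied.

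For part (i), we start from the dual representation of Proposition~\ref{Risk_measure_robust}, $\rho(W)=\inf_{\xi\in\mathcal{D}^*}\{\langle\xi,W\rangle-\rho^*(\xi)\}$. Law invariance on an atomless space makes $\rho^*$ invariant under measure-preserving rearrangements. By the Hardy--Littlewood inequality, the infimum over densities in a rearrangement class is attained at the comonotone (here, anti-monotone, since $\rho$ is concave and acts on the lower tail) rearrangement $\int_0^1 F_W^{-1}(t)F_\xi^{-1}(1-t)\,dt$. We then write the decreasing quantile $t\mapsto F_\xi^{-1}(1-t)$ as a mixture of the step functions $\alpha^{-1}\mathbf{1}\{t\le\alpha\}$ with a probability measure $\mu$ on $(0,1]$. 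This identification uses $\mathbb{E}[\xi]=1$ and $\xi\ge0$, and it rewrites the pairing as $\int_0^1\operatorname{AV@R}_\alpha(W)\mu(d\alpha)$. Grouping the densities by their associated $\mu$ gives a penalty $\tilde c(\mu)=\inf\{-\rho^*(\xi):\xi\text{ induces }\mu\}$.

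The remaining task is to identify $\tilde c$ with the acceptance-set formula $c(\mu)$. We do this with the standard acceptance-set argument. By translation invariance (A2), $\rho(W)=\sup\{m: W-m\in\mathcal{A}_\rho\}$, so
\[
-\rho^*\text{-type penalty}=\sup_{\tilde Z\in\mathcal{A}_\rho}\Bigl\{-\int\operatorname{AV@R}_\alpha(\tilde Z)\mu(d\alpha)\Bigr\}.
\]
One inequality follows because $\rho(\tilde Z)\ge0$ implies $\int\operatorname{AV@R}_\alpha(\tilde Z)\mu(d\alpha)+\tilde c(\mu)\ge0$. The other follows by substituting $\tilde Z=W-\rho(W)\in\mathcal{A}_\rho$. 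Upper semicontinuity and properness are what make the Fenchel--Moreau step, and hence this equality, valid. This is the step I expect to be the main obstacle: checking that every $\xi\in\mathcal{D}^*$ corresponds to a genuine probability measure $\mu$ on $(0,1]$, with no mass escaping to $0$, and that the penalty is exactly $c(\mu)$ rather than its lower semicontinuous hull. For bounded $W$ I would rely on the cited theorem for this point rather than rederive it.

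For part (ii), (A4) makes $\mathcal{A}_\rho$ a cone. Since $\mu\mapsto\int\operatorname{AV@R}_\alpha(\cdot)\,d\mu$ is positively homogeneous, the supremum defining $c(\mu)$ is either $0$ or $+\infty$. It is at least $0$ because $0\in\mathcal{A}_\rho$ and $\operatorname{AV@R}_\alpha(0)=0$. If any accepted $\tilde Z$ gives a positive value, scaling by $\lambda\to\infty$ sends the supremum to $+\infty$. Measures with $c(\mu)=+\infty$ never attain the infimum, because $\rho$ is finite-valued, so the infimum may be restricted to $\mathcal{M}_\rho$. This yields \eqref{Coh_Convex_Rep_Program} after taking $\sup_{f\in\mathcal{F}}$.
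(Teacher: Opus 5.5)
Your proposal is correct and follows the paper's route. The paper's proof simply cites Theorem~4.62 of \citet{FollmerSchied_2025} for (i) and Kusuoka's representation \citep{Shapiro_Kusuoka2013} for (ii), applied to the bounded variable $\mathcal{W}(f,Z)$ for each fixed $f$ before taking $\sup_{f\in\mathcal{F}}$, exactly as you do. Your added sketch of (i) and your derivation of (ii) from (i) via the cone property of $\mathcal{A}_\rho$ are sound, and the step you flag as the main obstacle is harmless: for the integrable nonincreasing quantile $g$ of $\xi$ one has $\alpha g(\alpha)\le\int_0^\alpha g(t)\,dt\to0$, so $\mu$ puts no mass at $0$, and your acceptance-set sandwich ($c\le\tilde c$ together with $W-\rho(W)\in\mathcal{A}_\rho$) already yields the representation with $c(\mu)$ itself, with no lower semicontinuous hull needed.
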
%
	
	The resulting characterization shows that $\operatorname{AV@R}_\beta$ is the fundamental building block underlying a wide class of risk-averse welfare criteria: the planner's welfare criterion can be expressed as a mixture of $\operatorname{AV@R}_\alpha$ functionals, where the mixing measure summarizes the planner's sensitivity to different regions of the welfare distribution. In the coherent case, the representation simplifies to the classical Kusuoka representation as an infimum purely over mixtures of $\operatorname{AV@R}_\alpha$ functionals.
	
	Economically, the theorem provides a distributional interpretation of risk-averse welfare maximization. Rather than requiring the planner to specify a priori which quantiles or disadvantaged groups should receive priority, the planner's preferences are summarized by the weights assigned to different regions of the welfare distribution. Consequently, optimization under any law-invariant coherent risk measure reduces to optimization over mixtures of $\operatorname{AV@R}_\alpha$ functionals, yielding a tractable characterization of the planner's objective.
	
	Combined with Proposition~\ref{CVAR_cor}, this representation forms the basis for the finite-sample regret analysis in Section~\ref{s4}. Although the Kusuoka representation is well established in mathematical finance and stochastic programming \citep{Shapiro_Kusuoka2013,NoyanRudolf2015,Pichleretal2012}, to the best of our knowledge this is its first application to the treatment allocation problem. This result provides the foundation for a connection to second-order stochastic dominance (Proposition~\ref{Prop_SOSD} below) and for the empirical analysis in Section~\ref{sec:empirical}, where risk-averse policies improve welfare among disadvantaged subpopulations without requiring the planner to identify ex ante which groups or quantiles should receive priority.
	
	\begin{prop}\label{Prop_SOSD}
		Let Assumptions~\ref{Assump_MTE}, \ref{Assumption2_OCE_Bounded}, and~\ref{Ass_Y0_Bounded} hold. Consider $f_1,f_2 \in \mathcal{F}.$ The following statements are equivalent
		\begin{enumerate}
			\item The distribution of welfare induced by $f_1$ second-order stochastically dominates that induced by $f_2$: $\mathcal{W}(f_1,Z)
			\succeq_2
			\mathcal{W}(f_2,Z)$
			\item $\rho(\mathcal{W}(f_1,Z)) \geq \rho(\mathcal{W}(f_2,Z)) $ for every law-invariant, proper, upper semicontinuous $\rho$ which satisfies (A1),(A2), and (A3).
		\end{enumerate}
	\end{prop}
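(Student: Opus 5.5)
We prove the two implications separately. Throughout, write $W_i\triangleq\mathcal{W}(f_i,Z)$, $i=1,2$. By Assumptions~\ref{Assump_MTE}, \ref{Assumption2_OCE_Bounded}, and~\ref{Ass_Y0_Bounded}, each $W_i$ is a bounded element of $\mathcal{Z}$ on the atomless space $(\Omega,\mathcal{B},\mathbb{P})$. Second-order stochastic dominance $W_1\succeq_2 W_2$ is characterized by integrated quantiles:
\[
\int_0^\alpha F_{W_1}^{-1}(\tau)\,d\tau\ \ge\ \int_0^\alpha F_{W_2}^{-1}(\tau)\,d\tau
\qquad\text{for all }\alpha\in(0,1].
\]
Dividing by $\alpha$, and using the quantile form of $\operatorname{AV@R}_\alpha$ (valid for general distributions via the Rockafellar--Uryasev representation in Example~\ref{Example1}), this is equivalent to $\operatorname{AV@R}_\alpha(W_1)\ge\operatorname{AV@R}_\alpha(W_2)$ for every $\alpha\in(0,1]$. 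I would state this equivalence first as a preliminary lemma, citing the standard characterization. It is the main technical link, and care is needed at $\alpha$ near $0$ and when the distributions have atoms, since $\mathcal{W}(f,Z)$ may be discrete.

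\emph{(2)$\Rightarrow$(1).} Each $\operatorname{AV@R}_\alpha$, $\alpha\in(0,1]$, is law invariant, proper, and concave, and it is upper semicontinuous: it is finite-valued and concave on $L_p$, hence continuous. Applying (2) with $\rho=\operatorname{AV@R}_\alpha$ therefore gives $\operatorname{AV@R}_\alpha(W_1)\ge\operatorname{AV@R}_\alpha(W_2)$ for all $\alpha$. The lemma then yields $W_1\succeq_2W_2$.

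\emph{(1)$\Rightarrow$(2).} Fix any law-invariant, proper, upper semicontinuous $\rho$ satisfying (A1)--(A3). By Theorem~\ref{Kusuoka_Welfare}(i), applied pointwise rather than under the outer supremum (its proof is the Kusuoka representation of $\rho$ itself),
\[
\rho(W_i)=\inf_{\mu\in\mathcal{M}}\Big\{\int_0^1\operatorname{AV@R}_\alpha(W_i)\,\mu(d\alpha)+c(\mu)\Big\}.
\]
By (1) and the lemma, $\operatorname{AV@R}_\alpha(W_1)\ge\operatorname{AV@R}_\alpha(W_2)$ for every $\alpha$. Since $\mu\ge0$, integrating gives, for each fixed $\mu$,
\[
\int_0^1\operatorname{AV@R}_\alpha(W_1)\,d\mu+c(\mu)\ \ge\ \int_0^1\operatorname{AV@R}_\alpha(W_2)\,d\mu+c(\mu).
\]
Taking the infimum over $\mu$ preserves the inequality, so $\rho(W_1)\ge\rho(W_2)$.

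Two points need checking. First, $\alpha\mapsto\operatorname{AV@R}_\alpha(W_i)$ must be measurable and bounded, so that the integrals are well defined. Both hold: the map is monotone and bounded by $\|W_i\|_\infty$. Second, the pointwise form of the Kusuoka representation must be justified. The statement of Theorem~\ref{Kusuoka_Welfare} is written for the optimized value, so I would note that its argument rests on the pointwise identity for every bounded $W$.

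The main obstacle is the preliminary lemma linking SOSD to $\operatorname{AV@R}$ ordering in full generality. If the paper's definition of $\succeq_2$ is via increasing concave utilities, I would go through the equivalent form $\mathbb{E}[(t-W_1)_+]\le\mathbb{E}[(t-W_2)_+]$ for all $t$. I would then invoke convex duality between this form and the Rockafellar--Uryasev formula: $\alpha\operatorname{AV@R}_\alpha(W)=\sup_\lambda\{\alpha\lambda-\mathbb{E}[(\lambda-W)_+]\}$ is the concave conjugate of $t\mapsto\mathbb{E}[(t-W)_+]$. Because conjugation is order-reversing and involutive on closed convex functions, the ordering of the shortfall functions for all $t$ is equivalent to the ordering of $\operatorname{AV@R}_\alpha$ for all $\alpha$.
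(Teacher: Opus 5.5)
Your proposal is correct and matches the paper's proof step for step. The paper links $\succeq_2$ to the ordering of $\operatorname{AV@R}_\alpha$ for all $\alpha\in(0,1]$ by citing \citet[Theorem~2.57]{FollmerSchied_2025}, which your conjugate-duality sketch via the Rockafellar--Uryasev formula reproves; it obtains (2)$\Rightarrow$(1) by taking $\rho=\operatorname{AV@R}_\alpha$, and (1)$\Rightarrow$(2) by applying Theorem~\ref{Kusuoka_Welfare}(i) pointwise and taking the infimum over $\mu$. One small fix: ``finite-valued and concave on $L_p$, hence continuous'' does not hold in a general Banach space, so justify upper semicontinuity of $\operatorname{AV@R}_\alpha$ instead through its monotonicity on the Banach lattice $L_p$, or directly through its $1/\alpha$-Lipschitz continuity in the $L_1$ norm.
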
%
	
	Proposition \ref{Prop_SOSD} demonstrates that the choice of risk functional is irrelevant whenever one policy's welfare distribution second-order stochastically dominates another's: so the ranking is robust to the specification of $\rho$. However, when distributions cross, the choice of $\rho$ is what resolves the comparison — it encodes the planner's trade-off between lower-tail and average welfare, playing a similar role as the inequality-aversion parameter in \cite{atkinson1970measurement}.
	
	\section{Regret Guarantees}\label{s4}
	
	Having characterized the population problem, we now turn to finite-sample learning.\footnote{Throughout this section we assume the class $\mathcal{F}$ is pointwise measurable \citep[Ex.~2.3.4]{VaartWellner1996}, so that the suprema below and the empirical maximizer $\hat f_n$ are measurable. This is a regularity condition satisfied by the policy classes used in practice and imposes no substantive restriction; see \citet{Kiatagawa_Tetenov_2018}.} The population objective $\rho\bigl(\mathcal{W}(f,Z)\bigr)$ depends on the unknown distribution of $(Y_0,Y_1,D,Z)$ and is infeasible to implement directly, so the planner instead optimizes the empirical objective computed from an i.i.d.\ sample; our goal is to bound the resulting welfare loss and establish finite-sample guarantees for the empirical policy.
	
	Given an i.i.d.\ sample $\{Z_i\}_{i=1}^n$, let $\mathbb{P}_n =n^{-1}\sum_{i=1}^n\delta_{Z_i}$
	denote the empirical distribution, and define the empirical conditional average outcome by
	\[
	\mathcal{W}_n(f,Z)
	\triangleq
	\mathbb{E}_{\mathbb{P}_n}[Y(\pi)\mid Z].
	\]
	\textcolor{black}{Throughout this section we treat the conditional welfare function $\mathcal{W}(f,\cdot)$ as known, so that the source of sampling variation is the empirical distribution of $Z$. Extending the analysis to estimated MTE components requires accounting for first-stage estimation error and is left to future work.}
	
	The \emph{empirical risk-averse maximizer} is
	\begin{equation}\label{ERAM_General}
		\hat{f}_n
		\triangleq
		\arg\max_{f\in\mathcal{F}}
		\rho\bigl(\mathcal{W}_n(f,Z)\bigr).
	\end{equation}
	
	\textcolor{black}{We assume the empirical maximum is attained, which holds automatically when, for example, $\Pi$ is a VC class.} The corresponding \emph{risk-averse regret} is
	\begin{equation}\label{Regret_General}
		\mathcal{R}_\rho(\hat{f}_n)
		\triangleq
		\sup_{f\in\mathcal{F}}
		\rho\bigl(\mathcal{W}(f,Z)\bigr)
		-
		\rho\bigl(\mathcal{W}(\hat{f}_n,Z)\bigr).
	\end{equation}
	When $\rho=\mathbb{E}$, \eqref{ERAM_General} and \eqref{Regret_General} reduce to the empirical welfare maximization problem and regret criterion studied by \cite{Manski2004} and \cite{Kiatagawa_Tetenov_2018}.
	
	Bounding this regret requires two ingredients: a measure of the statistical complexity of the induced welfare class $\mathcal{F}$, and a regularity condition on the risk functional $\rho$ itself. We introduce each in turn.
	
	\subsection{Rademacher Complexity}\label{sec:rademacher_complexity}

	The empirical Rademacher complexity measures the extent to which a function class can correlate with random sign noise, and provides the standard data-dependent measure of the statistical complexity of $\mathcal{F}$ \citep{Shalev-Shwartz_Ben-David_2014}. It is defined as
	
	\begin{equation}\label{Rademacher_def}
		\mathfrak{R}_n(\mathcal{F};Z^n)
		\triangleq
		\mathbb{E}_{\sigma^n}
		\left[
		\sup_{f\in\mathcal{F}}
		\frac{1}{n}
		\sum_{i=1}^n
		\sigma_i f(Z_i)
		\right],
	\end{equation}
	where $\{\sigma_i\}_{i=1}^n$ are i.i.d.\ Rademacher random variables satisfying $
	\mathbb{P}(\sigma_i=1)
	=\\
	\mathbb{P}(\sigma_i=-1)
	=
	\frac{1}{2},
	$
	and $\sigma^n=(\sigma_1,\ldots,\sigma_n)$.
	
	\subsection{A Regularity Condition on the Risk Functional}\label{sec:lipschitz_condition}
	
	The second ingredient is a regularity condition on $\rho$. Every coherent risk measure is Lipschitz with constant $L=1$ under the sup norm, which guarantees continuity of $\rho$ on $L^\infty$ but is too coarse to drive a finite-sample rate: the empirical criterion $\rho(\mathcal{W}_n(f,Z))$ evaluates $\rho$ under the empirical measure $\mathbb{P}_n$ rather than perturbing a fixed random variable under $\mathbb{P}$, so what governs the rate instead is the variational representation of $\rho$ developed in Section~\ref{s3}. For $\varphi$-risk functionals this yields the risk-measure-specific constant $L_\varphi$ (Table~\ref{tab:lipschitz}); for general coherent measures reached through the Kusuoka representation, it is the reciprocal of the minimal $\alpha$ in its support. Corollary~\ref{Cor_specific} reports the $L_\varphi$ for the risk measures considered here.
	
	\smallskip
	
	\begin{rem}\label{rem:diff_convention}
		Throughout this section we assume that $\varphi^*$ is differentiable on $\R$. This is for expositional convenience only: since $\varphi^*$ is convex, its one-sided derivatives exist everywhere and are non-decreasing, and all results below hold verbatim with $\varphi^{*\prime}$ replaced by the right derivative $\varphi^{*\prime}_+$, under the weaker requirement that $\varphi^*$ be Lipschitz on the relevant range. In particular, the results apply to $\varphi_{AV@R_\beta}$, whose conjugate is Lipschitz with constant $1/\beta$ though not everywhere differentiable.
	\end{rem}

	\begin{table}[h]
		\centering
		\begin{tabular}{lll}
			\toprule
			Risk measure $\rho$ & Definition & Regret-rate constant $L_\varphi$ \\
			\midrule
			AV$@$R$_\beta$ & $\frac{1}{\beta}\int_0^\beta F_T^{-1}(\tau)\,d\tau$
			& $1/\beta$, $\beta\in(0,1]$ \\[4pt]
			Mean--semideviation, $p=1$\footnotemark & $\mathbb{E}[T]-c\,\mathbb{D}_1^-[T]$
			& \tiny{$2/\alpha_{\min}$, $\alpha_{\min}=\inf_f\Pr\{\mathcal{W}(f,Z)<\EE[\mathcal{W}(f,Z)]\}$} \\[4pt]
			Entropic risk  & $-\frac{1}{\gamma}\ln\mathbb{E}[e^{-\gamma T}]$
			& $e^{2\gamma(B+M)}$ \\[4pt]
			Risk-neutral ($\rho=\mathbb{E}$) & $\mathbb{E}[T]$
			& $1$ \\
			\bottomrule
		\end{tabular}
		\caption{Regret-rate constants for leading risk measures: the multiplicative factor $L_\varphi$ by which each measure's finite-sample regret bound scales relative to the risk-neutral rate, as demonstrated in Corollary~\ref{Cor_specific} and Appendix C. This is \emph{not} the sup-norm Lipschitz modulus which equals $1$ for every coherent risk measure and therefore does not distinguish between distinct risk measures.}
		\label{tab:lipschitz}
	\end{table}
	\footnotetext{Stated for $p=1$ and $c >0$; the $p\neq1$ case is genuinely open. The $p=1$ and $c >0$ case also requires an additional local regularity assumption for verification. (see Appendix~C)}
	
	\subsection{Contraction and Symmetrization}\label{sec:contraction_symmetrization}
	
	Because $\rho_\varphi$ is a nonlinear functional of the welfare distribution, it cannot be handled by the pointwise contraction arguments used in risk-neutral policy learning. The variational representation of Proposition~\ref{Cor_program_varphi} restores tractability by replacing the risk functional with an expectation over the convex conjugate $\varphi^*$, to which the classical contraction principle applies. We first show that the optimization over the dual variable $\eta$ can be confined to a bounded interval, and then bound the Rademacher complexity of the transformed class $\varphi^*\circ\mathcal{G}$ by that of the induced welfare class $\mathcal{F}$.
	
	\begin{lem}[Range Restriction]\label{Lem_OCE_Range}
		Let $\varphi\in\Phi$. Suppose that $Z\in\mathcal{Z}$ satisfies $a\le Z\le b$ almost surely, for constants $a\le b$. Then
		\begin{equation}\label{Range_ineq}
			\sup_{\eta\in\mathbb{R}}
			\left\{
			\eta
			-
			\mathbb{E}\!\left[\varphi^*(\eta-Z)\right]
			\right\}
			=
			\sup_{\eta\in[a,b]}
			\left\{
			\eta
			-
			\mathbb{E}\!\left[\varphi^*(\eta-Z)\right]
			\right\}.
		\end{equation}
	\end{lem}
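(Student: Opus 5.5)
The plan is to study the scalar function
\[
g(\eta)\triangleq \eta-\mathbb{E}\!\left[\varphi^*(\eta-Z)\right],\qquad \eta\in\mathbb{R}.
\]
I will show that $g$ is nondecreasing on $(-\infty,a]$ and nonincreasing on $[b,\infty)$. Then every $\eta<a$ satisfies $g(\eta)\le g(a)$ and every $\eta>b$ satisfies $g(\eta)\le g(b)$. Since $a,b\in[a,b]$, the supremum over $\mathbb{R}$ equals the supremum over $[a,b]$, which is \eqref{Range_ineq}. Concavity of $g$ is not needed; monotonicity on the two tails suffices.

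\emph{Step 1: properties of $\varphi^*$ from $\varphi\in\Phi$.}
\begin{itemize}
\item Since $\varphi^*(s)=\sup_{t\ge0}\{st-\varphi(t)\}$, the function $\varphi^*$ is convex and nondecreasing.
\item Since $\inf_t\varphi(t)=0=\varphi(1)$, we get $\varphi^*(0)=-\inf\varphi=0$.
\item Taking $t=1$ in the supremum gives $\varphi^*(s)\ge s$ for every $s$.
\item Because $1$ minimizes $\varphi$, we have $0\in\partial\varphi(1)$. By the conjugate subgradient relation, this is equivalent to $1\in\partial\varphi^*(0)$.
\end{itemize}
From these facts and convexity, chord slopes of $\varphi^*$ on $(-\infty,0]$ are at most $1$, and chord slopes on $[0,\infty)$ are at least $1$. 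Concretely, for $s_1<s_2\le 0$ we have $\varphi^*(s_2)-\varphi^*(s_1)\le s_2-s_1$, and for $0\le s_1<s_2$ we have $\varphi^*(s_2)-\varphi^*(s_1)\ge s_2-s_1$. Working with chord slopes rather than derivatives avoids any differentiability issue (cf. Remark~\ref{rem:diff_convention}). Note also that $\varphi^*$ is finite on $(-\infty,0]$, since $s\le\varphi^*(s)\le\varphi^*(0)=0$ there.

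\emph{Step 2: the two monotonicity claims.}
\begin{itemize}
\item Take $\eta_1<\eta_2\le a$. Since $Z\ge a$ a.s., both $\eta_1-Z$ and $\eta_2-Z$ lie in $(-\infty,0]$. The left-slope bound then gives, pointwise a.s., $\varphi^*(\eta_2-Z)-\varphi^*(\eta_1-Z)\le\eta_2-\eta_1$.
\item Both terms are bounded here, since $Z$ is bounded and $\varphi^*$ is finite on $(-\infty,0]$. Taking expectations is therefore legitimate and yields $g(\eta_2)\ge g(\eta_1)$.
\item Symmetrically, take $b\le\eta_1<\eta_2$. Then $\eta_i-Z\ge0$ a.s., and the right-slope bound gives $\varphi^*(\eta_2-Z)-\varphi^*(\eta_1-Z)\ge\eta_2-\eta_1$, hence $g(\eta_2)\le g(\eta_1)$.
\end{itemize}

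\emph{Main obstacle: possibly infinite values for $\eta>b$.} For $\eta>b$, the argument $\eta-Z$ is nonnegative and $\varphi^*$ may equal $+\infty$ there; for instance, KL-type conjugates grow fast, and other choices of $\varphi$ can have bounded domain. I would handle this with the convention that $g(\eta)=-\infty$ whenever $\mathbb{E}[\varphi^*(\eta-Z)]=+\infty$.
\begin{itemize}
\item The pointwise inequality $\varphi^*(\eta_2-Z)\ge\varphi^*(\eta_1-Z)+(\eta_2-\eta_1)$ still holds in the extended reals.
\item Consequently, if $g(\eta_1)=-\infty$ then $g(\eta_2)=-\infty$ as well, so the monotonicity on $[b,\infty)$ survives.
\item The value $g(b)$ itself may be $-\infty$. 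Even so, the conclusion is unaffected: on $[b,\infty)$ the function $g$ never exceeds $g(b)$, and the supremum over $[a,b]$ is at least $g(a)>-\infty$.
\end{itemize}
This completes the argument.
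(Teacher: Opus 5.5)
Your proof is correct and follows the same route as the paper's. Both show that $\eta\mapsto\eta-\mathbb{E}[\varphi^*(\eta-Z)]$ is nondecreasing on $(-\infty,a]$ and nonincreasing on $[b,\infty)$, using $0\in\partial\varphi(1)\Leftrightarrow 1\in\partial\varphi^*(0)$ together with the monotone slopes of $\varphi^*$. The only difference is technical: you use chord slopes and extended-real values instead of the paper's derivative $\varphi^{*\prime}$ with a dominated-convergence interchange. This makes your version valid without differentiability, for instance for $\mathrm{AV@R}_\beta$, where $\partial\varphi^*(0)=[0,1/\beta]$ and the paper's claim $\varphi^{*\prime}(0)=1$ has to be read as $\varphi^{*\prime}_-(0)\le 1\le\varphi^{*\prime}_+(0)$. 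It also stays valid when $\varphi^*=+\infty$ on part of the positive axis.
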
%
	
	Our key Assumptions ~\ref{Assump_MTE},~\ref{Assumption2_OCE_Bounded} and \ref{Ass_Y0_Bounded} ensure that, for every $f\in\mathcal{F}$, $\mathcal{W}(f,Z)\in[a,b]$ almost surely,
	where $a=-(B+M),$ $b=B+M,$ and we define $K\triangleq b-a=2(B+M).$\footnote{The constant $K$ is specific to the finite-sample analysis developed in this section and should not be confused with the identification constant $K_0$ introduced in Remark~\ref{Rem_K}.} The same bounds apply under the empirical distribution: since $\mathbb{P}_n$ is supported on the sample points $\{Z_i\}_{i=1}^n$ and $\mathcal{W}(f,Z_i)\in[a,b]$ for each $i$, we also have $\mathcal{W}_n(f,Z)\in[a,b]$ almost surely for every $f\in\mathcal{F}$ and every $n$. Lemma~\ref{Lem_OCE_Range} may therefore be applied to the population and empirical objects with the same restriction $\eta\in[a,b]$.
	\begin{lem}[OCE Contraction]\label{Lem_Contraction}
		Let Assumptions~\ref{Assump_MTE}, \ref{Assumption2_OCE_Bounded}, and~\ref{Ass_Y0_Bounded} hold, and $\varphi\in\Phi$. Define the Lipschitz constant over the restricted space: 
		\[
		L_\varphi
		\triangleq
		\sup_{s\in[-K,K]}
		\varphi^{*\prime}(s)
		<
		\infty.
		\]
		Then
		\begin{equation}\label{Contraction_ineq}
			\mathfrak{R}_n(\{\varphi^*\circ g:g\in\mathcal G\};Z^n)
			\le
			L_\varphi\,
			\mathfrak{R}_n(\mathcal G;Z^n)
			\le
			L_\varphi
			\left(
			\mathfrak{R}_n(\mathcal F;Z^n)
			+
			\frac{K}{2\sqrt n}
			\right).
		\end{equation}
		\noindent where 
		$\mathcal{G}
		\triangleq
		\{\eta-\mathcal{W}(f,\cdot):
		f\in\mathcal{F},\;
		\eta\in[a,b]\}$.
	\end{lem}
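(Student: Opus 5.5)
We prove the two inequalities in \eqref{Contraction_ineq} separately.

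\textbf{First inequality: Ledoux--Talagrand contraction.} Every $g\in\mathcal G$ has the form $g(z)=\eta-\mathcal W(f,z)$ with $\eta\in[a,b]$ and $\mathcal W(f,z)\in[a,b]$. Hence $g$ takes values in $[a-b,\,b-a]=[-K,K]$.

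On $[-K,K]$ the function $\varphi^*$ is convex and nondecreasing. It is nondecreasing because $\varphi^*(s)=\sup_{t\ge0}\{st-\varphi(t)\}$ with $t\ge 0$. So $0\le\varphi^{*\prime}\le L_\varphi$ there, and by the mean value theorem (or right derivatives, per Remark~\ref{rem:diff_convention}) $\varphi^*$ is $L_\varphi$-Lipschitz on $[-K,K]$.

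We then apply the Ledoux--Talagrand contraction lemma in the form of \citet[Lemma~26.9]{Shalev-Shwartz_Ben-David_2014}. This requires only an $L$-Lipschitz map on the range of the class. One can extend $\varphi^*|_{[-K,K]}$ to an $L_\varphi$-Lipschitz function on $\mathbb R$ by constant continuation outside the interval. This does not change $\varphi^*\circ g$ for $g\in\mathcal G$. The lemma then gives $\mathfrak R_n(\varphi^*\circ\mathcal G;Z^n)\le L_\varphi\mathfrak R_n(\mathcal G;Z^n)$.

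\textbf{Second inequality: decompose $\mathcal G$.} Write $\mathcal W(f,z)=m_0(z)+f(z)$, where $m_0(z)=\EE[Y_0\mid Z=z]$ does not depend on $f$. Then $g=\eta-m_0-f$, and
\[
\sup_{g\in\mathcal G}\frac1n\sum_i\sigma_i g(Z_i)\le \sup_{\eta\in[a,b]}\frac{\eta}{n}\sum_i\sigma_i-\frac1n\sum_i\sigma_i m_0(Z_i)+\sup_{f\in\mathcal F}\frac1n\sum_i\sigma_i(-f(Z_i)).
\]
We take $\mathbb E_\sigma$ of each term:
\begin{itemize}
\item The $m_0$ term has mean zero.
\item By symmetry of $\sigma$, the last term equals $\mathfrak R_n(\mathcal F;Z^n)$.
\item For the $\eta$ term, $\sup_{\eta\in[a,b]}\eta S=\tfrac{a+b}{2}S+\tfrac{b-a}{2}|S|$ with $S=n^{-1}\sum\sigma_i$. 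Since $\EE S=0$, its mean is $\tfrac K2\EE|S|\le \tfrac K2(\EE S^2)^{1/2}=\tfrac{K}{2\sqrt n}$ by Jensen.
\end{itemize}
Summing these gives the claimed bound.

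\textbf{Main obstacle.} The delicate step is the first one: the contraction constant must be justified using only the restricted range $[-K,K]$. This is exactly why $L_\varphi$ is defined as a supremum over $[-K,K]$. It is also why Lemma~\ref{Lem_OCE_Range} is needed to confine $\eta$ to $[a,b]$; without that restriction, $\varphi^*$ (for example, the exponential in the entropic case) would not be globally Lipschitz.
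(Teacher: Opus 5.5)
Your proof is correct and follows essentially the same route as the paper. The first inequality uses the contraction lemma of \citet[Lemma~26.9]{Shalev-Shwartz_Ben-David_2014} on the range $[-K,K]$, and the second splits $g=\eta-\EE[Y_0\mid Z]-f$ and bounds the $\eta$-term by Jensen via $\EE_{\sigma^n}|n^{-1}\sum_i\sigma_i|\le n^{-1/2}$. Your additional remarks only make explicit what the paper's shorter argument leaves implicit: that $\varphi^*$ is nondecreasing (so $\sup_{[-K,K]}\varphi^{*\prime}$ really is a Lipschitz constant), that it extends by constants outside $[-K,K]$, and the identity $\sup_{\eta\in[a,b]}\eta S=\tfrac{a+b}{2}S+\tfrac{K}{2}|S|$, which avoids using $a=-b$.
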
%
	
	The contraction argument applies to the scalar convex conjugate $\varphi^*$ --- Lipschitz on $[-K,K]$ --- rather than to $\rho_\varphi$ itself, which is a nonlinear functional of the distribution of $\mathcal{W}(f,Z)$ and cannot be evaluated pointwise.
	\begin{lem}[Symmetrization for $\varphi$-Divergence Risk Measures]\label{Lem_Symmetrization}
		Let Assumptions~\ref{Assump_MTE},~\ref{Assumption2_OCE_Bounded}, and \ref{Ass_Y0_Bounded} hold. Let
		$\rho=\rho_\varphi$, where $\varphi\in\Phi$ has the associated restricted Lipschitz constant $L_\varphi$ as given by Lemma~\ref{Lem_Contraction}. Then
		\begin{equation}\label{Symmetrization_ineq}
			\mathbb{E}
			\left[
			\sup_{f\in\mathcal F}
			\left\{
			\rho_\varphi(\mathcal W(f,Z))
			-
			\rho_\varphi(\mathcal W_n(f,Z))
			\right\}
			\right]
			\le
			2L_\varphi
			\left(
			\mathbb{E}\!\left[
			\mathfrak R_n(\mathcal F;Z^n)
			\right]
			+
			\frac{K}{2\sqrt n}
			\right).
		\end{equation}
	\end{lem}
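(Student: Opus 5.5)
We will combine the variational representation of Proposition~\ref{Cor_program_varphi} with the range restriction of Lemma~\ref{Lem_OCE_Range}. For every $f\in\mathcal F$, both $\mathcal W(f,Z)$ and $\mathcal W_n(f,Z)$ lie in $[a,b]$. Hence both $\rho_\varphi(\mathcal W(f,Z))$ and $\rho_\varphi(\mathcal W_n(f,Z))$ can be written as suprema over $\eta\in[a,b]$ of $\eta-\mathbb E[\varphi^*(\eta-\mathcal W(f,Z))]$ and of $\eta-\mathbb E_{\mathbb P_n}[\varphi^*(\eta-\mathcal W(f,Z))]$, respectively. Here $\mathbb E_{\mathbb P_n}[h(Z)]=n^{-1}\sum_i h(Z_i)$, since the conditional welfare function is treated as known.

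The key step is the elementary inequality $\sup_\eta A(\eta)-\sup_\eta B(\eta)\le\sup_\eta\{A(\eta)-B(\eta)\}$. Applying it with $A(\eta)=\eta-\mathbb E[\varphi^*(\eta-\mathcal W)]$ and $B(\eta)=\eta-\mathbb E_{\mathbb P_n}[\varphi^*(\eta-\mathcal W)]$, the terms in $\eta$ cancel, and we obtain
\[
\sup_{f\in\mathcal F}\{\rho_\varphi(\mathcal W(f,Z))-\rho_\varphi(\mathcal W_n(f,Z))\}\le\sup_{g\in\mathcal G}\Big\{\frac1n\sum_{i=1}^n\varphi^*(g(Z_i))-\mathbb E[\varphi^*(g(Z))]\Big\},
\]
with $\mathcal G$ as in Lemma~\ref{Lem_Contraction}. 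The right-hand side is an ordinary uniform deviation of the empirical process indexed by the class $\varphi^*\circ\mathcal G$.

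Next we take expectations and apply the standard symmetrization argument. We introduce a ghost sample $Z'^n$, use Jensen's inequality to pull the population mean outside the supremum, and insert Rademacher signs, which is valid by exchangeability of $(Z_i,Z_i')$. This bounds the expected supremum by $2\,\mathbb E[\mathfrak R_n(\varphi^*\circ\mathcal G;Z^n)]$. We then invoke Lemma~\ref{Lem_Contraction}, which is applicable because $g=\eta-\mathcal W(f,\cdot)\in[-K,K]$ so that $\varphi^*$ is $L_\varphi$-Lipschitz there. It gives
\[
\mathfrak R_n(\varphi^*\circ\mathcal G;Z^n)\le L_\varphi\Big(\mathfrak R_n(\mathcal F;Z^n)+\frac{K}{2\sqrt n}\Big).
\]
Taking expectations then yields \eqref{Symmetrization_ineq}.

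We expect the main obstacle to be measurability and the finiteness of the objects involved, rather than any step of the algebra. Three points need checking. First, $\varphi^*$ must be finite on $[-K,K]$, which we would justify from the finiteness of $L_\varphi$ together with Remark~\ref{rem:diff_convention}. Second, the supremum over the two-parameter class $\mathcal G$ must be measurable, which follows from pointwise measurability of $\mathcal F$ and continuity in $\eta$, so that $\eta$ may be restricted to the rationals. Third, the sign of $\varphi^{*\prime}$ must be handled: since $\varphi^*$ is nondecreasing (it is a conjugate over $t\ge0$), its Lipschitz constant on $[-K,K]$ is indeed $\sup\varphi^{*\prime}=L_\varphi$.
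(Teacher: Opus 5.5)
Your proposal is correct and follows the paper's proof essentially step for step. Both arguments write the OCE representation restricted to $\eta\in[a,b]$ via Lemma~\ref{Lem_OCE_Range}, use $\sup_\eta A-\sup_\eta B\le\sup_\eta(A-B)$ to cancel the $\eta$ terms and reduce to the linear empirical process indexed by $\varphi^*\circ\mathcal G$, then apply ghost-sample symmetrization and Lemma~\ref{Lem_Contraction}. Your added remarks on measurability, on finiteness of $\varphi^*$ on $[-K,K]$, and on monotonicity of $\varphi^*$ (so that its Lipschitz constant is $\sup\varphi^{*\prime}$) are sound points that the paper leaves implicit.
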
%
	
	The proof symmetrizes the empirical average $\mathbb{E}_{\mathbb{P}_n}[\varphi^*\circ g]$ rather than the nonlinear functional $\rho_\varphi(\mathcal{W}_n(f,Z))$ directly, since the latter would require an unbiasedness property that generally fails for nonlinear risk functionals, whereas the former is a genuine empirical average and admits classical symmetrization --- the standard approach for optimized certainty equivalents \citep[Lemma~3]{lee2020learning}. Since $\operatorname{AV@R}_\beta$ and the entropic risk measure are both $\varphi$-divergence risk measures (Remark~\ref{Rem_varphi_examples}), the lemma applies to both immediately, with constants reported in Corollary~\ref{Cor_specific}.
	\subsection{Extension to General Coherent Risk Measures via the Kusuoka Representation}\label{s4_kusuoka}
	
	For general law-invariant coherent risk measures, Theorem~\ref{Kusuoka_Welfare}(ii) reduces the problem to a family of $\operatorname{AV@R}_\alpha$ functionals via the Kusuoka representation; since each $\operatorname{AV@R}_\alpha$ is itself a $\varphi$-divergence risk measure with a globally Lipschitz conjugate (Remark~\ref{Rem_varphi_examples}), Lemma~\ref{Lem_Symmetrization} applies to every component without further assumptions. The resulting bound is therefore governed by the Kusuoka mixing measure's weight on small $\alpha$: greater weight on the lower tail yields a more conservative criterion, but also larger estimation error.
	\begin{ass}[Kusuoka Tail Control]\label{Ass_KusuokaTail}
		There exists a constant $\alpha_{\min}\in(0,1]$ such that, for every
		$f\in\mathcal{F}$ and every sample size $n$, every minimizing (or
		asymptotically minimizing) probability measure $\mu$ in the Kusuoka
		representation~\eqref{Coh_Convex_Rep_Program}, evaluated at either
		$\mathcal{W}(f,Z)$ or $\mathcal{W}_n(f,Z)$, satisfies
		\[
		\operatorname{supp}(\mu)\subseteq[\alpha_{\min},1].
		\]
	\end{ass}
	\textcolor{black}{Assumption~\ref{Ass_KusuokaTail} is data dependent; a sufficient alternative is that $\rho$'s Kusuoka representation~\eqref{Coh_Convex_Rep_Program} itself use only mixing measures supported on $[\alpha_{\min},1]$, which holds for $\operatorname{AV@R}_\beta$ ($\alpha_{\min}=\beta$) and, more generally, spectral risk measures with bounded risk spectrum \citep{Acerbi2002}; see Online Appendix C for the mean-semideviation case.}
	\begin{prop}[Kusuoka Symmetrization]\label{Prop_Kusuoka_Symmetrization}
		Let Assumptions~\ref{Assump_MTE},
		\ref{Assumption2_OCE_Bounded},
		\ref{Ass_Y0_Bounded}, and~\ref{Ass_KusuokaTail} hold. Let
		$\rho:\mathcal{Z}\to\mathbb{R}$ be a coherent, law-invariant risk measure.
		Then
		\begin{equation}\label{Kusuoka_Symmetrization_ineq}
			\mathbb{E}
			\left[
			\sup_{f\in\mathcal F}
			\left\{
			\rho(\mathcal W(f,Z))
			-
			\rho(\mathcal W_n(f,Z))
			\right\}
			\right]
			\le
			\frac{2}{\alpha_{\min}}
			\left(
			\mathbb{E}\!\left[
			\mathfrak R_n(\mathcal F;Z^n)
			\right]
			+
			\frac{K}{2\sqrt n}
			\right)
			+
			\epsilon_n(\alpha_{\min}),
		\end{equation}
		where
		\[
		\epsilon_n(\alpha) = \frac{K}{\alpha}\sqrt{\frac{2\pi}{n}} = O(n^{-1/2})
		\]
	\end{prop}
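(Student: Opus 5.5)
We reduce the general coherent case to the $\operatorname{AV@R}_\alpha$ case through the Kusuoka representation of Theorem~\ref{Kusuoka_Welfare}(ii), and then invoke Lemma~\ref{Lem_Symmetrization} for each component. Write $\rho(T)=\inf_{\mu\in\mathcal{M}_\rho}\int_0^1\operatorname{AV@R}_\alpha(T)\,\mu(d\alpha)$. For fixed $f$, let $\mu_n$ be a (near-)minimizing measure for $\mathcal{W}_n(f,Z)$. By Assumption~\ref{Ass_KusuokaTail}, $\mu_n$ is supported on $[\alpha_{\min},1]$. Since $\mu_n$ is feasible for the population infimum,
\[
\rho(\mathcal W(f,Z))-\rho(\mathcal W_n(f,Z))\le \int_{\alpha_{\min}}^1\bigl\{\operatorname{AV@R}_\alpha(\mathcal W(f,Z))-\operatorname{AV@R}_\alpha(\mathcal W_n(f,Z))\bigr\}\,\mu_n(d\alpha)+\delta,
\]
with $\delta\downarrow 0$ arbitrary. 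This is bounded by $\sup_{\alpha\in[\alpha_{\min},1]}\{\operatorname{AV@R}_\alpha(\mathcal W(f,Z))-\operatorname{AV@R}_\alpha(\mathcal W_n(f,Z))\}$. Taking the supremum over $f$ and then expectations, it suffices to bound $\mathbb{E}\sup_{f\in\mathcal F,\ \alpha\in[\alpha_{\min},1]}\{\cdots\}$.

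Next, use the variational form of Proposition~\ref{Cor_program_varphi} with $\varphi^*_\alpha(s)=s_+/\alpha$, restricted to $\eta\in[a,b]$ by Lemma~\ref{Lem_OCE_Range}. The difference of two suprema over $\eta$ is at most the supremum of differences, giving
\[
\sup_{f,\alpha,\eta}\Bigl\{\tfrac1\alpha(\mathbb{E}_{\mathbb{P}_n}-\mathbb{E})[(\eta-\mathcal W(f,Z))_+]\Bigr\}\le \frac{1}{\alpha_{\min}}\sup_{f,\eta}\bigl|(\mathbb{E}_{\mathbb{P}_n}-\mathbb{E})[(\eta-\mathcal W(f,Z))_+]\bigr|,
\]
because $1/\alpha\le 1/\alpha_{\min}$. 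This removes the dependence on $\alpha$ except through a scalar factor. The main obstacle is the absolute value and the sign: Lemma~\ref{Lem_Symmetrization} controls only a one-sided supremum, whereas pulling out the factor $1/\alpha$ uniformly requires the supremum of the deviation to be nonnegative. I would therefore bound the one-sided term directly and handle the event where it is negative separately, since on that event the bound is trivial because the left side is nonpositive.

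For the one-sided term, standard symmetrization of the genuine empirical average $\mathbb{E}_{\mathbb{P}_n}[(\eta-\mathcal W)_+]$, together with Lemma~\ref{Lem_Contraction} applied to the 1-Lipschitz map $s\mapsto s_+$, gives $2(\mathbb{E}\mathfrak R_n(\mathcal F;Z^n)+K/(2\sqrt n))$, matching Lemma~\ref{Lem_Symmetrization} with $L_\varphi=1/\alpha_{\min}$. The extra term $\epsilon_n(\alpha_{\min})$ absorbs two residuals. The first is the cost of interchanging the $\alpha$-supremum with the expectation. The second is the centering of $\sup_{f,\eta}$ around its mean, needed to pass from one-sided to two-sided control. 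I would bound this centering by a bounded-differences/sub-Gaussian tail estimate: each summand lies in $[0,K]$, so the deviation has sub-Gaussian parameter $K/\sqrt n$. Integrating the tail, $\int_0^\infty 2e^{-nt^2/(2K^2)}\,dt=K\sqrt{2\pi/n}$, and multiplying by $1/\alpha_{\min}$ yields $\epsilon_n(\alpha_{\min})=\frac{K}{\alpha_{\min}}\sqrt{2\pi/n}$. Summing the symmetrization bound and this concentration residual gives \eqref{Kusuoka_Symmetrization_ineq}.
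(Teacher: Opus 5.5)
Your argument is correct. It shares the paper's skeleton: a near-minimizing Kusuoka measure for $\mathcal{W}_n(f,Z)$ supported on $[\alpha_{\min},1]$, the OCE form with $\varphi^*_\alpha(s)=s_+/\alpha$ on $\eta\in[a,b]$, the supremum-of-differences step, and an $x^+=x+x^-$ split with a concentration residual. The difference is where you remove the free scale $1/\alpha$.

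Set $X_{f,\eta}\triangleq(\mathbb{E}_{\mathbb{P}_n}-\mathbb{E})[(\eta-\mathcal{W}(f,Z))_+]$ and $S\triangleq\sup_{f,\eta}X_{f,\eta}$. You use $\sup_{f,\alpha,\eta}\alpha^{-1}X_{f,\eta}\le\alpha_{\min}^{-1}S^+$ pathwise in the sample, symmetrize $\mathbb{E}S$ for the $\alpha$-free class $\mathcal{H}$, and control $\mathbb{E}S^-$ by concentration over $Z^n$. The paper instead symmetrizes the enlarged class $\widetilde{\mathcal{G}}$ indexed by $(f,\eta,\alpha)$. Only then does it eliminate $\kappa=1/\alpha$ inside the Rademacher average, bounding $\mathbb{E}_{\sigma^n}[(\sup_h S_h)^-]$ by McDiarmid in the signs.

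Your ordering keeps $\alpha$ out of every Rademacher complexity. It also spares the residual the symmetrization factor $2$, which is why your conservative sub-Gaussian parameter $K/\sqrt n$ still lands exactly on $\epsilon_n(\alpha_{\min})$. The paper's ordering buys the ability to reuse the template of Lemma~\ref{Lem_Symmetrization} verbatim on a larger class.

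Some clean-ups are needed:
\begin{itemize}
\item There is only one residual, $\mathbb{E}S^-$. Your ``cost of interchanging the $\alpha$-supremum with the expectation'' is that same term, not a second one.
\item The one-sided $S^+$ form, not $\sup_{f,\eta}|X_{f,\eta}|$, is what yields the constant $2/\alpha_{\min}$. Controlling the absolute value through two one-sided symmetrizations would double it.
\item Replacing $S^-$ by the centered deviation requires $\mathbb{E}S\ge\sup_{f,\eta}\mathbb{E}X_{f,\eta}=0$.
\end{itemize}

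In fact no concentration inequality is needed. Since $S\ge X_{f_0,\eta_0}$ for any fixed index, $\mathbb{E}S^-\le\tfrac12\mathbb{E}|X_{f_0,\eta_0}|\le K/(4\sqrt n)$. This gives a residual $K/(4\alpha_{\min}\sqrt n)$, well below $\epsilon_n(\alpha_{\min})$.
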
%

	\begin{rem}
		Proposition~\ref{Prop_Kusuoka_Symmetrization} shows that, for general law-invariant coherent risk measures, the finite-sample bound depends on $1/\alpha_{\min}$---the smallest tail level receiving positive weight in the Kusuoka representation. This admits a natural interpretation: placing greater weight on increasingly adverse tail outcomes makes the welfare criterion more conservative, but also makes uniform estimation from finite samples more difficult. Consequently, stronger tail aversion entails a larger statistical complexity.
		
		When $\rho=\operatorname{AV@R}_\beta$, we have $\alpha_{\min}=\beta$, and the bound in \eqref{Kusuoka_Symmetrization_ineq} reduces to that of Lemma~\ref{Lem_Symmetrization}, since $L_\varphi=1/\beta$. Thus, the Kusuoka and $\varphi$-divergence analyses coincide whenever both representations are available.
	\end{rem}
	\subsection{Uniform Convergence}\label{sec:UC}
	
	Combining Lemmas~\ref{Lem_Contraction} and~\ref{Lem_Symmetrization} with Proposition~\ref{Prop_Kusuoka_Symmetrization}, we establish uniform convergence of the empirical welfare criterion to its population counterpart. This uniform convergence immediately yields finite-sample guarantees for the empirical risk-averse policy.
	
	\begin{prop}[Uniform Convergence]\label{Prop_UC}
		Let Assumptions~\ref{Assump_MTE},
		\ref{Assumption2_OCE_Bounded}, and~\ref{Ass_Y0_Bounded}
		hold.
		
		\textnormal{(i) $\varphi$-divergence risk measures:}
		Let
		$\rho=\rho_\varphi$, where $\varphi\in\Phi$ satisfies the conditions of
		Lemma~\ref{Lem_Symmetrization} with Lipschitz constant $L_\varphi$.
		Then, for every $\delta\in(0,1]$,
		\begin{equation}\label{UC_bound}
			\sup_{f\in\mathcal F}
			\left|
			\rho_\varphi(\mathcal W(f,Z))
			-
			\rho_\varphi(\mathcal W_n(f,Z))
			\right|
			\le
			L_\varphi
			\left(
			4\,\mathbb E[\mathfrak R_n(\mathcal F;Z^n)]
			+\frac{2K}{\sqrt n}
			+
			K\sqrt{\frac{\log(1/\delta)}{2n}}
			\right),
		\end{equation}
		with probability at least $1-\delta$.
		
		\medskip
		
		\textnormal{(ii) General coherent, law-invariant risk measures:}
		Suppose, in addition, that Assumption~\ref{Ass_KusuokaTail} holds, and let
		$\rho:\mathcal Z\to\mathbb R$ be a coherent, law-invariant risk functional.
		Then, for every $\delta\in(0,1]$,
		\begin{equation}\label{UC_bound_Kusuoka}
			\sup_{f\in\mathcal F}
			\left|
			\rho(\mathcal W(f,Z))
			-
			\rho(\mathcal W_n(f,Z))
			\right|
			\le
			\frac{1}{\alpha_{\min}}
			\left(
			4\,\mathbb E[\mathfrak R_n(\mathcal F;Z^n)]
			+\frac{2K}{\sqrt n}
			+
			K\sqrt{\frac{\log(1/\delta)}{2n}}
			\right)
			+
			2\epsilon_n,
		\end{equation}
		with probability at least $1-\delta$.
	\end{prop}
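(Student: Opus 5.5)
The plan is the standard route: bound the expectation of the uniform deviation with the symmetrization lemmas, then pass to a high-probability statement with McDiarmid's bounded-differences inequality. Both directions of the absolute value have to be handled, which is where the factor $4$ (rather than $2$) comes from.

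\emph{Part (i).} By Proposition~\ref{Cor_program_varphi} and Lemma~\ref{Lem_OCE_Range}, for every $f\in\mathcal F$ I would write
\[
\rho_\varphi(\mathcal W(f,Z))=\sup_{\eta\in[a,b]}\bigl\{\eta-\mathbb E[\varphi^*(\eta-\mathcal W(f,Z))]\bigr\},
\]
and similarly for $\mathcal W_n$ with $\mathbb E$ replaced by $\mathbb E_{\mathbb P_n}$. Since the difference of two suprema is bounded by the supremum of the differences,
\[
\sup_f\bigl|\rho_\varphi(\mathcal W)-\rho_\varphi(\mathcal W_n)\bigr|\le \Delta_n\triangleq\sup_{g\in\mathcal G}\bigl|(\mathbb P_n-\mathbb P)(\varphi^*\circ g)\bigr|,
\]
where $\mathcal G$ is the class from Lemma~\ref{Lem_Contraction}. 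Any $g\in\mathcal G$ takes values in $[-K,K]$, because $\eta\in[a,b]$ and $\mathcal W\in[a,b]$ give $\eta-\mathcal W\in[a-b,\,b-a]$.

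\emph{Step 1: expectation bound.} Classical symmetrization applies to $\mathbb E[\Delta_n]$ for each sign separately. Lemma~\ref{Lem_Symmetrization}'s argument together with the contraction Lemma~\ref{Lem_Contraction} gives, for each one-sided supremum, the bound $2L_\varphi(\mathbb E\mathfrak R_n(\mathcal F)+K/(2\sqrt n))$. The absolute value is at most the sum of the two one-sided suprema, so
\[
\mathbb E[\Delta_n]\le 4L_\varphi\,\mathbb E\mathfrak R_n(\mathcal F)+\frac{2L_\varphi K}{\sqrt n}.
\]

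\emph{Step 2: concentration.} Changing one $Z_i$ alters $\mathbb P_n(\varphi^*\circ g)$ by at most $\operatorname{osc}(\varphi^*)/n$ on $[-K,K]$, which is at most $L_\varphi\cdot 2K/n$. To obtain the constant $L_\varphi K\sqrt{\log(1/\delta)/(2n)}$ I would instead use the oscillation of $\varphi^*(\eta-\cdot)$ over an interval of length $K$ for fixed $\eta$, giving a bounded difference $c_i=L_\varphi K/n$. McDiarmid's inequality then yields $\Delta_n\le\mathbb E\Delta_n+L_\varphi K\sqrt{\log(1/\delta)/(2n)}$ with probability $1-\delta$.

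The absolute value does not need a union bound over the two signs, because $\Delta_n$ is a single supremum of absolute values and has the same bounded differences. Combining Steps 1 and 2 gives \eqref{UC_bound}.

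\emph{Part (ii).} Under Assumption~\ref{Ass_KusuokaTail}, Theorem~\ref{Kusuoka_Welfare}(ii) lets both $\rho(\mathcal W)$ and $\rho(\mathcal W_n)$ be written as infima over mixing measures $\mu$ supported on $[\alpha_{\min},1]$. The difference of the infima is then at most
\[
\sup_\mu\int\bigl|\operatorname{AV@R}_\alpha(\mathcal W)-\operatorname{AV@R}_\alpha(\mathcal W_n)\bigr|\,\mu(d\alpha)\le\sup_{\alpha\in[\alpha_{\min},1]}\bigl|\operatorname{AV@R}_\alpha(\mathcal W)-\operatorname{AV@R}_\alpha(\mathcal W_n)\bigr|.
\]
Each $\operatorname{AV@R}_\alpha$ is a $\varphi$-risk whose conjugate $\varphi^*(s)=s_+/\alpha$ is $1/\alpha_{\min}$-Lipschitz. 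Taking the enlarged class $\mathcal G'=\{\alpha^{-1}(\eta-\mathcal W(f,\cdot))_+:\ \alpha\ge\alpha_{\min},\ \eta\in[a,b],\ f\in\mathcal F\}$ in place of $\mathcal G$, I would repeat Steps 1 and 2.

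\emph{Main obstacle.} The main obstacle is that the supremum over $\alpha$ is an extra index not covered by Lemma~\ref{Lem_Contraction}. I would handle it the way Proposition~\ref{Prop_Kusuoka_Symmetrization} does. First reduce to the worst case $\alpha=\alpha_{\min}$ using the $1/\alpha_{\min}$ contraction applied to $(\eta-\mathcal W)_+$. Then absorb the residual dependence on $\alpha$ into the term $\epsilon_n$ appearing there, which yields at most $2\epsilon_n$ after accounting for both directions. Invoking that proposition for the expectation bound in each direction and McDiarmid with $c_i=K/(\alpha_{\min}n)$ then produces \eqref{UC_bound_Kusuoka}.
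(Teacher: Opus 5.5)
Your proof follows the paper's argument closely. You sum the two one-sided symmetrization bounds of Lemma~\ref{Lem_Symmetrization} to get the factor $4$. You then apply McDiarmid with $c_i=L_\varphi K/n$, using the length-$K$ range of $\mathcal W(f,\cdot)$ at fixed $\eta$. For (ii) you swap $L_\varphi\to 1/\alpha_{\min}$ through Proposition~\ref{Prop_Kusuoka_Symmetrization}, which produces the $2\epsilon_n$. Concentrating the linear process $\Delta_n$ rather than $H=\sup_f|\rho(\mathcal W(f,Z))-\rho(\mathcal W_n(f,Z))|$ is a harmless variation. In (ii) it even avoids needing bounded differences for the nonlinear $\rho(\mathcal W_n)$. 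The expectation you then need is the per-direction bound $2\,\mathbb E[\mathfrak R_n(\widetilde{\mathcal G};Z^n)]$ from inside that proposition's proof, not its statement.

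One step fails as written, and the paper's Step~1 makes the same move. Set $A=\sup_g(\mathbb P_n-\mathbb P)(\varphi^*\circ g)$ and $B=\sup_g(\mathbb P-\mathbb P_n)(\varphi^*\circ g)$. Then $\Delta_n=\max(A,B)$, and $\max(A,B)\le A+B$ requires $\min(A,B)\ge 0$, which is not automatic.

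For a counterexample, take entropic risk with $\varphi^*(s)=e^s-1$ and a singleton class $\mathcal F=\{f_0\}$, with $w=\mathcal W(f_0,\cdot)$. Then $(\mathbb P_n-\mathbb P)\varphi^*(\eta-w)=e^{\eta}(\mathbb P_n-\mathbb P)e^{-w}$ has one sign for all $\eta\in[a,b]$. Hence $\Delta_n=A+B+e^{a}|(\mathbb P_n-\mathbb P)e^{-w}|>A+B$, and even $\mathbb E\Delta_n>\mathbb E A+\mathbb E B$.

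The repair is the union bound you dismissed. Each of $A$ and $B$ has bounded differences $L_\varphi K/n$ and expectation at most $2L_\varphi\mathbb E[\mathfrak R_n(\mathcal F;Z^n)]+L_\varphi K/\sqrt n$. McDiarmid at level $\delta/2$ on each then gives, with probability at least $1-\delta$, $\Delta_n\le 2L_\varphi\mathbb E[\mathfrak R_n(\mathcal F;Z^n)]+L_\varphi K/\sqrt n+L_\varphi K\sqrt{\log(2/\delta)/(2n)}$. Since $\sqrt{\log(2/\delta)}\le\sqrt{\log 2}+\sqrt{\log(1/\delta)}$ and $\sqrt{\log 2/2}<1$, this lies below the right side of \eqref{UC_bound}.

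In part (ii) the issue does not arise. Taking $\eta=a$ makes $(\eta-\mathcal W(f,\cdot))_+\equiv 0$ a member of your class $\mathcal G'$, so both one-sided suprema are nonnegative and summing them is legitimate.
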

	\proof See Appendix~A. \eproof
	
	\begin{rem}
		The bounds depend on the Rademacher complexity $\mathfrak{R}_n(\mathcal{F};Z^n)$ plus an $O(n^{-1/2})$ term from the variational representation's optimization variables ($\eta$, and $\alpha$ under Kusuoka); risk aversion enters only through $L_\varphi$ or $1/\alpha_{\min}$ (Corollary~\ref{Cor_specific}). The proposition applies to any policy class admitting a suitable Rademacher bound, with finite-VC classes as the leading example.
	\end{rem}
	\subsection{Regret Bounds}\label{sec:regret}
	We now establish the main finite-sample regret bound for the empirical risk-averse policy.
	
	\begin{thm}[Finite-Sample Regret Bounds]\label{Thm_Regret}
		
		\textnormal{(i) $\varphi$-divergence risk measures.}
		Under the hypotheses of Proposition~\ref{Prop_UC}(i), the empirical
		risk-averse maximizer $\hat f_n$ satisfies, with probability at least
		$1-\delta$,
		\begin{equation}\label{Regret_bound}
			\mathcal{R}_{\rho_\varphi}(\hat f_n)
			\le
			2L_\varphi
			\left(
			4\,\mathbb E[\mathfrak R_n(\mathcal F;Z^n)]
			+\frac{2K}{\sqrt n}
			+
			K\sqrt{\frac{\log(1/\delta)}{2n}}
			\right).
		\end{equation}
		
		\medskip
		
		\textnormal{(ii) General coherent, law-invariant risk measures.}
		Under the hypotheses of Proposition~\ref{Prop_UC}(ii), the empirical
		risk-averse maximizer satisfies, with probability at least $1-\delta$,
		\begin{equation}\label{Regret_bound_AVAR}
			\mathcal{R}_{\rho}(\hat f_n)
			\le
			\frac{2}{\alpha_{\min}}
			\left(
			4\,\mathbb E[\mathfrak R_n(\mathcal F;Z^n)]
			+\frac{2K}{\sqrt n}
			+
			K\sqrt{\frac{\log(1/\delta)}{2n}}
			\right)
			+
			4\epsilon_n(\alpha_{\min}).
		\end{equation}
	\end{thm}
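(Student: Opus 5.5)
The argument is the standard decomposition of regret into two deviations, applied on the high-probability event of Proposition~\ref{Prop_UC}. Write $\rho$ for either $\rho_\varphi$ (part (i)) or a general coherent, law-invariant $\rho$ (part (ii)). Fix $\epsilon>0$ and choose $f_\epsilon\in\mathcal F$ with $\rho(\mathcal W(f_\epsilon,Z))\ge\sup_{f\in\mathcal F}\rho(\mathcal W(f,Z))-\epsilon$; this is needed because the population supremum need not be attained. Then decompose
\begin{align*}
\mathcal R_\rho(\hat f_n)-\epsilon
&\le \bigl[\rho(\mathcal W(f_\epsilon,Z))-\rho(\mathcal W_n(f_\epsilon,Z))\bigr]
+\bigl[\rho(\mathcal W_n(f_\epsilon,Z))-\rho(\mathcal W_n(\hat f_n,Z))\bigr]\\
&\quad+\bigl[\rho(\mathcal W_n(\hat f_n,Z))-\rho(\mathcal W(\hat f_n,Z))\bigr].
\end{align*}
The middle bracket is nonpositive, since $\hat f_n$ maximizes the empirical criterion in \eqref{ERAM_General} and the maximum is assumed attained. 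Each of the outer brackets is bounded by $\sup_{f\in\mathcal F}|\rho(\mathcal W(f,Z))-\rho(\mathcal W_n(f,Z))|$. Hence $\mathcal R_\rho(\hat f_n)\le 2\sup_{f}|\rho(\mathcal W(f,Z))-\rho(\mathcal W_n(f,Z))|+\epsilon$.

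On the event of probability at least $1-\delta$ supplied by Proposition~\ref{Prop_UC}, this supremum is bounded by \eqref{UC_bound} in case (i) and by \eqref{UC_bound_Kusuoka} in case (ii). That event does not depend on $\epsilon$, so letting $\epsilon\downarrow0$ on it yields the bounds. Doubling the right-hand side of \eqref{UC_bound} gives exactly \eqref{Regret_bound}. Doubling \eqref{UC_bound_Kusuoka} gives the factor $2/\alpha_{\min}$ and the remainder $4\epsilon_n(\alpha_{\min})$, which is \eqref{Regret_bound_AVAR}.

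There is no real obstacle beyond what Proposition~\ref{Prop_UC} already provides. The only points to check are these. The bound in Proposition~\ref{Prop_UC} is two-sided and uniform over $\mathcal F$, so it applies to the data-dependent $\hat f_n$ as well; measurability of $\hat f_n$ and of the supremum is guaranteed by the pointwise measurability assumed in Section~\ref{s4}. Finally, the single confidence event must control both brackets at the same time, which holds because both are dominated by the same uniform supremum, so no union bound is needed and $\delta$ need not be split.
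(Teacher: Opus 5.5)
Your proposal is correct and matches the paper's proof. Both use the same three-term decomposition around an $\epsilon$-optimal population policy, drop the nonpositive empirical-optimality term, bound the two outer terms by the uniform deviation, and then invoke Proposition~\ref{Prop_UC} on a single event before letting $\epsilon\downarrow 0$. Your version is slightly tidier, since writing $\mathcal R_\rho(\hat f_n)-\epsilon\le\cdots$ avoids the paper's first-line equality, which only holds up to $\epsilon$.
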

	
	\proof See Appendix~A. \eproof
	
	\begin{rem}[Rademacher vs.\ Vapnik-Chervonenkis (VC) complexity]\label{Rem_Rademacher_VC}
		Theorem~\ref{Thm_Regret} is stated in terms of Rademacher
		complexity, which is data-dependent and applies to any policy
		class, including ones with infinite VC dimension. When $\Pi$ is a
		VC class of dimension $V$ --- the standard requirement in the
		policy-learning literature \citep{Kiatagawa_Tetenov_2018,AtheyWager2021}, covering threshold rules, linear
		classifiers, regression trees, and intersections of halfspaces ---
		the bound
		$\mathfrak{R}_n(\mathcal{F},Z^n)\leq
		M\sqrt{2V\log(en/V)/n}$
		\citep[Theorem~26.5]{Shalev-Shwartz_Ben-David_2014} recovers
		VC rates as a special case, nesting existing policy-learning
		results within Theorem~\ref{Thm_Regret}.
	\end{rem}
	Proposition~\ref{Prop_UC} and Theorem~\ref{Thm_Regret} are stated in terms of the induced welfare class $\mathcal{F}$ rather than the underlying policy class $\Pi$. This formulation is natural in the MTE framework, where each policy rule $\pi\in\Pi$ induces a welfare function through an MTE-weighted integral. Consequently, the complexity of $\mathcal{F}$ is inherited from the complexity of the underlying policy class $\Pi$ together with the policy-to-welfare mapping.
	
	The following corollary specializes Theorem~\ref{Thm_Regret} to the risk measures considered in this paper. For notational convenience, define
	\[
	\mathcal{R}_n
	\triangleq
	\mathbb{E}\!\left[\mathfrak{R}_n(\mathcal{F};Z^n)\right]
	\]
	and
	\[
	Q_n(\delta)
	\triangleq
	4\mathcal{R}_n
	+\frac{2K}{\sqrt n}
	+
	K\sqrt{\frac{\log(1/\delta)}{2n}},
	\]
	which is the common complexity term appearing in Theorem~\ref{Thm_Regret}.
	
	\begin{cor}[Risk-Measure-Specific Regret Bounds]\label{Cor_specific}
		Under the assumptions of Theorem~\ref{Thm_Regret}, the following bounds hold with probability at least $1-\delta$.
		
		\begin{enumerate}
			\item[(i)] $\boldsymbol{AV@R_\beta}$.
			Since $L_\varphi=1/\beta$,
			\[
			\mathcal{R}_{\operatorname{AV@R}_\beta}(\hat f_n)
			\le
			\frac{2}{\beta}\,
			Q_n(\delta).
			\]
			
			\item[(ii)] \textbf{Entropic risk.}
			For
			\[
			\varphi(t)
			=
			\frac{1}{\gamma}
			(t\log t-t+1),
			\]
			the corresponding Lipschitz constant is
			\[
			L_\varphi
			=
			e^{2\gamma(B+M)},
			\]
			and therefore
			\[
			\mathcal{R}_{\rho}(\hat f_n)
			\le
			2e^{2\gamma(B+M)}
			Q_n(\delta).
			\]
			
			\item[(iii)] \textbf{Risk-neutral welfare.}
			Since $\rho=\mathbb{E}$ corresponds to the case $\beta=1$,
			\[
			\mathcal{R}_{\rho}(\hat f_n)
			\le
			2Q_n(\delta).
			\]
		\end{enumerate}
	\end{cor}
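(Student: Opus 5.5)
The corollary follows from Theorem~\ref{Thm_Regret}(i) once we compute the restricted Lipschitz constant
\[
L_\varphi=\sup_{s\in[-K,K]}\varphi^{*\prime}(s)
\]
for each of the three choices of $\varphi$. Since $Q_n(\delta)$ is exactly the bracketed term in \eqref{Regret_bound}, each bound reads $2L_\varphi Q_n(\delta)$. The work is therefore to identify $\varphi$, compute $\varphi^*$, and bound its (right) derivative on $[-K,K]$ with $K=2(B+M)$, invoking Remark~\ref{rem:diff_convention} where $\varphi^*$ is not differentiable.

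For (i), take $\varphi_{AV@R_\beta}$ from Remark~\ref{Rem_varphi_examples}. Then
\[
\varphi^*(s)=\sup_{0\le t\le 1/\beta} st=\tfrac{1}{\beta}[s]_+ .
\]
This function is convex and piecewise linear with right derivative in $\{0,1/\beta\}$. Hence $L_\varphi=1/\beta$ whenever $K>0$; in the degenerate case $K=0$ the bound is trivial. As noted in Remark~\ref{rem:diff_convention}, only Lipschitzness of $\varphi^*$ is used, so the lack of differentiability at $0$ is harmless.

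For (ii), take $\varphi(t)=\gamma^{-1}(t\log t-t+1)$. The stationarity condition $s=\gamma^{-1}\log t$ gives $t=e^{\gamma s}$ and
\[
\varphi^*(s)=\gamma^{-1}(e^{\gamma s}-1),\qquad \varphi^{*\prime}(s)=e^{\gamma s}.
\]
This derivative is increasing, so its supremum over $[-K,K]$ is $e^{\gamma K}=e^{2\gamma(B+M)}$. We should also check that this $\rho_\varphi$ is indeed the entropic measure of Example~\ref{Example3}. Plugging $\varphi^*$ into \eqref{program_varphi2} and optimizing over $\eta$ gives $e^{\gamma\eta}=1/\mathbb{E}[e^{-\gamma\mathcal{W}}]$, which yields $-\gamma^{-1}\ln\mathbb{E}[e^{-\gamma\mathcal{W}}]$. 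This is a short calculation. We also need $\varphi\in\Phi$: $\varphi(1)=0$, $\inf\varphi=0$, and $1\in\operatorname{int}\operatorname{dom}\varphi$.

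For (iii), $\rho=\mathbb{E}$ is $\operatorname{AV@R}_1$, with $\varphi$ the indicator of $[0,1]$. Equivalently, take $\varphi$ to be the indicator of $\{1\}$, which gives $\varphi^*(s)=s$. Either way $L_\varphi=1$, and part (i) with $\beta=1$ gives $2Q_n(\delta)$.

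The only step needing care is the entropic one. The obstacle is conceptual rather than computational: $\varphi^{*\prime}$ is unbounded on $\mathbb{R}$, so the bound relies on the range restriction. Lemma~\ref{Lem_OCE_Range} confines $\eta$ to $[a,b]$, and $\mathcal{W}(f,Z)$ and $\mathcal{W}_n(f,Z)$ lie in $[a,b]$. Hence $\eta-\mathcal{W}(f,\cdot)\in[-K,K]$, which is exactly the domain over which Lemma~\ref{Lem_Contraction} takes the supremum. I would state this explicitly and then conclude all three items from Theorem~\ref{Thm_Regret}(i).
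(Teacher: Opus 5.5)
Your proposal is correct and follows the paper's route: the corollary is just Theorem~\ref{Thm_Regret}(i) with the restricted constant $L_\varphi$ substituted. Your conjugate computations supply the verification the paper leaves implicit: $\varphi^*(s)=[s]_+/\beta$ for AV@R; $\varphi^*(s)=\gamma^{-1}(e^{\gamma s}-1)$ for entropic risk, whose derivative has supremum $e^{\gamma K}=e^{2\gamma(B+M)}$ on $[-K,K]$; and $\varphi^*(s)=s$ for the mean. One small point, shared with the paper: for $\beta=1$, neither the indicator of $[0,1]$ nor the indicator of $\{1\}$ has $1$ in the interior of its domain, so strictly $\varphi\notin\Phi$. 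The representation, range-restriction and contraction steps still go through directly for $\varphi^*(s)=s$, so this does not affect the bound in (iii).
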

	
	Corollary~\ref{Cor_specific} highlights two patterns. The $\operatorname{AV@R}_\beta$ bound deteriorates as $\beta\downarrow0$ --- policies weighting extreme lower-tail outcomes are harder to learn --- with an exact constant $1/\beta$, since $\varphi_{\operatorname{AV@R}_\beta}^*$ is globally Lipschitz and needs no domain truncation. The entropic-risk constant $L_\varphi=e^{2\gamma(B+M)}$ depends exponentially on both the risk-aversion parameter $\gamma$ and the welfare range $B+M$, so learning becomes markedly harder as the planner weights adverse outcomes more heavily.
	
	At $\beta=1$, $\operatorname{AV@R}_1$ is expected welfare and part~(iii) recovers the regret guarantee of \cite{Kiatagawa_Tetenov_2018}, extended here to endogenous treatment selection via the MTE framework; the mean--semideviation case (Appendix~C) instead uses the Kusuoka representation, since $\mathbb{D}_p^{-}$ has no direct $\varphi$-divergence representation.
	\begin{rem}[Comparison with \citet{Kiatagawa_Tetenov_2018}]\label{Rem_KT_comparison}
		Part~(iii) recovers \citet{Kiatagawa_Tetenov_2018}'s risk-neutral, exogenous-selection bound of order $M\sqrt{V/n}$ (via Dudley's entropy integral). Applying the VC-Rademacher bound from Remark~\ref{Rem_Rademacher_VC} to Theorem~\ref{Thm_Regret} yields
		\[
		M\sqrt{\frac{V\log(en/V)}{n}},
		\]
		matching their minimax-optimal rate up to a logarithmic factor, while extending it to endogenous treatment selection and general risk-averse welfare criteria.
	\end{rem}
	\section{Empirical Application: Risk-Averse Treatment Allocation for College Enrollment}\label{sec:empirical}
	
	We illustrate the proposed framework using the college-proximity data of \citet{Card1995}, in which college attendance is endogenous and a valid instrumental variable is available. Holding the identification strategy and policy class fixed, we examine how the optimal treatment allocation changes as the planner's objective shifts from expected welfare to a $AV@R$ maximization. This comparison isolates the effect of risk aversion on policy design. As predicted by the theory, risk-averse objectives induce substantial reallocations relative to the risk-neutral benchmark by assigning greater weight to lower-welfare subpopulations.
	\subsection{Data and Policy Problem}\label{sec:emp_setup}
	The empirical analysis uses the NLS Young Men cohort studied by \citet{Card1995}, consisting of $n=3{,}010$ observations after excluding individuals with missing log wages. The outcome is log hourly wages in 1976, the treatment is college completion ($D=\mathbf{1}\{\text{years of schooling}\ge16\}$), and the instrument is an indicator for the presence of a four-year college in the individual's county of residence in 1966 (\texttt{nearc4}). Although this instrument is weak ($F \approx 3.9$), it is well studied, allowing us to isolate how our framework departs systematically from prior results in a familiar setting.  The covariates include potential experience and its square, race, indicators for SMSA and southern residence (1966 and 1976), and eight regional indicators. In the estimation sample, $27.1\%$ of individuals complete college, $23.4\%$ are Black, and $40.4\%$ reside in the South.
	
	The planner allocates college subsidies subject to a budget constraint that permits treatment of at most one-half of the population:
	\[
	\Pi_{0.5}
	=
	\left\{
	\pi:\mathcal{Z}\rightarrow\{0,1\}
	:
	\mathbb{E}_n[\pi(Z)]\le0.5
	\right\}.
	\]
	
	Candidate policies are evaluated either by expected welfare,
	$\mathbb{E}[\mathcal{W}(f,Z)]$, or by the risk-averse criterion
	$\operatorname{AV@R}_{\beta}(\mathcal{W}(f,Z))$. Because the budget constraint requires the planner to rank individuals rather than simply identify those with positive treatment effects, the choice of welfare criterion directly affects the ranking and, consequently, the optimal allocation.
	
	\subsection{MTE Estimation}\label{sec:emp_mte}
	
	We estimate the marginal treatment effect (MTE) using the local instrumental variables approach of \citet{Heckman_Vytlacil_2005}. The propensity score is estimated by a probit of college completion on the instrument and covariates, and log wages are regressed on the covariates and a quartic polynomial in the estimated propensity score, allowing interactions with race and potential experience.\footnote{Propensity scores are clipped to $[0.05, 0.95]$ before entering
		the polynomial-in-propensity outcome regression, guarding against the instability such
		specifications are known to exhibit near the boundary of the unit interval.} The MTE is obtained by differentiating the estimated outcome equation with respect to the propensity score and calibrating it to the 2SLS estimate ($\hat{\beta}_{\mathrm{2SLS}}=1.39$). Individual treatment gains are then computed as $\widehat{\Delta}_i=\widehat{\mathrm{MTE}}(\hat v(Z_i))$.
	
	Figure~\ref{Fig_MTE} reports the estimated MTE. Two features are noteworthy. First, the MTE declines with $\varepsilon$, consistent with selection on gains. Second, the MTE for Black individuals lies uniformly above that for non-Black individuals (average gains of $1.08$ versus $0.62$ log points), while their untreated welfare is substantially lower ($6.02$ versus $6.48$ log points). Thus, individuals with the lowest baseline welfare also have the largest expected treatment gains, creating significant scope for risk-averse targeting. Since only $4\%$ of individuals have negative estimated gains, the $50\%$ budget constraint requires a planner to determine which positively affected individuals receive treatment, not target them all.
	\begin{figure}[t]
		\centering
		\includegraphics[width=\textwidth]{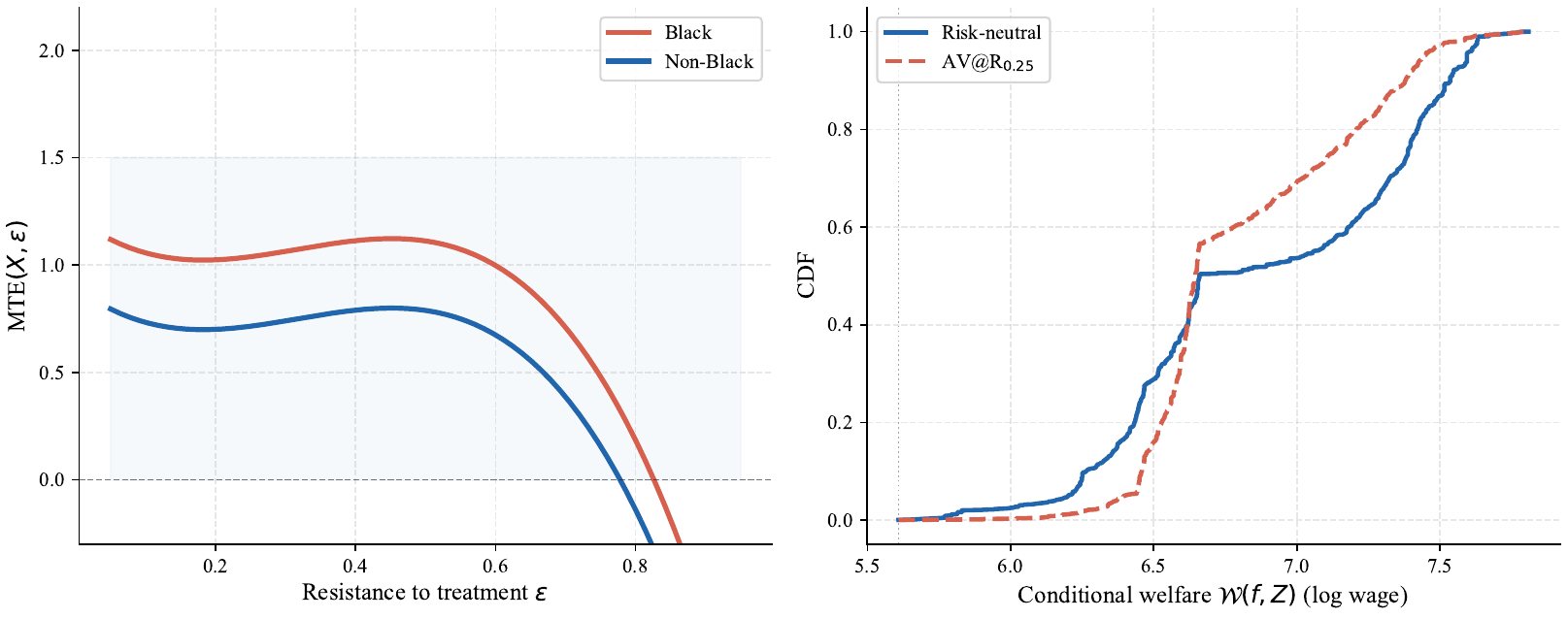}
		\caption{Selection on gains and distributional welfare accounting. Left: $\widehat{\text{MTE}}(X,\varepsilon)$ by race; the downward slope in $\varepsilon$ indicates that individuals most likely to select into treatment gain the most, confounding naive treatment-effect comparisons. Right: CDF of conditional welfare under the risk-neutral and $\text{AV@R}_{0.25}$-optimal rules; the distributions cross, so no comparison fully ranks them.}
		\label{Fig_MTE}
	\end{figure}
	
	\subsection{Optimal Rules and Exact Computation}\label{sec:emp_rules}
	
	Under the risk-neutral criterion, the budget-constrained problem is solved by ranking individuals according to their estimated gains, $\widehat{\Delta}_i$, and treating the top half. Under $\operatorname{AV@R}_{\beta}$, the optimal allocation depends on the entire welfare distribution rather than on treatment gains alone. Tractability is restored by the Rockafellar--Uryasev representation (Example~\ref{Example1}),
	\[
	\operatorname{AV@R}_{\beta}\bigl(\mathcal{W}(f,Z)\bigr)
	=
	\max_{\lambda\in\mathbb{R}}
	\left\{
	\lambda
	+
	\beta^{-1}
	\mathbb{E}_n
	\bigl[
	\mathcal{W}(f,Z)-\lambda
	\bigr]_-
	\right\}.
	\]
	
	Writing individual welfare as
	\[
	\mathcal{W}_i(\pi)
	=
	\hat\mu_{0,i}
	+
	\pi_i\widehat{\Delta}_i,
	\]
	the objective is separable across individuals for fixed $\lambda$. Hence, for each $\lambda$, the optimal allocation is obtained by treating the individuals with the largest positive treatment contributions, subject to the budget constraint. The remaining optimization is therefore one-dimensional in $\lambda$. Algorithm \ref{alg_AVAR_allocation} describes the computational procedure for obtaining the $\mathrm{AV@R}_{\beta}$-optimal treatment allocation.
	\begin{algorithm}
		\caption{Computation of the $\mathrm{AV@R}_{\beta}$-Optimal Budget-Constrained Allocation}
		\label{alg_AVAR_allocation}
		\begin{algorithmic}[1]
			
			\State Compute estimated untreated welfare and treatment gains:
			\[
			\hat{\mu}_{0,i}, \qquad \widehat{\Delta}_i,
			\]
			so that
			\[
			\mathcal{W}_i(\pi)=\hat{\mu}_{0,i}+\pi_i\widehat{\Delta}_i .
			\]
			
			\State For a candidate threshold $\lambda\in\mathbb{R}$, compute the marginal contribution of treating individual $i$:
			\[
			g_i(\lambda)
			=
			\beta^{-1}
			\left(
			[\hat{\mu}_{0,i}+\widehat{\Delta}_i-\lambda]_{-}
			-
			[\hat{\mu}_{0,i}-\lambda]_{-}
			\right).
			\]
			
			\State Rank individuals according to $g_i(\lambda)$ and assign treatment to the individuals with the largest positive values of $g_i(\lambda)$, subject to the budget constraint.
			
			\State Denote the resulting allocation $\pi_\lambda$, define $f_{\lambda} = \pi_\lambda\hat\Delta_i$, and evaluate the Rockafellar--Uryasev objective:
			\[
			\lambda+
			\beta^{-1}
			\mathbb{E}_{n}
			\left[
			\mathcal{W}(f_{\lambda},Z)-\lambda
			\right]_{-}.
			\]
			
			\State Search over $\lambda$:
			\[
			\hat{\lambda}
			\in
			\arg\max_{\lambda\in\mathbb{R}}
			\left\{
			\lambda+
			\beta^{-1}
			\mathbb{E}_{n}
			\left[
			\mathcal{W}(f_{\lambda},Z)-\lambda
			\right]_{-}
			\right\}.
			\]
			
			\State Return the allocation $\pi_{\hat{\lambda}}$.
			
		\end{algorithmic}
	\end{algorithm}
	
	\subsection{Results}
	
	The allocation rule that accounts for both endogeneity and risk aversion differs substantially from rules that disregard either or both. Table \ref{Tab_Rules} reports welfare estimates (under endogeneity) and treatment rates for the expected-welfare criterion and several $\text{AV@R}_\beta$ criteria. The mechanism underlying risk-averse allocation is laid bare: expected welfare is sacrificed to raise left-tail welfare, as the right panel of Figure \ref{Fig_MTE} illustrates directly---the $\text{AV@R}_{0.25}$-optimal rule places less mass in the lower tail of the welfare distribution than the risk-neutral rule, at the cost of less mass in the upper tail. Notably, the risk-averse planner reallocates treatment toward Black individuals despite no equity weights, group-specific objectives, or fairness constraints appearing in the planner's problem. This prioritization is an endogenous consequence of applying a statistical risk-averse criterion to a population in which this subgroup occupies the lower tail of the welfare distribution: concern for the worst-off becomes, in this population, concern for a recognizable demographic group.
	
	\begin{table}[t]
		\centering
		\caption{Optimal treatment rules under budget constraint $\Pi_{0.5}$}
		\label{Tab_Rules}
		\begin{tabular}{lcccccc}
			\hline\hline
			& \multicolumn{2}{c}{Welfare} & \multicolumn{4}{c}{Treatment rate by subgroup} \\
			\cmidrule(lr){2-3}\cmidrule(lr){4-7}
			Criterion & $\mathbb{E}[\mathcal{W}]$ & $\mathrm{AV@R}_{0.25}$
			& Black & Non-Black & South & Non-South \\
			\hline
			Risk-neutral            & 6.888 & 6.277 & 0.832 & 0.399 & 0.560 & 0.459 \\
			$\mathrm{AV@R}_{0.50}$  & 6.820 & 6.445 & 0.994 & 0.349 & 0.805 & 0.294 \\
			$\mathrm{AV@R}_{0.25}$  & 6.812 & 6.448 & 0.999 & 0.348 & 0.788 & 0.305 \\
			$\mathrm{AV@R}_{0.10}$  & 6.804 & 6.445 & 1.000 & 0.348 & 0.806 & 0.293 \\
			\hline\hline
		\end{tabular}
	\end{table}

	To isolate the cost of ignoring endogenous selection, we construct a naive planner who assumes unconfoundedness: treatment gains are estimated via the naive doubly-robust (AIPW) score of \citet{AtheyWager2021}, $$\hat\Gamma_i = \hat\mu_1(X_i) - \hat\mu_0(X_i) + \frac{D_i(Y_i-\hat\mu_1(X_i))}{\hat{e}(X_i)} - \frac{(1-D_i)(Y_i-\hat\mu_0(X_i))}{(1-\hat{e}(X_i))},$$ using the same estimated propensity score as the MTE stage---including $\mathrm{nearc4}$ as a predictor of $\hat{e}(Z)$, but not as an instrument. The naive planner thus sees college proximity only as a predictor of enrollment, not as a source of exogenous variation in gains. Feeding these estimates through the same budget-constrained $\text{AV@R}_{0.25}$ rule reveals a sharp asymmetry: naive and MTE-based baseline welfare are highly correlated ($\text{corr}(\hat\mu_0^{naive}, \hat\mu_0) = 0.878$), but the corresponding gain estimates are essentially unrelated ($\text{corr}(\hat\Gamma, \hat\Delta) = 0.003$). The two rules therefore agree on who is worst off but not on who benefits most: the resulting allocations disagree on 25.8\% of individuals, concentrated precisely where the gain estimates diverge most---the mean gap between naive and MTE-based gains is 0.926 among those treated differently, versus 0.665 among those treated the same. Ignoring endogenous selection does not distort the planner's assessment of disadvantage; it corrupts the assessment of who stands to gain---the object a risk-averse planner most needs to get right when reallocating treatment toward the lower tail.
	
	We close this section by pointing the reader to online Appendix B where we study the role of risk aversion using 
	the Job Training Partnership Act (JTPA) experimental data analyzed by
	\citet{Sasaki_Ura_2024}. Our results show that in this case, risk aversion has no bite.  Together the two applications bracket the empirical content
	of the theory, showing not only when a coherent risk measure reorders
	the planner's assignment but also when it provably does not. In addition, in the online Appendix D we analyze our results in a simulation study.

	\section{Conclusion and Extensions}\label{s7}
	
	This paper develops a framework for risk-averse treatment allocation under
	endogenous selection. It combines the marginal treatment effect framework with
	coherent risk measures, so that the planner values the distribution of
	welfare, not only its mean. We characterize the planner's problem through
	welfare bounds, robust dual representations, and Kusuoka representations, and
	establish finite-sample regret guarantees for empirical risk-averse policy
	learning. Stronger protection against adverse welfare outcomes increases the
	statistical complexity of learning an optimal treatment rule.
	
	In the \citet{Card1995} college-proximity data, risk aversion substantially
	changes treatment allocation even with the identification strategy and policy
	class held fixed. Risk-averse policies reallocate treatment toward
	lower-welfare individuals without explicit equity weights or group-specific
	objectives. Simulations confirm the theory's finite-sample predictions and
	show why endogenous selection must be accounted for.
	
	Several extensions remain. Localized complexity measures or low-noise
	conditions may yield sharper regret guarantees. Stronger moment restrictions
	or more information on the distribution of potential outcomes may tighten the
	welfare bounds. The regret analysis could also incorporate first-stage
	estimation error in the MTE components. Multi-valued, continuous, or dynamic
	treatments would require new identification and learning results,
	particularly to accommodate time-consistent risk preferences. Finally,
	inference on optimal risk-averse welfare and treatment rules remains an
	important open problem.
	\appendix
	\section{Proofs}\label{app:proofs}
	
	\subsection*{Proof of Theorem~\ref{Risk_Bounds}}
	\textit{Upper bound}: For any concave, monotone risk
	functional $\rho$, Jensen's inequality applied to
	conditional expectation gives
	$\rho(\mathbb{E}[Y(\pi)\mid Z])\geq\rho(Y(\pi))$
	(Corollary~6.52 of \cite{Shapiro_et_al_2013}): conditioning
	raises the risk-averse welfare, so $\rho(\mathcal{W}(f,Z))$
	provides an upper bound on $\rho(Y(\pi))$. Taking the maximum
	over $f\in\mathcal{F}$ gives the upper bound
	in~(\ref{Bounds_Welfare}).
	\textit{Lower bound}: from $\mathcal{W}(f,Z)-K_0\leq Y(\pi)$,
	monotonicity~(A1) gives
	$\rho(\mathcal{W}(f,Z)-K_0)\leq\rho(Y(\pi))$. Translation
	invariance~(A2) gives
	$\rho(\mathcal{W}(f,Z))-K_0\leq\rho(Y(\pi))$. Taking the max
	over $f$ completes the proof. \eproof
	
	\subsection*{Proof of Proposition~\ref{CVAR_cor}}
	Direct application of Theorem~\ref{Risk_Bounds} with
	$\rho=\mathrm{AV@R}_\beta$ and using the fact that  $\rho=\mathrm{AV@R}_\beta$  is a coherent risk measure. \eproof
	
	\subsection*{Proof of Proposition~\ref{Risk_measure_robust}}
	Since $\rho$ is concave and upper semicontinuous, the
	Fenchel-Moreau theorem
	\citep[Thm.~5]{Rockafellar_1974},
	\citep[Thm.~4.4.2]{AubinEkeland2006applied} yields the
	concave conjugate representation. \eproof
	
	\subsection*{Proof of Proposition~\ref{Prop_Coherent_Dual}}
	
	Since $\rho$ is coherent, \citet[Theorem~6.4]{Shapiro_et_al_2013}
	implies that $\rho(Z)=\inf_{\xi\in\mathcal{D}} \mathbb{E}_{\xi}[Z], $ where
	$
	\mathcal{D}
	=
	\left\{
	\xi\in\mathcal{Z}^*:
	\langle\xi,Z\rangle\le\rho(Z)\ \forall Z\in\mathcal{Z},
	\;
	\langle\xi,\mathbf1\rangle=1,
	\;
	\xi\ge0
	\right\}.
	$
	Translation invariance implies
	$\langle\xi,\mathbf1\rangle=1$, while monotonicity implies
	$\xi\ge0$. Moreover, $\mathcal{D}$ is nonempty by the dual
	representation theorem for coherent risk measures, convex by
	construction, and weak$^*$ compact as a weak$^*$-closed subset of
	the unit ball of $\mathcal{Z}^*$ by the Banach--Alaoglu theorem.\\
	For
	$\rho=\operatorname{AV@R}_{\beta}$,
	the characterization of $\mathcal{D}$ follows from
	\citet[Example~6.5]{Shapiro_et_al_2013}. \eproof

	\subsection*{Proof of Lemma~\ref{Robust_Social_Welfare}}
	Representation~\eqref{rho_varphi_representation} follows directly from the variational representation of concave risk measures induced by $\varphi$-divergences. Specifically, \citet[Theorem~4.2]{Ben-Tal_Teboulle_2007} identify the concave conjugate $-\rho^*$ with the $\varphi$-divergence penalty $I_\varphi(\mathbb{Q},\mathbb{P})$. Substituting this characterization into the dual representation of $\rho$ immediately yields~\eqref{rho_varphi_representation}. \eproof

	\subsection*{Proof of Proposition~\ref{Cor_program_varphi}}
	
	We consider the risk function $\rho_\varphi(\mathcal{W}(f,Z)) = \inf_{Q}\{\mathbb{E}_Q[\mathcal{W}(f,Z)] + I_\varphi(Q,P)\}$.
	Theorem 4.2 of \citet{Ben-Tal_Teboulle_2007}, stated there for a loss variable $Z=-W$
	and translated into the reward convention used here, gives
	\[
	\rho_\varphi(\mathcal{W}(f,Z)) = \sup_{\eta\in\mathbb{R}}
	\big\{\eta - \mathbb{E}_P[\varphi^{*}(\eta-\mathcal{W}(f,Z))]\big\},
	\]
	which yields (\ref{program_varphi2}). The second result follows by taking
	the supremum over $f\in\mathcal{F}$.
	\eproof
	
	\subsection*{Proof of Corollary \ref{cor:excess_welfare}}
	
	By Proposition~\ref{Cor_program_varphi}, $\rho_\varphi(\mathcal{W}(f,Z))=\sup_{\eta\in\mathbb{R}}\{\eta-\EE[\varphi^*(\eta-\mathcal{W}(f,Z))]\}$. Under Assumptions~\ref{Assump_MTE}, \ref{Assumption2_OCE_Bounded} and~\ref{Ass_Y0_Bounded}, $\mathcal{W}(f,Z)$ is bounded, so $\EE[\mathcal{W}(f,Z)]$ is finite. Fix $\eta\in\mathbb{R}$. Using $\eta=\EE[\eta]$ and linearity of the expectation,
	\[
	\eta-\EE\bigl[\varphi^*(\eta-\mathcal{W}(f,Z))\bigr]-\EE[\mathcal{W}(f,Z)]
	=\EE\bigl[\eta-\mathcal{W}(f,Z)-\varphi^*(\eta-\mathcal{W}(f,Z))\bigr]
	=\EE\bigl[\phi(\eta-\mathcal{W}(f,Z))\bigr],
	\]
	where the last equality applies $\phi(t)=t-\varphi^*(t)$ at $t=\eta-\mathcal{W}(f,Z)$. Since $\EE[\mathcal{W}(f,Z)]$ does not depend on $\eta$, taking the supremum over $\eta\in\mathbb{R}$ gives
	\[
	\rho_\varphi(\mathcal{W}(f,Z))-\EE[\mathcal{W}(f,Z)]=\sup_{\eta\in\mathbb{R}}\EE\bigl[\phi(\eta-\mathcal{W}(f,Z))\bigr]=\psi(\mathcal{W}(f,Z)),
	\]
	which is the claimed representation. \eproof

	\subsection*{Proof of Theorem~\ref{Kusuoka_Welfare}}
	Part~(i) follows from Theorem~4.62 of
	\cite{FollmerSchied_2025}. Part~(ii) is Kusuoka's
	representation \citep{Shapiro_Kusuoka2013}. \eproof
	
	\subsection*{\textcolor{black}{Proof of Proposition~\ref{Prop_SOSD}}}
	(1)$\Rightarrow$(2). Suppose $\mathcal{W}(f_1,Z)\succeq_2\mathcal{W}(f_2,Z)$. By the characterization of second-order stochastic dominance in terms of Average Value-at-Risk \citep[Theorem~2.57]{FollmerSchied_2025},
	\begin{equation}\label{eq:SOSD_AVaR}
		AV@R_\alpha\big(\mathcal{W}(f_1,Z)\big)\;\geq\;AV@R_\alpha\big(\mathcal{W}(f_2,Z)\big)
		\qquad\text{for every }\alpha\in(0,1].
	\end{equation}
	Let $\rho$ be any law-invariant, concave, and monotone risk measure. Under Assumptions~\ref{Assump_MTE}, ~\ref{Assumption2_OCE_Bounded} and ~\ref{Ass_Y0_Bounded}, $\mathcal{W}(f,Z)\in L^\infty$ for every $f\in\mathcal{F}$, so Theorem~\ref{Kusuoka_Welfare}(i) applies. Using this representation yields:
	\begin{multline*}
		\rho\big(\mathcal{W}(f_1,Z)\big)
		=\inf_{\mu\in\mathcal{M}}\left\{\int_0^1 AV@R_\alpha\big(\mathcal{W}(f_1,Z)\big)\,\mu(d\alpha)+c(\mu)\right\} \geq \\
		\inf_{\mu\in\mathcal{M}}\left\{\int_0^1 AV@R_\alpha\big(\mathcal{W}(f_2,Z)\big)\,\mu(d\alpha)+c(\mu)\right\} = \rho\big(\mathcal{W}(f_2,Z)\big)
	\end{multline*}
	
	(2)$\Rightarrow$(1). For every $\alpha\in(0,1]$, $AV@R_\alpha$ is concave, monotone, and law-invariant. Applying (2) to $\rho=AV@R_\alpha$ for each $\alpha\in(0,1]$ gives \eqref{eq:SOSD_AVaR}. By \citet[Theorem~2.57]{FollmerSchied_2025} this is equivalent to $\mathcal{W}(f_1,Z)\succeq_2\mathcal{W}(f_2,Z)$. \eproof
	\textcolor{black}{
		\subsection*{Proof of Lemma~\ref{Lem_OCE_Range}}
		Write $h(\eta)\triangleq\eta-\EE[\varphi^*(\eta-Z)]$. Since $\varphi^*$ is convex, $h$ is concave, with $h'(\eta)=1-\EE[\varphi^{*\prime}(\eta-Z)]$, where differentiation and expectation may be interchanged by dominated convergence, since $Z$ is bounded and $\varphi^{*\prime}$ is bounded on the relevant range. Because $\varphi(1)=0=\inf_t\varphi(t)$, $t=1$ minimizes $\varphi$ and $0\in\partial\varphi(1)$; Fenchel duality ($s\in\partial\varphi(t)\iff t\in\partial\varphi^*(s)$) then gives $\varphi^{*\prime}(0)=1$. Since $\varphi^{*\prime}$ is nondecreasing, for $\eta>b$ every realization of $Z$ satisfies $\eta-Z>0$, so $\varphi^{*\prime}(\eta-Z)\geq\varphi^{*\prime}(0)=1$ a.s., hence $h'(\eta)\leq0$ on $(b,\infty)$; symmetrically, for $\eta<a$ we have $\eta-Z<0$ a.s., so $\varphi^{*\prime}(\eta-Z)\leq\varphi^{*\prime}(0)=1$ a.s.\ and $h'(\eta)\geq0$ on $(-\infty,a)$. Thus $h$ is nonincreasing on $[b,\infty)$ and nondecreasing on $(-\infty,a]$, and its supremum over $\mathbb{R}$ equals its supremum over $[a,b]$. \eproof}
	
	\subsection*{Proof of Lemma~\ref{Lem_Contraction}}
	Every $g\in\mathcal{G}$ satisfies $g(Z_i)=\eta-\mathcal{W}(f,Z_i)\in[a-b,b-a]=[-K,K]$, so $\varphi^*$ restricted to the relevant range is $L_\varphi$-Lipschitz and satisfies $\varphi^*(0)=-\inf_t\varphi(t)=0$, so the contraction principle \citep[Lemma~26.9]{Shalev-Shwartz_Ben-David_2014} gives the first inequality. For the second, write $g(Z_i)=\eta-\EE(Y_0| Z_i)-f(Z_i)$, so that
	\begin{align*}
		\mathfrak{R}_n(\mathcal{G}(Z^n))
		&=\EE_{\sigma^n}\left[\sup_{f\in\mathcal{F},\,\eta\in[a,b]}\frac{1}{n}\sum_{i=1}^n\sigma_i\bigl(\eta-\EE(Y_0\mid Z_i)-f(Z_i)\bigr)\right]\\
		&\leq\EE_{\sigma^n}\left[\sup_{f\in\mathcal{F}}\frac{1}{n}\sum_{i=1}^n\sigma_i f(Z_i)\right]
		+\EE_{\sigma^n}\left[\sup_{\eta\in[a,b]}\eta\cdot\frac{1}{n}\sum_{i=1}^n\sigma_i\right]
		+\left|\EE_{\sigma^n}\frac{1}{n}\sum_{i=1}^n\sigma_i\EE(Y_0\mid Z_i)\right|\\
		&=\mathfrak{R}_n(\mathcal{F},Z^n)+\max(|a|,|b|)\cdot\EE_{\sigma^n}\left|\frac{1}{n}\sum_{i=1}^n\sigma_i\right|+0\\
		&\leq\mathfrak{R}_n(\mathcal{F},Z^n)+\frac{K}{2\sqrt{n}},
	\end{align*}
	where the last line uses $\max(|a|,|b|)=\frac{K}{2}$ together with $\EE_{\sigma^n}|n^{-1}\sum_i\sigma_i|\leq(\EE_{\sigma^n}(n^{-1}\sum_i\sigma_i)^2)^{1/2}=n^{-1/2}$ (Jensen), and the third term vanishes because $\EE(Y_0\mid Z_i)$ does not depend on $f$ or $\eta$, so it may be taken outside the supremum, whereupon $\EE_{\sigma^n}[\sigma_i]=0$. \eproof
	
	\subsection*{Proof of Lemma~\ref{Lem_Symmetrization}}
	Write $A_f(\eta)\triangleq\eta-\EE_{\mathbb{P}}[\varphi^*(\eta-\mathcal{W}(f,Z))]$ and $B_f(\eta)\triangleq\eta-\EE_{\mathbb{P}_n}[\varphi^*(\eta-\mathcal{W}(f,Z))]$. By Proposition~\ref{Cor_program_varphi} and Lemma~\ref{Lem_OCE_Range}, for every $f\in\mathcal{F}$,
	$$
	\rho_\varphi(\mathcal{W}(f,Z))=\sup_{\eta\in[a,b]}A_f(\eta),\qquad
	\rho_\varphi(\mathcal{W}_n(f,Z))=\sup_{\eta\in[a,b]}B_f(\eta).
	$$
	Since $A_f=(A_f-B_f)+B_f$, subadditivity of the supremum gives
	$$
	\sup_{\eta\in[a,b]}A_f(\eta)\;\leq\;\sup_{\eta\in[a,b]}\{A_f(\eta)-B_f(\eta)\}+\sup_{\eta\in[a,b]}B_f(\eta),
	$$
	and since the outer $\eta$ cancels in $A_f-B_f$, rearranging yields
	$$
	\rho_\varphi(\mathcal{W}(f,Z))-\rho_\varphi(\mathcal{W}_n(f,Z))
	\leq \sup_{\eta\in[a,b]}\bigl\{\EE_{\mathbb{P}_n}[\varphi^*(\eta-\mathcal{W}(f,Z))]-\EE_{\mathbb{P}}[\varphi^*(\eta-\mathcal{W}(f,Z))]\bigr\}.
	$$
	Taking the supremum over $f\in\mathcal{F}$, and recalling that $\mathcal{G}\triangleq\{\eta-\mathcal{W}(f,\cdot):f\in\mathcal{F},\;\eta\in[a,b]\}$ yields
	$$
	\sup_{f\in\mathcal{F}}\bigl(\rho_\varphi(\mathcal{W}(f,Z))-\rho_\varphi(\mathcal{W}_n(f,Z))\bigr)
	\leq\sup_{g\in\mathcal{G}}\bigl(\EE_{\mathbb{P}_n}[\varphi^*\circ g]-\EE_{\mathbb{P}}[\varphi^*\circ g]\bigr).
	$$
	Unlike $\rho_\varphi(\mathcal{W}_n(f,Z))$ itself, the right side is a genuine linear-in-the-sample empirical process ($\EE_{\mathbb{P}_n}[\varphi^*\circ g]$ is a sample mean), so the standard symmetrization argument applies without difficulty. Introducing a ghost sample $\{Z_i'\}_{i=1}^n$ i.i.d.\ from $\mathbb{P}$, using $\EE_{\mathbb{P}}[\varphi^*\circ g]=\EE_{Z'}\bigl[\tfrac{1}{n}\sum_i\varphi^*(g(Z_i'))\bigr]$, Jensen's inequality, and Rademacher signs $\{\sigma_i\}$ (since $(\sigma_iZ_i',\sigma_iZ_i)\overset{d}{=}(Z_i',Z_i)$),
	$$
	\EE\left[\sup_{g\in\mathcal{G}}\bigl(\EE_{\mathbb{P}_n}[\varphi^*\circ g]-\EE_{\mathbb{P}}[\varphi^*\circ g]\bigr)\right]
	\leq 2\,\mathfrak{R}_n(\{\varphi^*\circ g:g\in\mathcal{G}\})
	\leq 2L_\varphi\left(\EE[\mathfrak{R}_n(\mathcal{F},Z^n)]+\frac{K}{2\sqrt{n}}\right),
	$$
	where the last step is Lemma~\ref{Lem_Contraction}. Combining the displays gives~(\ref{Symmetrization_ineq}). \eproof

	\subsection*{Proof of Proposition~\ref{Prop_Kusuoka_Symmetrization}}
	By Theorem~\ref{Kusuoka_Welfare}(ii), $\rho(\cdot)=\inf_{\mu\in\mathcal{M}_\rho}\int_0^1 AV@R_\alpha(\cdot)\,\mu(d\alpha)$. Fix $f\in\mathcal{F}$ and let $\mu^\star_n(f)\in\mathcal{M}_\rho$ attain the infimum defining $\rho(\mathcal{W}_n(f,Z))$.\footnote{If the infimum is not attained, replace $\mu^\star_n(f)$ by an $\varepsilon$-minimizer; the same bound follows on letting $\varepsilon\downarrow 0$.} Evaluating $\rho(\mathcal{W}(f,Z))$ at the same $\mu^\star_n(f)$ and using that the infimum is a lower bound,
	\begin{align*}
		\rho(\mathcal{W}(f,Z))-\rho(\mathcal{W}_n(f,Z))
		&\leq\int_0^1\bigl(AV@R_\alpha(\mathcal{W}(f,Z))-AV@R_\alpha(\mathcal{W}_n(f,Z))\bigr)\,\mu^\star_n(f)(d\alpha)\\
		&\leq\sup_{\alpha\in[\alpha_{\min},1]}\bigl(AV@R_\alpha(\mathcal{W}(f,Z))-AV@R_\alpha(\mathcal{W}_n(f,Z))\bigr),
	\end{align*}
	where the second inequality uses Assumption~\ref{Ass_KusuokaTail} to restrict $\operatorname{supp}(\mu^\star_n(f))$ to $[\alpha_{\min},1]$, together with the fact that an integral against a probability measure never exceeds the supremum of the integrand.
	
	Each $AV@R_\alpha$ is a $\varphi$-divergence risk measure with conjugate $\varphi^*_{AV@R_\alpha}(s)=s_+/\alpha$. Taking $\sup_f$ and the expectation over the sample, the argument of Lemma~\ref{Lem_Symmetrization} applies with $\mathcal{G}$ replaced by $\widetilde{\mathcal{G}}\triangleq\{\varphi^*_{AV@R_\alpha}\circ g:g\in\mathcal{G},\,\alpha\in[\alpha_{\min},1]\},$
	indexed by $(f,\eta,\alpha)$ rather than $(f,\eta)$, since the optimized certainty equivalent representation, the range restriction of Lemma~\ref{Lem_OCE_Range}, and the subadditivity step all hold pointwise in $\alpha$. Symmetrization therefore gives
	\begin{equation}\label{eq:Kusuoka_symm_step}
		\EE\Bigl[\sup_{f\in\mathcal{F}}\bigl\{\rho(\mathcal{W}(f,Z))-\rho(\mathcal{W}_n(f,Z))\bigr\}\Bigr]\;\leq\;2\,\EE\bigl[\mathfrak{R}_n(\widetilde{\mathcal{G}};Z^n)\bigr].
	\end{equation}
	
	It remains to bound $\mathfrak{R}_n(\widetilde{\mathcal{G}};Z^n)$. Writing $\kappa\triangleq1/\alpha\in[1,1/\alpha_{\min}]$ and $\mathcal{H}\triangleq\bigl\{(\eta-\mathcal{W}(f,\cdot))_+:f\in\mathcal{F},\,\eta\in[a,b]\bigr\},$
	we have $\widetilde{\mathcal{G}}=\{\kappa h:h\in\mathcal{H},\,\kappa\in[1,1/\alpha_{\min}]\}$, with $h(Z_i)\in[0,K]$ for every $h\in\mathcal{H}$. Because $\kappa$ varies freely rather than being fixed, the contraction principle does not by itself bound $\mathfrak{R}_n(\widetilde{\mathcal{G}};Z^n)$ by $\alpha_{\min}^{-1}\mathfrak{R}_n(\mathcal{H};Z^n)$. For fixed $\sigma^n$ write $S_h\triangleq n^{-1}\sum_{i=1}^n\sigma_i h(Z_i)$. Since $\kappa\mapsto\kappa S_h$ is linear and $\kappa\geq1>0$, the pointwise-optimal $\kappa$ given $h$ is $1/\alpha_{\min}$ when $S_h\geq0$ and $1$ otherwise, so
	\[
	\sup_{h\in\mathcal{H},\,\kappa\in[1,1/\alpha_{\min}]}\kappa S_h\;\leq\;\frac{1}{\alpha_{\min}}\Bigl(\sup_{h\in\mathcal{H}}S_h\Bigr)^{+}.
	\]
	Applying the identity $x^{+}=x+x^{-}$ at $x=\sup_h S_h$, where $x^{+}\triangleq\max\{x,0\}$ and $x^{-}\triangleq\max\{-x,0\}$, and taking expectations over $\sigma^n$,
	\begin{equation}\label{eq:Kusuoka_residual_split}
		\mathfrak{R}_n(\widetilde{\mathcal{G}};Z^n)\;\leq\;\frac{1}{\alpha_{\min}}\Bigl(\mathfrak{R}_n(\mathcal{H};Z^n)+\EE_{\sigma^n}\bigl[(\sup_h S_h)^{-}\bigr]\Bigr),
	\end{equation}
	where $\mathfrak{R}_n(\mathcal{H};Z^n)\geq0$ by Jensen's inequality. The second term in \eqref{eq:Kusuoka_residual_split} is the residual generated by the free index $\alpha$; it is not proportional to $\mathfrak{R}_n(\mathcal{H};Z^n)$ and so contributes additively.
	
	To bound the residual, note that flipping a single $\sigma_i$ changes $\sup_h S_h$ by at most $2K/n$, since $h(Z_i)\in[0,K]$. By McDiarmid's inequality \citep[Theorem~6.2]{BoucheronLugosiMassart2013}, $Y\triangleq\sup_h S_h-\mathfrak{R}_n(\mathcal{H};Z^n)$ satisfies $\mathbb{P}(|Y|>t)\leq 2e^{-t^2/(2\sigma_m^2)}$ with $\sigma_m^2\triangleq K^2/n$. Integrating this tail yields $\EE|Y|\leq \sigma_m\sqrt{2\pi}$.
	Since $\mathfrak{R}_n(\mathcal{H};Z^n)\geq0$ we have $(\sup_h S_h)^{-}\leq Y^{-}$, and $\EE[Y^{-}]=\tfrac12\EE|Y|$ because $Y$ is mean zero, so
	\[
	\EE_{\sigma^n}\bigl[(\sup_h S_h)^{-}\bigr]\;\leq\;\frac{\sigma_mn\sqrt{2\pi}}{2}\;=\;\frac{K}{\sqrt{n}}\sqrt{\frac{\pi}{2}}.
	\]
	Finally, $t\mapsto t^{+}$ is $1$-Lipschitz with $0^{+}=0$, so the contraction principle \citep[Lemma~26.9]{Shalev-Shwartz_Ben-David_2014} gives $\mathfrak{R}_n(\mathcal{H};Z^n)\leq\mathfrak{R}_n(\mathcal{G};Z^n)$, and Lemma~\ref{Lem_Contraction} gives $\mathfrak{R}_n(\mathcal{G};Z^n)\leq\mathfrak{R}_n(\mathcal{F};Z^n)+K/(2\sqrt{n})$. Substituting into \eqref{eq:Kusuoka_residual_split} and then \eqref{eq:Kusuoka_symm_step} yields \eqref{Kusuoka_Symmetrization_ineq} with
	\[
	\epsilon_n=\frac{2}{\alpha_{\min}}\cdot\frac{K}{\sqrt{n}}\sqrt{\frac{\pi}{2}}=\frac{K}{\alpha_{\min}}\sqrt{\frac{2\pi}{n}}=O(n^{-1/2}).
	\]
	\eproof
	
	\subsection*{Proof of Proposition~\ref{Prop_UC}}
	We prove~(i); (ii) is identical with $L_\varphi\to1/\alpha_{\min}$, invoking Proposition~\ref{Prop_Kusuoka_Symmetrization} instead of Lemma~\ref{Lem_Symmetrization} throughout and carrying the additional $\epsilon_n$ slack. Define $H(Z^n)\triangleq\sup_{f\in\mathcal{F}}|
	\rho_\varphi(\mathcal{W}(f,Z))-\rho_\varphi(\mathcal{W}_n(f,Z))|$.
	
	\smallskip
	\noindent\textit{Step~1: Bounding $\mathbb{E}[H(Z^n)]$.}
	Since $H(Z^n)\leq\sup_{f\in\mathcal{F}}
	(\rho_\varphi(\mathcal{W}(f,Z))-\rho_\varphi(\mathcal{W}_n(f,Z)))+
	\sup_{f\in\mathcal{F}}(\rho_\varphi(\mathcal{W}_n(f,Z))-
	\rho_\varphi(\mathcal{W}(f,Z)))$, applying
	Lemma~\ref{Lem_Symmetrization} to each term (the second
	term is the symmetrized version with the roles of
	$\mathbb{P}$ and $\mathbb{P}_n$ exchanged, which has the
	same bound):
	\begin{equation}\label{Step1_UC}
		\mathbb{E}[H(Z^n)]
		\leq 2\cdot 2L_\varphi\!\left(\mathbb{E}[\mathfrak{R}_n(\mathcal{F},Z^n))]+\frac{K}{2\sqrt n}\right)
		= 4L_\varphi\,\mathbb{E}[\mathfrak{R}_n(\mathcal{F},Z^n)]+\frac{2L_\varphi K}{\sqrt n}.
	\end{equation}
	
	\noindent\textit{Step~2: Concentration via McDiarmid.}
	Replacing any single $Z_i$ by an independent copy $Z_i'$ changes $\rho_\varphi(\mathcal{W}_n(f,Z))=\sup_{\eta\in[a,b]}\{\eta-\tfrac1n\sum_j\varphi^*(\eta-\mathcal{W}(f,Z_j))\}$, for fixed $f$ and $\eta$, by at most $L_\varphi K/n$ (only the $i$-th summand changes, and $\varphi^*$ is $L_\varphi$-Lipschitz on the relevant range by Lemma~\ref{Lem_Contraction}); a supremum of functions each changing by at most $L_\varphi K/n$ itself changes by at most $L_\varphi K/n$, so $H(Z^n)$ changes by at most \textcolor{black}{$L_\varphi K/n$ (the population term is nonrandom, and both $|\cdot|$ and $\sup_f$ preserve the bound)}. By McDiarmid's inequality \citep[Theorem~6.2]{BoucheronLugosiMassart2013}:
	\begin{equation}\label{Step2_UC}
		\mathbb{P}[H(Z^n)-\mathbb{E}[H(Z^n)]\geq t]
		\leq\exp\!\left(-\frac{2nt^2}{L_\varphi^2K^2}\right).
	\end{equation}
	Setting the right side equal to $\delta$ and solving gives
	$t=L_\varphi K\sqrt{\log(1/\delta)/(2n)}$.
	Combining~(\ref{Step1_UC}) and~(\ref{Step2_UC}) then yields~(\ref{UC_bound})
	with probability at least $1-\delta$. \eproof
	
	\subsection*{Proof of Theorem~\ref{Thm_Regret}}
	Let $\hat{f}_n=\arg\max_{f\in\mathcal{F}}
	\rho(\mathcal{W}_n(f,Z))$. and Fix $\varepsilon>0$ and let $f^{*}_{\varepsilon}$ satisfy
	$\rho(\mathcal{W}(f^{*}_{\varepsilon},Z)) \ge
	\sup_{f\in\mathcal{F}}\rho(\mathcal{W}(f,Z)) - \varepsilon$. Decompose:
	\begin{align}
		\mathcal{R}_\rho(\hat{f}_n)
		&=\rho(\mathcal{W}(f^{*}_{\varepsilon},Z))-\rho(\mathcal{W}(\hat{f}_n,Z))
		\nonumber\\
		&=\underbrace{[\rho(\mathcal{W}(f^{*}_{\varepsilon},Z))
			-\rho(\mathcal{W}_n(f^{*}_{\varepsilon},Z))]}_{(A)}
		+\underbrace{[\rho(\mathcal{W}_n(f^{*}_{\varepsilon},Z))
			-\rho(\mathcal{W}_n(\hat{f}_n,Z))]}_{(B)}\nonumber\\
		&\quad+\underbrace{[\rho(\mathcal{W}_n(\hat{f}_n,Z))
			-\rho(\mathcal{W}(\hat{f}_n,Z))]}_{(C)}.
		\label{Regret_decomp}
	\end{align}
	Since $\hat{f}_n$ maximizes $\rho(\mathcal{W}_n(\cdot,Z))$
	over $\mathcal{F}$, term~$(B)\leq 0$. Dropping~$(B)$ and
	bounding each of $(A)$ and $(C)$ by
	$\sup_{f\in\mathcal{F}}|\rho(\mathcal{W}(f,Z))
	-\rho(\mathcal{W}_n(f,Z))|$ gives
	$\mathcal{R}_\rho(\hat{f}_n)\leq 2\sup_{f\in\mathcal{F}}
	|\rho(\mathcal{W}(f,Z))-\rho(\mathcal{W}_n(f,Z))|$.
	Applying Proposition~\ref{Prop_UC} to the right side and letting $\epsilon\rightarrow 0$
	gives~(\ref{Regret_bound}) and (\ref{Regret_bound_AVAR}). \eproof

	\bigskip
	\begin{center}
		{\Large\bfseries Online Supplemental Appendix}
	\end{center}
	\bigskip
	\noindent The remaining appendices may be skipped on a first reading. Appendix~\ref{app:jtpa} repeats the empirical analysis of Section~\ref{sec:empirical} on the JTPA data; Appendix~\ref{app:semidev} gives the full development of the mean--semideviation risk measure introduced in Section~\ref{s4}; Appendix~\ref{app:simulation} gives the full simulation study.
	
	\refstepcounter{section}
	
	\section*{Appendix B: The JTPA Application: When Risk Aversion Has No Bite}\label{app:jtpa}
	
	This appendix repeats the analysis of Section \ref{sec:empirical} of the main paper on
	the Job Training Partnership Act (JTPA) experimental data analyzed by
	\citet{Sasaki_Ura_2024}. The exercise serves two purposes. First, it
	engages the leading risk-neutral endogenous-treatment benchmark on its
	own dataset. Second---and this is the substantive finding---it
	delivers the opposite configuration from the Card application: on the
	JTPA data, the risk-neutral and risk-averse optimal rules essentially
	coincide. Together the two applications bracket the empirical content
	of the theory, showing not only when a coherent risk measure reorders
	the planner's assignment but also when it provably does not. A
	methodology that manufactured tail effects wherever it was pointed
	would be suspect; the JTPA results demonstrate that the criterion
	responds to a specific, diagnosable feature of the joint distribution
	of gains and baseline welfare, and reports its absence when absent.
	
	\subsection{Data and Estimation}\label{app:jtpa_setup}
	
	The sample comprises $n = 9{,}872$ adults from the National JTPA
	Study. The outcome $Y$ is log total earnings in the 30 months
	following random assignment, treatment $D$ is actual enrollment in
	JTPA services, and the instrument $Z_0$ is the randomized offer of
	services. Compliance is one-sided to a close approximation: $66.1\%$
	of those offered services enroll, against $1.5\%$ of those not
	offered. Covariates $X$ comprise gender, race, ethnicity, marital
	status, an indicator for a high-school diploma or GED, an indicator
	for having worked fewer than 13 weeks in the preceding year, AFDC
	receipt, and age-bracket indicators.
	
	Random assignment makes the instrument unconditionally valid but
	renders the propensity score nearly binary: $\hat{v}(Z_0 = 0) \approx
	0$, while $\hat{v}(Z_0 = 1)$ varies only modestly around $0.66$ with
	covariates. Nonparametric identification of the MTE over the full unit
	interval is therefore unavailable, and the local-IV regression is
	prone to numerical instability. Following the parametric
	extrapolation strategy of \citet{Sasaki_Ura_2024}, we estimate the
	outcome equation with a degree-3 polynomial in the propensity score,
	interacted with race and gender, and stabilize the near-collinear
	design with a Ridge penalty. Figure~\ref{Fig_JTPA_MTE} reports the
	estimated MTE; the level is anchored to the two-stage least squares
	estimated ($\hat{\beta}_{\mathrm{2SLS}} = 0.116$), consistent with the
	main text.
	
	The estimated gains are the first place the JTPA data diverge from
	Card. The mean individual gain is \emph{negative}
	($\bar{\widehat{\Delta}} = -0.17$ log points, s.d.\ $0.20$), and only
	$22.1\%$ of individuals have positive estimated gains ($39.1\%$ among
	Black individuals, $43.2\%$ among those without a high-school
	credential, $24.9\%$ among women). This is consistent with the modest
	experimental impacts documented in the JTPA literature
	\citep{BloomEtAl1997}: for most individuals, training is not estimated
	to raise earnings.
	
	\begin{figure}[t]
		\centering
		\includegraphics[width=\textwidth]{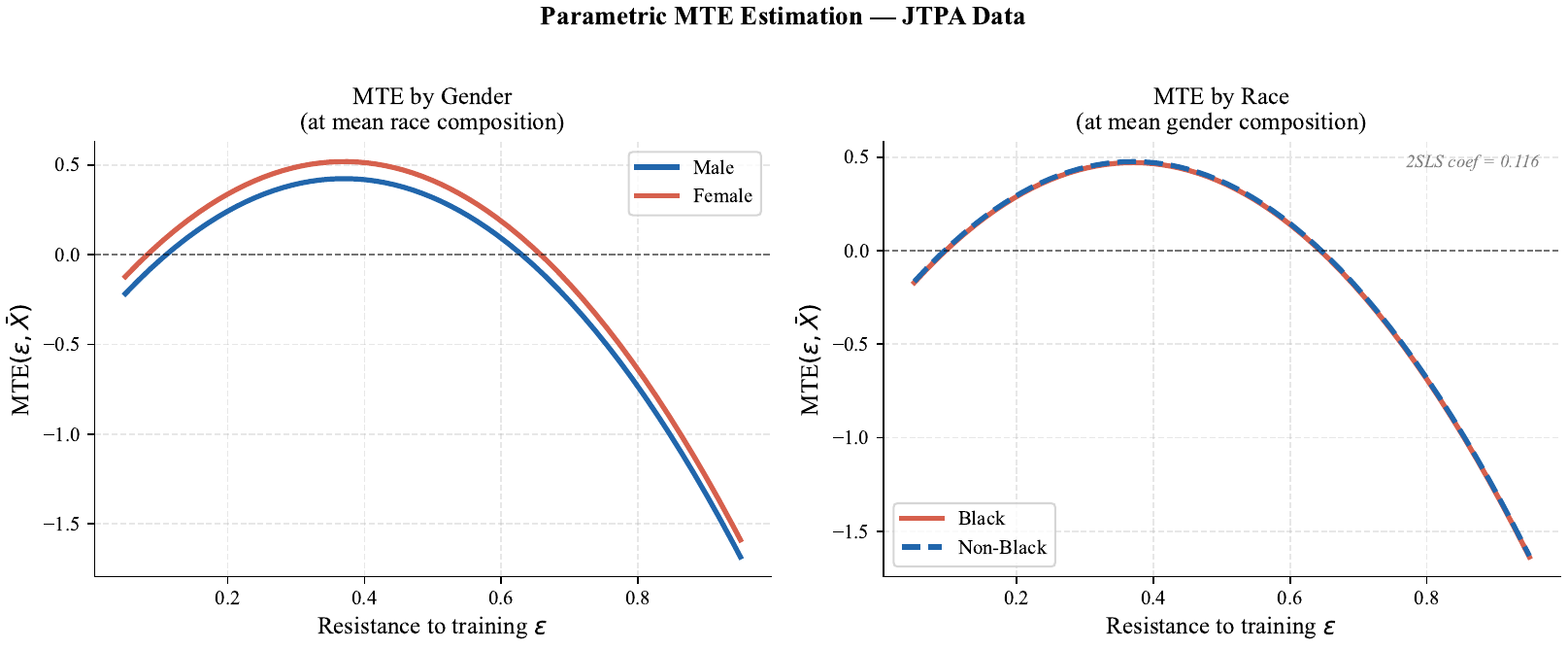}
		\caption{Parametric MTE estimates on the JTPA data, by gender and by
			race, with the level anchored to the 2SLS estimand. Identification
			over the interior of the unit interval rests on the parametric
			extrapolation discussed in the text.}
		\label{Fig_JTPA_MTE}
	\end{figure}
	
	\subsection{The Slack-Budget Regime: All Criteria Coincide}\label{app:jtpa_slack}
	
	Consider first the same $50\%$ budget constraint as the main text.
	Because only $22.1\%$ of individuals have positive estimated gains,
	the constraint is slack: every optimal rule treats exactly the
	positive-gain set and leaves budget unused. The four criteria---the
	mean and $\mathrm{AV@R}_{\beta}$ for $\beta \in \{0.50, 0.25,
	0.10\}$---produce \emph{identical} assignments, each treating $22.1\%$
	of the sample and attaining $\mathbb{E}[\mathcal{W}] = 9.079$ and
	$\mathrm{AV@R}_{0.25} = 8.557$.
	
	This exact coincidence is not a numerical accident but monotonicity
	made visible. Welfare is additively separable across individuals,
	$\mathcal{W}_i(\pi) = \mu_{0,i} + \pi_i \Delta_i$, so treating an
	individual with $\Delta_i > 0$ raises welfare pointwise and treating
	one with $\Delta_i < 0$ lowers it pointwise. Every coherent risk
	measure is monotone, and hence---whenever the budget exceeds the mass
	of positive-gain individuals---is maximized by treating precisely that
	set, exactly as the mean is. Risk aversion can only matter when the
	planner is forced to \emph{ration} among individuals who would
	benefit, or to trade one individual's welfare against another's; a
	slack budget forecloses both channels. The main text's $50\%$
	constraint binds on the Card data (where $96\%$ of gains are positive)
	and is slack here, which is itself informative about the two
	populations.
	
	\subsection{A Binding Budget: Economically Negligible Separation}\label{app:jtpa_binding}
	
	To give risk aversion its best chance, we tighten the budget to
	$10\%$, below the positive-gain share, so that rationing bites. The
	rules now differ, but barely. The risk-neutral and
	$\mathrm{AV@R}_{0.25}$-optimal rules disagree on $8.0\%$ of
	assignments; the risk-averse rule improves $\mathrm{AV@R}_{0.25}$ by
	$0.005$ log points ($8.5515 \to 8.5568$) at a mean-welfare cost of
	$0.004$---two orders of magnitude below the $0.171$ tail gain in the
	Card application, and economically negligible on any
	reading.\footnote{The $\mathrm{AV@R}_{0.25}$-optimal rule at the
		$10\%$ budget attains \emph{exactly} the slack-budget value of
		$8.5568$: so few individuals in the welfare tail have positive
		estimated gains that a $10\%$ budget already exhausts everything the
		risk-averse criterion can accomplish.} Subgroup treatment rates shift
	mildly toward lower-baseline groups, but the differences are within
	the flat region of the objective and we do not emphasize them.
	Figures~\ref{Fig_JTPA_CDF} and \ref{Fig_JTPA_AVaR} display the
	induced welfare distributions and the $\mathrm{AV@R}_{\beta}$
	profiles: in contrast to the crossing distributions seen for the Card
	data in the main text, the curves here are visually indistinguishable,
	with a maximal profile gap below $0.05$ log points.
	
	\begin{figure}[t]
		\centering
		\includegraphics[width=\textwidth]{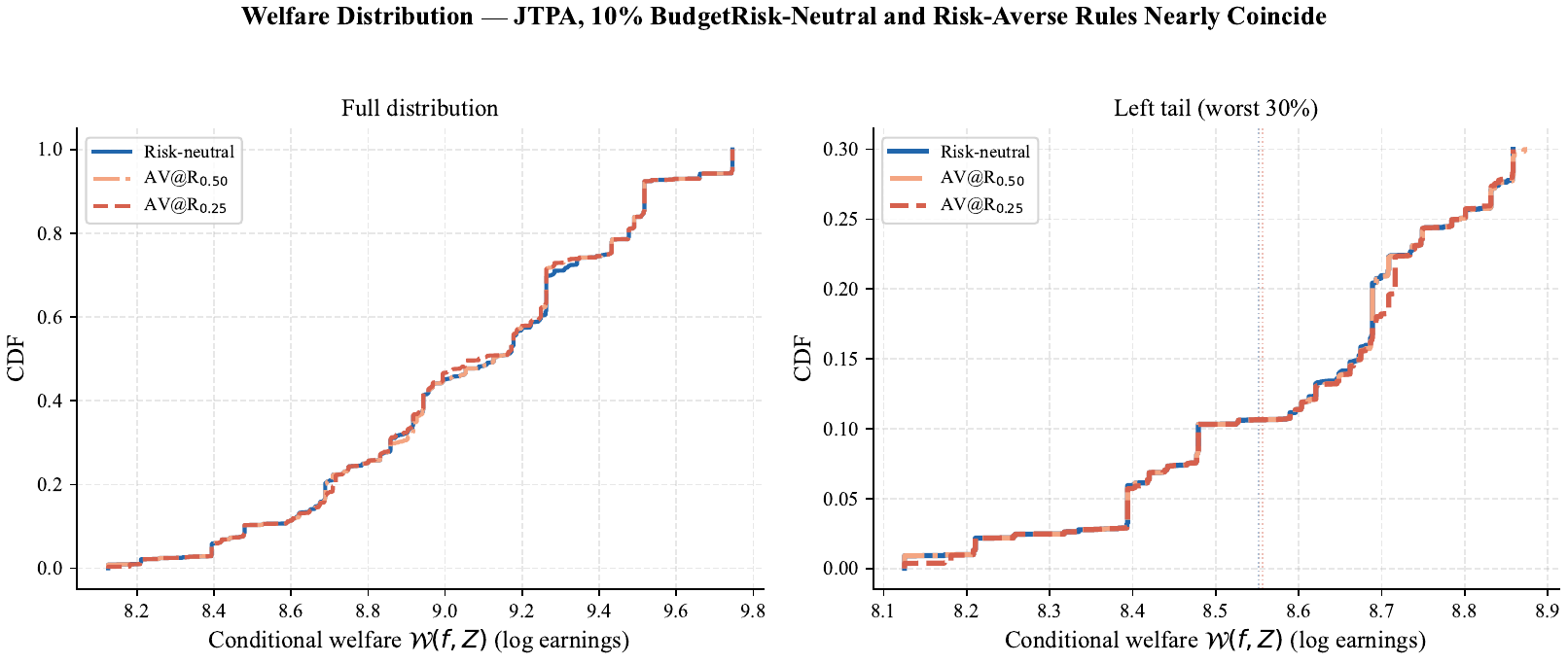}
		\caption{Welfare distributions induced by the optimal rules on the
			JTPA data under the binding $10\%$ budget. Unlike the Card
			application (main text), the distributions under
			risk-neutral and risk-averse rules nearly coincide.}
		\label{Fig_JTPA_CDF}
	\end{figure}
	
	\begin{figure}[t]
		\centering
		\includegraphics[width=\textwidth]{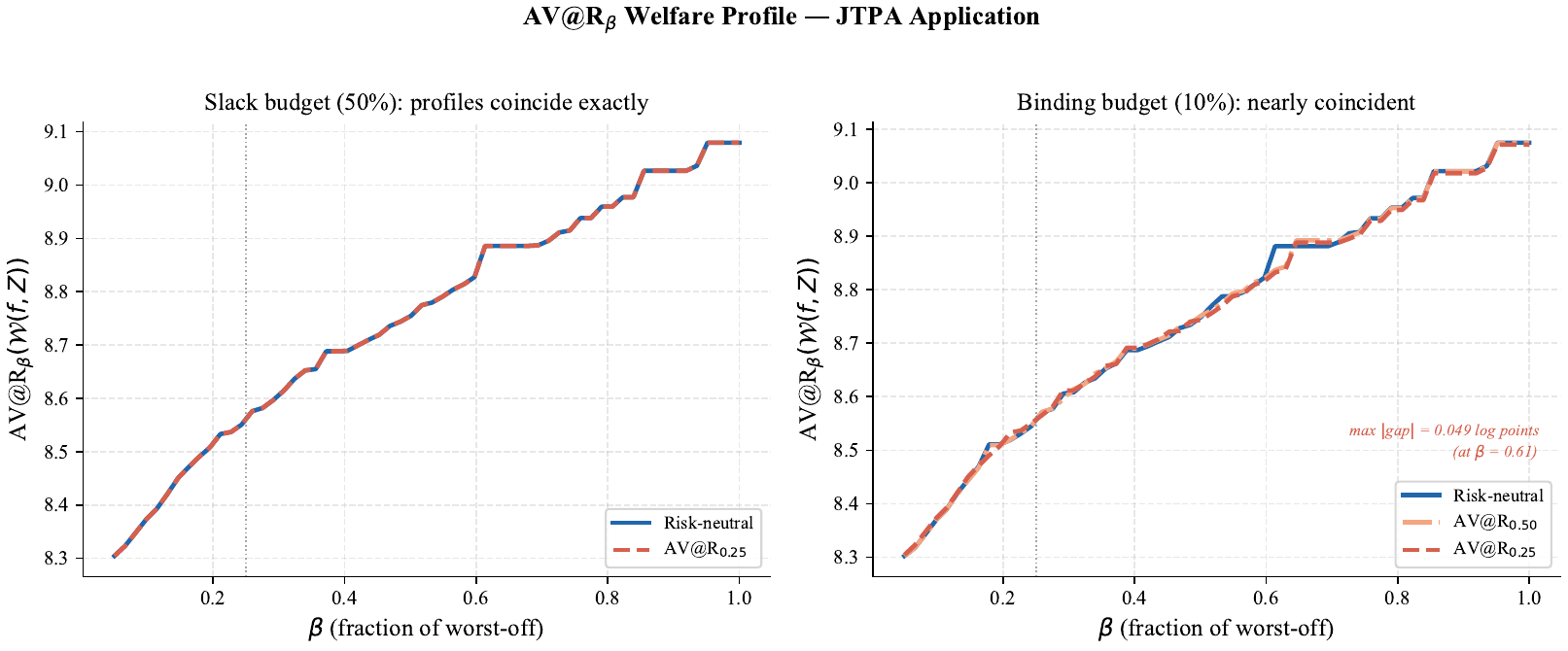}
		\caption{$\mathrm{AV@R}_{\beta}$ welfare profiles on the JTPA data.
			Left: under the slack $50\%$ budget the risk-neutral and risk-averse
			profiles coincide exactly. Right: under the binding $10\%$ budget
			they remain nearly coincident, with a maximal gap below $0.05$ log
			points.}
		\label{Fig_JTPA_AVaR}
	\end{figure}
	
	\subsection{Why: The Gains--Baseline Configuration}\label{app:jtpa_why}
	
	The mechanism paragraph of the empirical discussion in the main text identified
	two ingredients that give a risk-averse criterion bite: treatment
	gains large enough to move individuals across the welfare-tail
	threshold, and a negative association between gains and baseline
	welfare so that the movable individuals are tail individuals. The
	JTPA data lack both, and Table~\ref{Tab_Config} quantifies the
	contrast. On the Card data, $96\%$ of estimated gains are positive
	and the 90th percentile of $|\widehat{\Delta}|$ is five times the
	standard deviation of baseline welfare: treatment is a powerful lever
	on the shape of the welfare distribution. On the JTPA data, positive
	gains are scarce ($22\%$) and typical gain magnitudes are comparable
	to baseline dispersion ($0.43$ against $0.40$), while the
	gains--baseline correlation is weak ($-0.16$ against $-0.33$): the
	composition of the welfare tail is essentially fixed by covariates,
	and no feasible assignment reshapes it. Figure~\ref{Fig_JTPA_Config}
	displays the configuration directly.
	
	\begin{table}[t]
		\centering
		\caption{The gains--baseline configuration: Card vs.\ JTPA.}
		\label{Tab_Config}
		\begin{tabular}{lcc}
			\hline\hline
			& Card (1995) & JTPA \\
			\hline
			Share of positive estimated gains, $1\{\widehat{\Delta}_i > 0\}$
			& $0.960$ & $0.221$ \\
			90th percentile of $|\widehat{\Delta}_i|$ & $1.192$ & $0.432$ \\
			Std.\ dev.\ of baseline welfare $\hat{\mu}_{0,i}$ & $0.239$ & $0.399$ \\
			Correlation of $\widehat{\Delta}_i$ and $\hat{\mu}_{0,i}$
			& $-0.325$ & $-0.160$ \\
			\hline
			$\mathrm{AV@R}_{0.25}$ gain of risk-averse over risk-neutral rule
			& $0.171$ & $0.005$ \\
			\hline\hline
		\end{tabular}
		\par\smallskip
		{\footnotesize\emph{Notes:} Welfare in log wages (Card) and log
			30-month earnings (JTPA). Bottom row: binding-budget comparisons
			($50\%$ for Card, $10\%$ for JTPA).}
	\end{table}
	
	\begin{figure}[t]
		\centering
		\includegraphics[width=\textwidth]{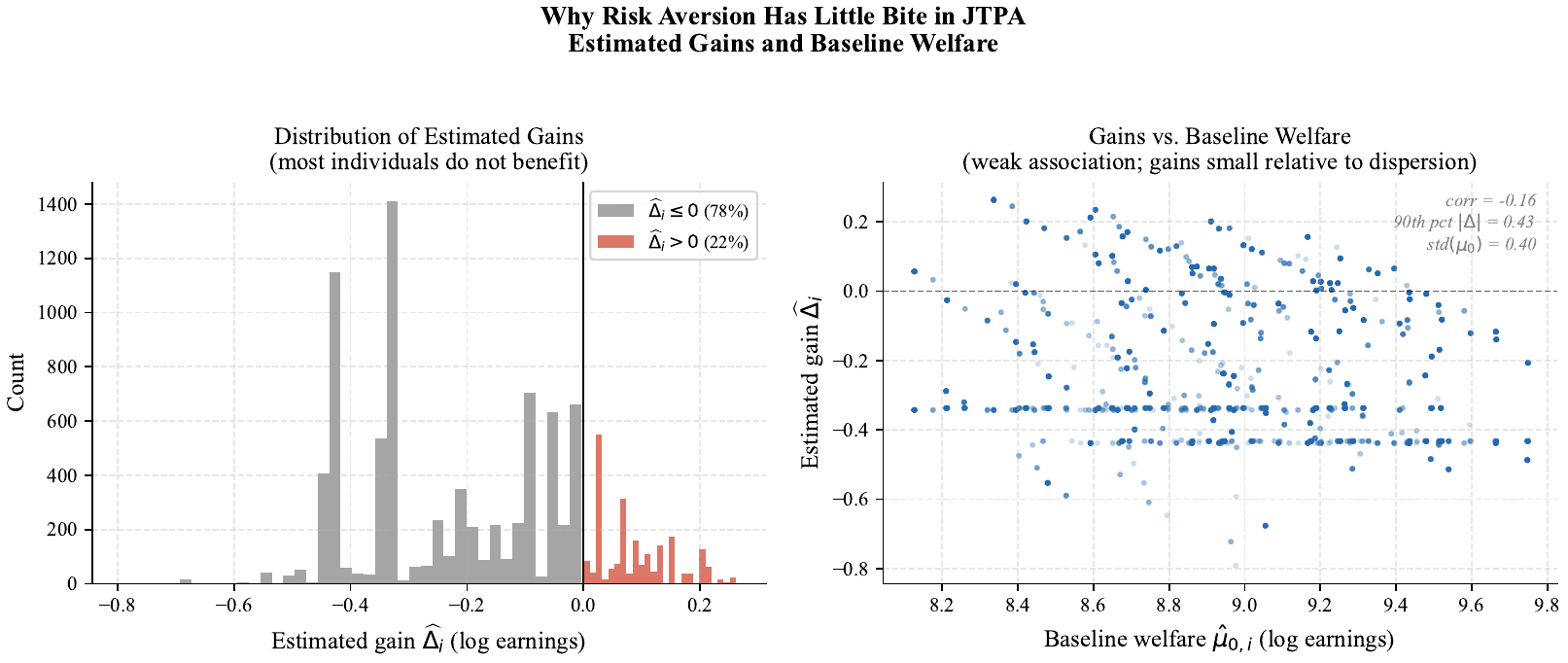}
		\caption{Why risk aversion has little bite on the JTPA data. Left:
			the distribution of estimated gains; most individuals do not benefit
			from treatment. Right: gains against baseline welfare; the
			association is weak and gain magnitudes are small relative to
			baseline dispersion, so treatment cannot reshape the welfare tail.}
		\label{Fig_JTPA_Config}
	\end{figure}
	
	We read these results as a dialogue with \citet{Sasaki_Ura_2024} in the
	most direct sense available: on their own dataset, our framework
	reports that their risk-neutral optimal rule is, to economic
	approximation, also the risk-averse optimal rule. The value added of
	a coherent risk criterion is population-specific, and the statistics
	in Table~\ref{Tab_Config} are computable diagnostics for whether a
	given application is a Card-like or a JTPA-like environment before
	any risk-averse machinery is deployed.
	
	\refstepcounter{section}
	
	\section*{Appendix C: The Mean--Semideviation Risk Measure}\label{app:semidev}
	
	This appendix gives the full, self-contained treatment of the mean--semideviation risk measure ($p=1$) referenced in Section~\ref{s4}: verification of Assumption~\ref{Ass_KusuokaTail}, the local-regularity condition and empirical tail-control lemma this requires, the resulting finite-sample regret bound (Table~\ref{tab:lipschitz}), and a discussion of why the case $p\neq1$ remains open.
	
	\subsection{Verification for $p=1$.} Absolute mean--semideviation, $\rho_{\mathrm{Semi}}(W)=\EE[W]-c\,\mathbb{D}_1^-[W]$ with $c\in[0,1]$, has an explicit Kusuoka representation, obtained by translating Example~2 of \citet[p.~151]{Shapiro_Kusuoka2013} (stated there for a loss random variable $Z=-W$) into our welfare (reward) convention, in which $\mathrm{AV@R}_\alpha^{\text{Shapiro}}(-W)=-\mathrm{AV@R}_{1-\alpha}(W)$ and $\mathrm{AV@R}_1(W)=\EE[W]$ (Example~\ref{Example1}):
	\begin{equation}\label{Semi_Kusuoka_p1}
		\rho_{\mathrm{Semi}}(W)=\inf_{\kappa\in(0,1)}\Bigl\{(1-c\kappa)\,\mathrm{AV@R}_1(W)+c\kappa\,\mathrm{AV@R}_\kappa(W)\Bigr\},
	\end{equation}
	i.e., a two-point Kusuoka measure $\mu_\kappa=(1-c\kappa)\delta_1+c\kappa\delta_\kappa$. Writing $g(\kappa)$ for the bracketed expression, $g(0^+)=g(1)=\EE[W]$ (using $\mathrm{AV@R}_1(W)=\EE[W]$ at both endpoints) while $g(\kappa)\leq\EE[W]$ throughout, with strict inequality whenever $c>0$ and $W$ is non-degenerate; hence, whenever $\rho_{\mathrm{Semi}}(W)<\EE[W]$, the infimum in~(\ref{Semi_Kusuoka_p1}) is attained at an \emph{interior} $\kappa^\star\in(0,1)$, not at the boundary $\kappa\to0$. Translating Shapiro's extreme-point construction ($\kappa=\Pr\{Z>\EE[Z]\}$ for the loss variable) into our convention gives $\kappa^\star(f)=\Pr\{\mathcal{W}(f,Z)<\EE[\mathcal{W}(f,Z)]\}$, the probability that welfare falls below its own mean. Consequently Assumption~\ref{Ass_KusuokaTail} holds for $p=1$ mean--semideviation with
	$$
	\alpha_{\min}=\inf_{f\in\mathcal{F}}\kappa^\star(f)=\inf_{f\in\mathcal{F}}\Pr\{\mathcal{W}(f,Z)<\EE[\mathcal{W}(f,Z)]\},
	$$
	\emph{provided} this infimum is strictly positive---a mild non-degeneracy condition on $\mathcal{F}$ (it fails only if some policy pushes welfare to be almost surely at or above its own mean, i.e., an essentially one-sided welfare distribution) rather than a property that needs to be independently assumed about the risk measure itself.
	
	\subsection{Local Regularity and Empirical Tail Control}
	
	Assumption~\ref{Ass_KusuokaTail} requires the tail-support condition to hold for $\mathcal{W}_n(f,Z)$ as well, i.e., for the \emph{empirical} analogue $\kappa_n^\star(f)\triangleq\tfrac1n\sum_{i=1}^n\mathds{1}\{\mathcal{W}(f,Z_i)<\bar{\mathcal{W}}_n(f)\}$, where $\bar{\mathcal{W}}_n(f)\triangleq\tfrac1n\sum_i\mathcal{W}(f,Z_i)$. The following shows that population non-degeneracy transfers to the sample with high probability, uniformly over $\mathcal{F}$, under one additional mild regularity condition.
	
	\begin{ass}[Local regularity of the welfare distribution]\label{Ass_LocalReg}
		There exist constants $c_0,r_0>0$ such that, writing $\mu(f)\triangleq\EE[\mathcal{W}(f,Z)]$ and $F_f(t)\triangleq\Pr\{\mathcal{W}(f,Z)\leq t\}$, for every $f\in\mathcal{F}$ and all $s,t$ with $|s-\mu(f)|,|t-\mu(f)|\leq r_0$,
		$$
		|F_f(s)-F_f(t)|\leq c_0|s-t|.
		$$
	\end{ass}
	
	This holds, e.g., whenever $\mathcal{W}(f,Z)$ has a density bounded by $c_0$ in a neighborhood of its own mean, uniformly over $f\in\mathcal{F}$; it rules out policies under which welfare has an atom, or an arbitrarily steep CDF, exactly at its mean.
	
	\smallskip
	Let Assumptions~\ref{Assump_MTE}, \ref{Assumption2_OCE_Bounded}, \ref{Ass_Y0_Bounded}, and~\ref{Ass_LocalReg} hold, and assume the class $\Pi$ of admissible treatment rules is a VC class with finite VC dimension $V<\infty$. Suppose $\alpha_{\min}\triangleq\inf_{f\in\mathcal{F}}\kappa^\star(f)>0$. Then for any $\delta\in(0,1]$, with probability at least $1-\delta$,
	\begin{equation}\label{Empirical_Kusuoka_bound}
		\sup_{f\in\mathcal{F}}\bigl|\kappa_n^\star(f)-\kappa^\star(f)\bigr|
		\leq C\sqrt{\frac{\textcolor{black}{8V\log(en/(8V))}}{n}}+c_0\left(2\,\EE[\mathfrak{R}_n(\mathcal{F},Z^n)]+K\sqrt{\frac{2\log(4/\delta)}{n}}\right),
	\end{equation}
	for an absolute constant $C$. Consequently, for every $n$ large enough that the right side of~(\ref{Empirical_Kusuoka_bound}) is less than $\alpha_{\min}/2$, Assumption~\ref{Ass_KusuokaTail} holds for $\mathcal{W}_n(f,Z)$ with tail constant $\alpha_{\min}/2$, uniformly over $f\in\mathcal{F}$, with probability at least $1-\delta$.\label{Lem_Empirical_Kusuoka}

	\smallskip
	\noindent\textit{Proof.} Write $F_{n,f}(t)\triangleq\tfrac1n\sum_i\mathds{1}\{\mathcal{W}(f,Z_i)\leq t\}$ for the empirical CDF and $\hat\mu(f)\triangleq\bar{\mathcal{W}}_n(f)$. Decompose
	$$
	\kappa_n^\star(f)-\kappa^\star(f)=\underbrace{\bigl[F_{n,f}(\hat\mu(f))-F_f(\hat\mu(f))\bigr]}_{(\mathrm{I})}+\underbrace{\bigl[F_f(\hat\mu(f))-F_f(\mu(f))\bigr]}_{(\mathrm{II})}.
	$$
	\textit{Bounding $(\mathrm{I})$ uniformly.} Since $\mathcal{W}(f,z)=\EE[Y_0\mid z]+\pi(z)\int_0^1\mathrm{MTE}(x,\nu)\,d\nu$ with $\pi\in\Pi$ ranging over a VC class of dimension $V$ and the remaining terms fixed (not indexed by $\pi$), the sub-level set $\{z:\mathcal{W}(f,z)\leq t\}$ equals $[\pi^{-1}(0)\cap C_0(t)]\cup[\pi^{-1}(1)\cap C_1(t)]$ for fixed, one-dimensional threshold families $C_0(t),C_1(t)$ of VC dimension $1$. \textcolor{black}{Crucially, for fixed $\pi$ this reduces to a single threshold sweep of the fixed function $h_\pi(z)=\EE(Y_0\mid z)+\pi(z)\int_0^1\mathrm{MTE}(x,\nu)\,d\nu$: any $\pi,\pi'$ inducing the same restriction pattern on $n$ points give identical values of $h_\pi$ there, hence the same achievable threshold patterns, so $\Delta_{\{z:\mathcal{W}(f,z)\leq t\}}(n)\leq\Delta_\Pi(n)\cdot(n+1)\leq(en/V)^V(n+1)$ for every $n\geq V$, the last step by the standard Sauer--Shelah corollary (cf.\ the general Boolean-combination preservation principle in \citealp[Lemma~2.6.17--2.6.18]{VaartWellner1996}). Taking $n=8V$ gives $(en/V)^V(n+1)<2^n$ for every $V\geq1$, so the class $\{z:\mathcal{W}(f,z)\leq t\}_{f\in\mathcal{F},t\in\mathbb{R}}$ has VC dimension at most $8V$.} The standard VC uniform convergence bound (the same tool underlying $\mathfrak{R}_n(\mathcal{F},Z^n)\leq M\sqrt{2V\log(en/V)/n}$, \citealp[Thm.~26.5]{Shalev-Shwartz_Ben-David_2014}) then gives $\sup_{f,t}|F_{n,f}(t)-F_f(t)|\leq C\sqrt{\textcolor{black}{8V\log(en/(8V))}/n}$ with probability at least $1-\delta/2$, for an absolute constant $C$; in particular this bounds $|(\mathrm{I})|$ pointwise in $f$.
	
	\smallskip
	\textit{Bounding $(\mathrm{II})$ uniformly.} By Assumption~\ref{Ass_LocalReg}, whenever $|\hat\mu(f)-\mu(f)|\leq r_0$, $|(\mathrm{II})|\leq c_0|\hat\mu(f)-\mu(f)|$. But $\sup_f|\hat\mu(f)-\mu(f)|=\sup_f|\bar{\mathcal{W}}_n(f)-\EE[\mathcal{W}(f,Z)]|$ is exactly the risk-neutral ($\rho=\EE$) special case of Proposition~\ref{Prop_UC}(i) (equivalently, Corollary~\ref{Cor_specific}(iii) with $L_\varphi=1$), so it is bounded by $2\,\EE[\mathfrak{R}_n(\mathcal{F},Z^n)]+K\sqrt{2\log(4/\delta)}/\sqrt n$ with probability at least $1-\delta/2$; for $n$ large enough that this bound is below $r_0$, the Lipschitz bound on $(\mathrm{II})$ applies uniformly over $f\in\mathcal{F}$.
	
	\smallskip
	Combining the two bounds via a union bound over the two probability-$\delta/2$ events gives~(\ref{Empirical_Kusuoka_bound}). The final claim follows since $\kappa_n^\star(f)\geq\kappa^\star(f)-\sup_f|\kappa_n^\star(f)-\kappa^\star(f)|\geq\alpha_{\min}-\alpha_{\min}/2=\alpha_{\min}/2$ for every $f\in\mathcal{F}$ once the right side of~(\ref{Empirical_Kusuoka_bound}) is below $\alpha_{\min}/2$. \eproof
	
	This closes the remaining gap for $p=1$, giving Assumption~\ref{Ass_KusuokaTail} an explicit, high-probability finite-sample threshold on $n$ rather than leaving it as an unverifiable ``for every $n$'' postulate.
	
	\smallskip
	\subsection{Regret bound.} \textbf{Mean--semideviation, $p=1$} ($c\in[0,1]$, via Proposition~\ref{Prop_Kusuoka_Symmetrization} and~(\ref{Semi_Kusuoka_p1})): provided $\alpha_{\min}\triangleq\inf_{f\in\mathcal{F}}\Pr\{\mathcal{W}(f,Z)<\EE[\mathcal{W}(f,Z)]\}>0$ (Assumption~\ref{Ass_LocalReg} holds) and $n$ exceeds the explicit, data-generating-process-dependent threshold above (so that $\kappa_n^\star(f)\geq\alpha_{\min}/2$ uniformly over $f\in\mathcal{F}$ with probability at least $1-\delta$), with probability at least $1-\delta$,
	$$
	\mathcal{R}_\rho(\hat f_n) \le \frac{4}{\alpha_{\min}} Q_n(\delta)
	+ 4\,\epsilon_n(\alpha_{\min}/2).
	$$
	Here $Q_n(\delta)$ and $\epsilon_n(\cdot)$ are as in
	Theorem~\ref{Thm_Regret} and
	Proposition~\ref{Prop_Kusuoka_Symmetrization}. Substituting the verified
	empirical tail constant $\alpha_{\min}/2$ for $\alpha_{\min}$ throughout
	Theorem~\ref{Thm_Regret}(ii) doubles both terms: the leading coefficient
	becomes $4/\alpha_{\min}$, and the residual becomes
	$4\epsilon_n(\alpha_{\min}/2)=8\epsilon_n(\alpha_{\min})$. The
	corresponding entry in Table~\ref{tab:lipschitz} is therefore
	$2/\alpha_{\min}$, twice the generic Kusuoka constant $1/\alpha_{\min}$.
	
	\smallskip
	\subsection{The open case}The case $p\neq1$ remains open. Shapiro's construction is special to $p=1$: it exploits that the dual set of $\mathbb{D}_1^-$ (an $L^1$-type deviation) is generated by a single extreme point per Kusuoka measure, which in turn relies on comonotonicity---a property specific to the absolute (order-$1$) semideviation and not known to hold for $\mathbb{D}_p^-$, $p\neq1$, in general (Theorem~2 of \citet{Shapiro_Kusuoka2013} ties the clean discrete Kusuoka representation to comonotonicity). Whether $\mathbb{D}_p^-$ for $p\neq1$ admits a Kusuoka measure with support bounded away from $0$ is accordingly a separate question we have not resolved; the regret bound above is therefore stated only for $p=1$.
	\refstepcounter{section}
	
	\section*{Appendix D: Simulation Study}\label{app:simulation}
	
	This appendix discuss a  full simulation study : the data-generating process, the population analysis of the mean--tail conflict, and the finite-sample regret exercise, together with all four accompanying figures. All risk-averse policies below are computed by exact maximization of the stated criterion over the policy class; the Rockafellar--Uryasev representation
	\begin{equation}
		\label{eq:ru}
		\mathrm{AV@R}_\beta\bigl(\mathcal{W}(f,Z)\bigr)
		\;=\; \max_{c \in \mathbb{R}}
		\Bigl\{ c - \tfrac{1}{\beta}\,
		\mathbb{E}\bigl[(c - \mathcal{W}(f,Z))^{+}\bigr] \Bigr\}
	\end{equation}
	reduces this maximization, for each candidate policy, to a
	one-dimensional concave problem, and the finite policy class introduced
	below permits direct enumeration. No approximation to the risk measure
	is involved at any step.
	
	\subsection{Design}
	\label{subsec:sim-design}
	
	The covariate $X \sim U(0,1)$ is a socioeconomic index, and
	$Z_0 \sim N(0,1)$ is an excluded instrument. Selection follows the
	threshold-crossing model of Section~\ref{s2}, with
	$U \sim U(0,1)$ independent of $(X, Z_0)$,
	\[
	D \;=\; 1\{ U \le p(X, Z_0) \}, \qquad
	p(x, z_0) \;=\; \bigl[\, 0.30 + 0.30\,x + 0.25\,z_0 \,\bigr]_{0.05}^{0.95},
	\]
	where $[\,\cdot\,]_{a}^{b}$ denotes clipping to $[a,b]$; the clipping
	bounds are reported for replicability. Potential outcomes are
	\[
	Y_0 \;=\; \mu_0(X) + \eta, \qquad
	Y_1 \;=\; Y_0 + \delta(X) + \gamma_a - \gamma_b\, U,
	\qquad \eta \sim N(0,\, 0.8^2),
	\]
	with $\mu_0(x) = 1 + 2x$, $\delta(x) = -0.2 + 2.2x$, $\gamma_a = 1$,
	and $\gamma_b = 2$, so that
	$\mathrm{MTE}(x, u) = \delta(x) + \gamma_a - \gamma_b u$ exhibits
	essential heterogeneity: agents with low $U$ have both higher
	unobserved gains and higher treatment take-up. The integrated MTE is
	\[
	\bar{\Delta}(x) \;=\; \int_0^1 \mathrm{MTE}(x,u)\, du
	\;=\; \delta(x) + \gamma_a - \tfrac{\gamma_b}{2} \;=\; \delta(x),
	\]
	so conditional welfare under policy $\pi$ is
	$\mathcal{W}(\pi, Z) = \mu_0(X) + \pi(Z)\,\bar{\Delta}(X)$. Two features of
	the design do deliberate work. First, both the baseline $\mu_0$ and the
	gain $\bar{\Delta}$ increase in $X$: the individuals who benefit most
	from treatment are also the best off without it, so a budget-constrained
	planner faces a genuine conflict between the mean and the tail.
	Second, $\bar{\Delta}(x) < 0$ for $x < x_{\dagger} \equiv 0.091$:
	treatment \emph{harms} the most disadvantaged individuals. This zone
	disciplines the comparison across planners in a way that a uniformly
	beneficial treatment cannot.
	
	Note that the outcome noise $\eta$ does not enter
	$\mathcal{W}(f,Z)$, which is a conditional expectation given $Z$; risk
	aversion in our framework operates on the cross-sectional distribution
	of conditional welfare, not on idiosyncratic outcome variance. The
	noise matters only for the difficulty of \emph{estimating} the MTE
	ingredients, which is the subject of Section~\ref{subsec:sim-regret}.
	
	The policy class consists of interval rules on the index,
	\[
	\Pi \;=\; \Bigl\{ \pi_{t,w}(x) = 1\{ t \le x \le t + w \}
	\;:\; 0 \le t,\; t + w \le 1,\; 0 \le w \le \bar{w} \Bigr\},
	\qquad \bar{w} = 0.5,
	\]
	so the budget constraint caps the treated share at 50\%, mirroring the
	empirical application of the main text. The class has
	finite VC dimension, and we compute all population optima by
	enumeration on a grid over $(t, w)$ with the population criterion
	evaluated exactly.
	
	\subsection{Population optima: the mean--tail conflict}
	\label{subsec:sim-population}
	
	Figure~\ref{fig:sim-rules} displays the primitives and the optimal
	intervals. The risk-neutral planner solves
	$\max_{f \in \mathcal{F}} \mathbb{E}[\mathcal{W}(f,Z)]$ and, because
	$\bar{\Delta}$ is increasing, exhausts the budget on the top half,
	treating $[0.50,\, 1.00]$. The AV@R planners shift the interval toward
	the bottom of the distribution: the $\mathrm{AV@R}_{0.50}$ optimum is
	$[0.17,\, 0.67]$ and the $\mathrm{AV@R}_{0.25}$ optimum is
	$[0.09,\, 0.49]$. The lower endpoint of the latter coincides with
	$x_{\dagger}$: the tail-focused planner extends treatment as far down
	the distribution as it helps, and stops precisely where gains turn
	negative. We emphasize that this boundary behavior is not imposed; it
	is found by the optimizer, and it illustrates that risk aversion in
	the planner's criterion is not a mandate to treat the worst-off
	regardless of consequences, but to raise the lower tail of welfare by
	whatever assignment achieves it.
	
	The naive planner estimates gains ignoring the instrument, comparing
	treated and untreated outcomes conditional on $X$ alone. Under
	essential heterogeneity this overstates gains, and the overstatement
	is largest where take-up is lowest --- here, at the bottom of the
	index. In population, the naive gain estimate is
	$\tilde{\Delta}(x) = \delta(x) + \gamma_a - \gamma_b\,\bar{p}(x)/2 > \bar{\Delta}(x)$,
	where $\bar{p}(x) = \mathbb{E}[p(x, Z_0)]$, and $\tilde{\Delta}$ is
	strictly positive on all of $[0,1]$: the naive planner does not see
	the negative-gain zone at all. Its $\mathrm{AV@R}_{0.25}$ rule treats
	$[0.00,\, 0.50]$, pushing treatment onto individuals whom treatment
	harms. Figure~\ref{fig:sim-cdfs} shows the consequence in the welfare
	distribution: at the first percentile, welfare under the naive rule is
	$0.84$, against $1.02$ under either the risk-neutral or the correctly
	computed risk-averse rule. The naive planner, attempting to protect
	the worst-off, is the only one who makes them worse off. The true
	$\mathrm{AV@R}_{0.25}$ value of the naive rule is $1.325$, against the
	oracle value $1.361$; this gap reappears as the asymptote of the naive
	regret curve below.
	
	Figure~\ref{fig:sim-profile} traces
	$\mathrm{AV@R}_\beta(\mathcal{W}(f,Z))$ as a function of $\beta$ for
	each fixed rule. The risk-averse rules dominate the risk-neutral rule
	for tail levels $\beta$ roughly below $0.64$ and are dominated above,
	with the crossing reflecting the price of tail protection paid in mean
	welfare ($2.18$ under the $\mathrm{AV@R}_{0.25}$ rule against $2.73$
	under the risk-neutral rule). At the extreme tail
	($\beta \lesssim 0.09$) the risk-neutral and risk-averse profiles
	coincide: the individuals below $x_{\dagger}$ cannot be helped by this
	treatment, so no assignment separates the rules there. The profile
	thus makes visible both what risk aversion buys and what no policy in
	the class can buy.
	
	\subsection{Regret of the empirically selected policy}
	\label{subsec:sim-regret}
	
	We now turn to the finite-sample exercise that
	Theorem~\ref{Thm_Regret} governs. For each sample size
	$n \in \{250, 500, 1000, 2500, 5000, 10000\}$ and each of $120$
	Monte Carlo replications, we estimate the MTE ingredients from the
	sample --- a linear propensity fit for $\hat{p}$, followed by the
	regression of $Y$ on $\bigl(1, X, \hat{p}, \hat{p}X, \hat{p}^2\bigr)$
	implied by the local instrumental variables representation, from which
	$\hat{\mu}_0(\cdot)$ and $\hat{\bar{\Delta}}(\cdot)$ are recovered ---
	and select $\hat{\pi}_n \in \Pi$ by maximizing the \emph{empirical}
	criterion (the sample mean, or the sample $\mathrm{AV@R}_\beta$ via
	the empirical analogue of \eqref{eq:ru}) over the class. The regret we
	report is the population one:
	$\rho(\mathcal{W}(f^{*},Z)) - \rho(\mathcal{W}(\hat{f}_n,Z))$, with
	both terms evaluated at the true distribution. Nothing about the
	selection step uses knowledge of the data-generating process.
	
	Figure~\ref{fig:sim-regret} reports mean regret on logarithmic axes.
	Three features bear emphasis. First, regret vanishes for all
	MTE-based criteria, and at each sample size the curves are ordered
	exactly as the Lipschitz constants of
	Proposition~\ref{Prop_UC} predict: $L = 1$ for the mean and
	$L = 1/\beta$ for $\mathrm{AV@R}_\beta$, so the $\beta = 0.25$
	criterion pays roughly the constant-factor premium over the
	risk-neutral criterion that the theory anticipates --- a
	multiplicative cost in sample complexity, not a qualitative barrier.
	Second, the observed decay is faster than the $n^{-1/2}$ reference
	line. This is expected: Theorem~\ref{Thm_Regret} is a uniform upper
	bound, and in a finite class with a strict margin between the optimum
	and its nearest competitor, regret decays faster once the optimal
	rule is identified with high probability. The figure should therefore
	be read as confirming the ordering and the sufficiency of the rate,
	not as estimating it. Third, the regret of the naive planner does not
	vanish: it converges to the population gap of $0.036$ computed above.
	Endogenous selection thus enters the figure as an irreducible bias
	that no sample size repairs --- the graphical counterpart of the
	observation, made in the Introduction, that approaches built
	on unconfoundedness are not applicable in this environment.
	
	\begin{figure}[t]
		\centering
		\includegraphics[width=0.85\textwidth]{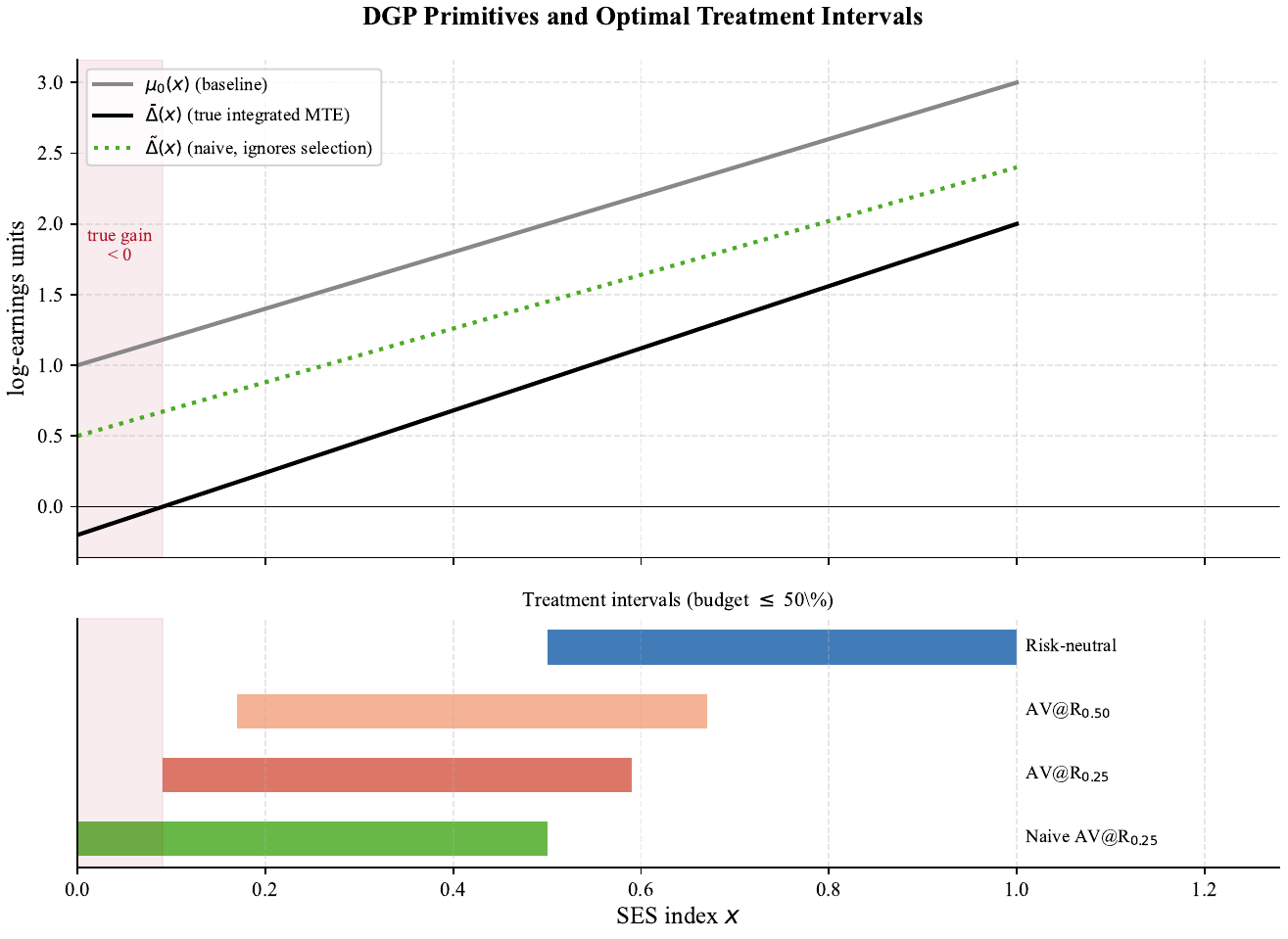}
		\caption{\textbf{DGP primitives and optimal treatment intervals.}
			Top panel: baseline welfare $\mu_0(x)$, true integrated MTE
			$\bar{\Delta}(x)$, and the naive planner's population gain estimate
			$\tilde{\Delta}(x)$; the shaded region marks the negative-gain zone
			$x < x_{\dagger} = 0.091$. Bottom panel: optimal intervals under
			each criterion (budget $\le 50\%$). The $\mathrm{AV@R}_{0.25}$
			interval begins at $x_{\dagger}$; the naive interval extends into
			the negative-gain zone.}
		\label{fig:sim-rules}
	\end{figure}
	
	\begin{figure}[t]
		\centering
		\includegraphics[width=0.95\textwidth]{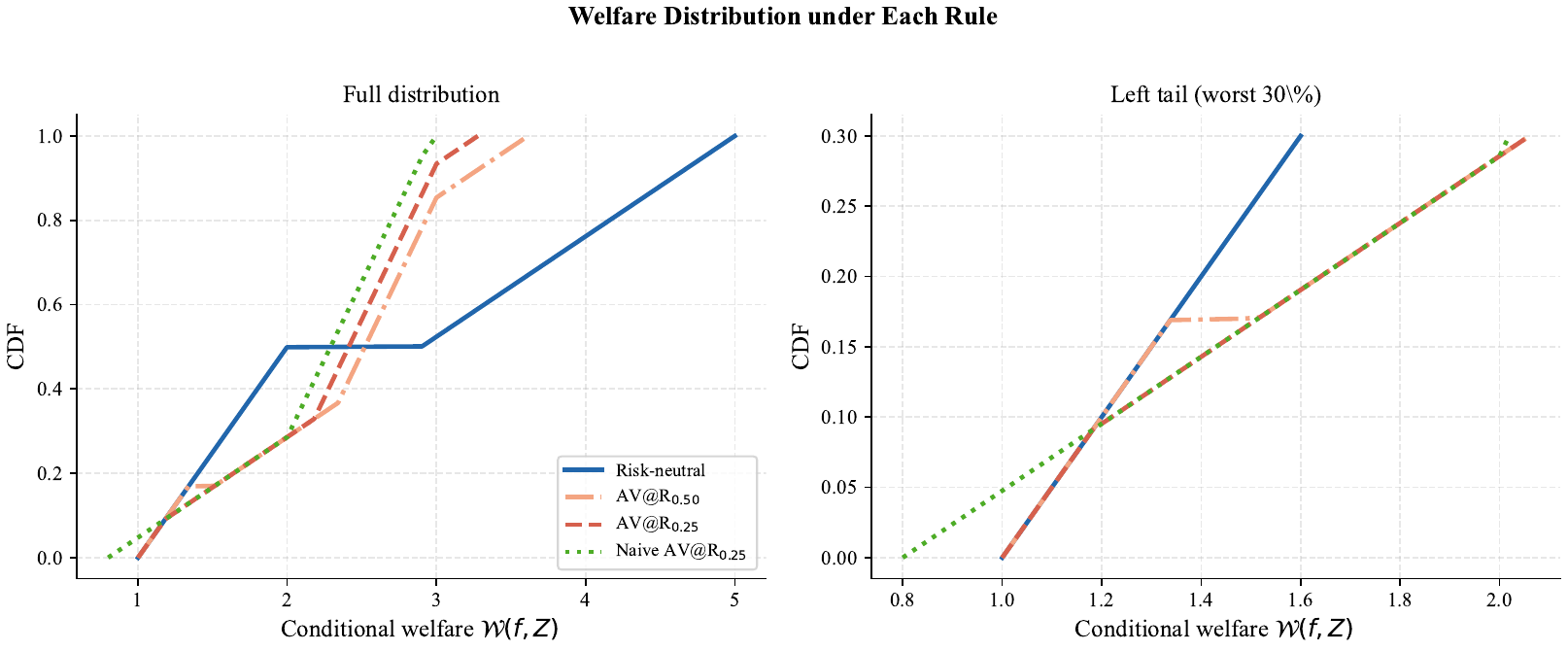}
		\caption{\textbf{Welfare distributions under each rule.} Left: full
			CDFs. Right: lower 30\%. The risk-averse rules first-order dominate
			the risk-neutral rule over the tail region above $x_{\dagger}$; the
			naive rule is dominated by all others at the extreme lower tail,
			where it assigns harmful treatment.}
		\label{fig:sim-cdfs}
	\end{figure}
	
	\begin{figure}[t]
		\centering
		\includegraphics[width=0.75\textwidth]{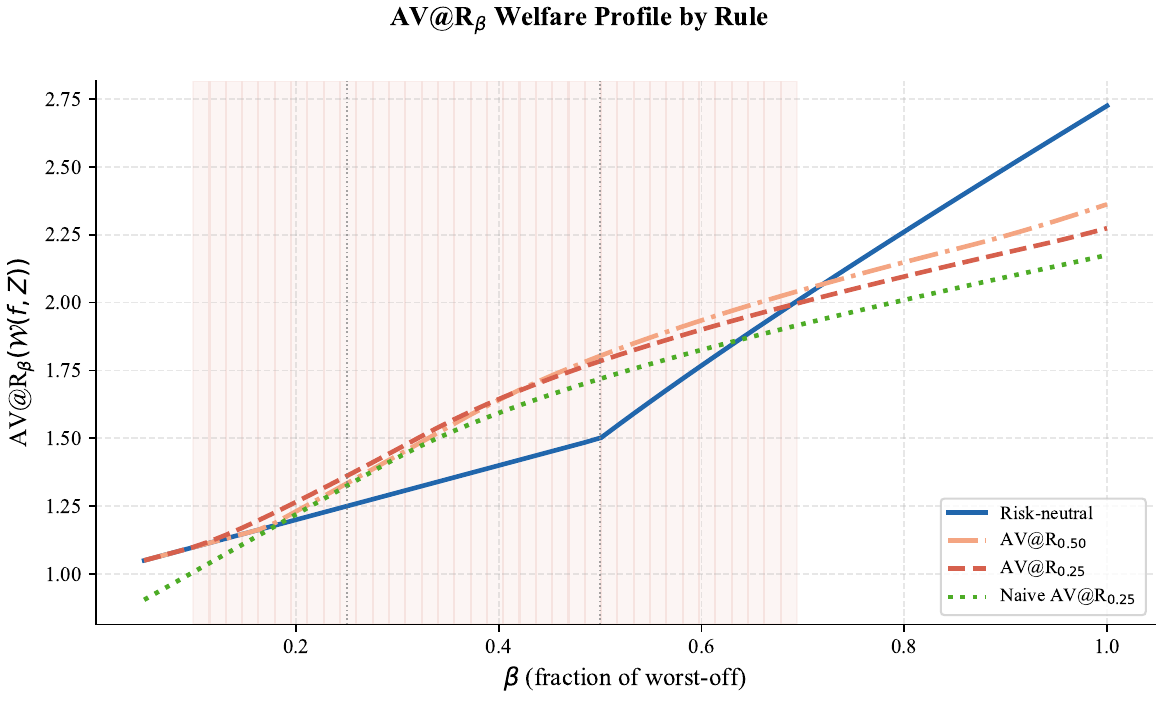}
		\caption{\textbf{$\mathrm{AV@R}_\beta$ welfare profiles.} Each curve
			fixes a rule and varies the evaluation level $\beta$. Shading marks
			the region where the $\mathrm{AV@R}_{0.25}$ rule dominates the
			risk-neutral rule. Profiles coincide below
			$\beta \approx 0.09$, where the negative-gain zone renders the tail
			unimprovable within the class.}
		\label{fig:sim-profile}
	\end{figure}
	
	\begin{figure}[t]
		\centering
		\includegraphics[width=0.75\textwidth]{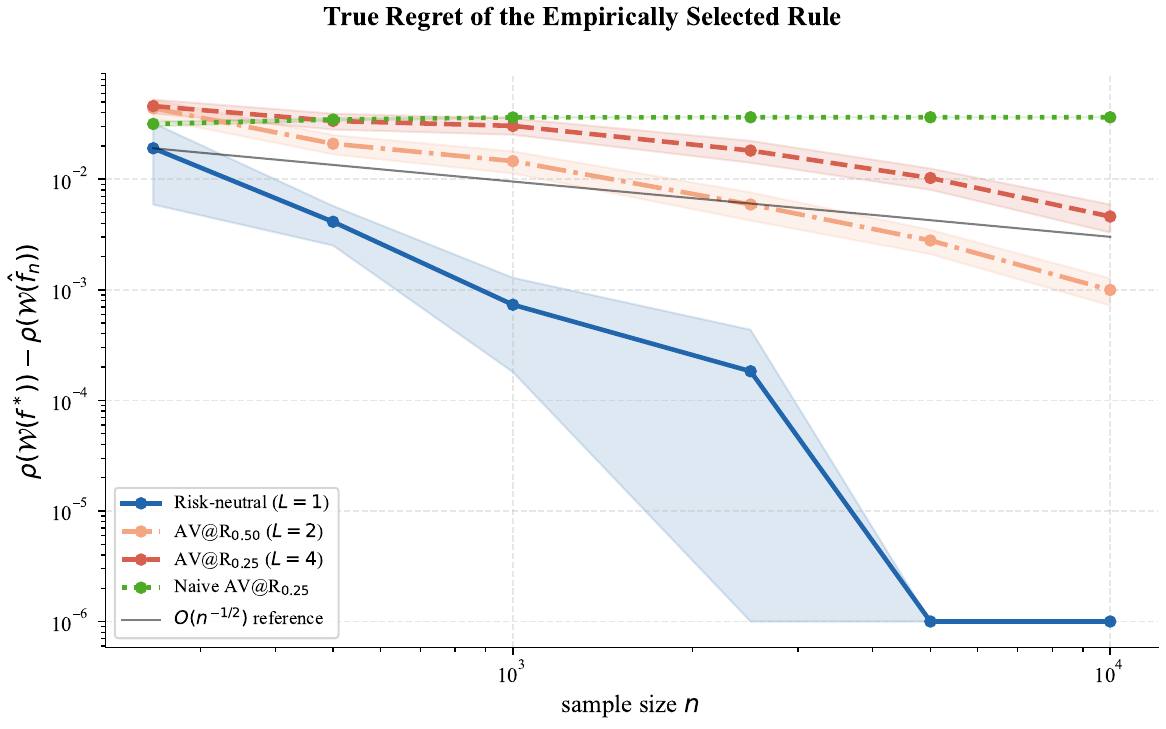}
		\caption{\textbf{True regret of the empirically selected rule.}
			Mean population regret over $120$ replications (log--log scale),
			with $95\%$ bands and an $n^{-1/2}$ reference. MTE-based criteria
			converge with constants ordered by the Lipschitz constant
			($L = 1, 2, 4$); the naive rule's regret converges to its
			population bias of $0.036$ rather than to zero.}
		\label{fig:sim-regret}
	\end{figure}

	\bibliographystyle{plainnat}
	\bibliography{Ref_MTE}
	
\end{document}